\newcommand{\R}{{\mathord{\mathbb R}}}
\newcommand{\Z}{{\mathord{\mathbb Z}}}
\newcommand{\N}{{\mathord{\mathbb N}}}
\newcommand{\C}{{\mathord{\mathbb C}}}
\newcommand{\KK}{\mathcal{K}}
\newcommand{\HH}{\mathcal{H}}
\newcommand{\FF}{\mathcal{F}}
\newcommand{\UU}{\mathcal{U}}
\newcommand{\TT}{\mathcal{T}}
\newcommand{\PP}{\mathcal{P}}
\newcommand{\hh}{\mathfrak{h}}
\newcommand{\vv}{\mathfrak{v}}
\newcommand{\gggg}{\mathfrak{g}}
\newcommand{\rr}{\mathfrak{r}}
\def\one{{\sf 1}\mkern-5.0mu{\rm I}}
\newcommand{\ben}{\begin{displaymath}}
\newcommand{\een}{\end{displaymath}}
\newcommand{\beqn}{\begin{equation}}
\newcommand{\eeqn}{\end{equation}}
\newcommand{\beqna}{\begin{eqnarray*}}
\newcommand{\eeqna}{\end{eqnarray*}}
\newcommand{\inn}[1]{\langle {#1} \rangle }
\newcommand{\DEL}[1]{}
\newtheorem{lemma}{Lemma}
\newtheorem{theorem}[lemma]{Theorem}
\newtheorem{remark}[lemma]{Remark}
\newtheorem{proposition}[lemma]{Proposition}
\newtheorem{corollary}[lemma]{Corollary}
\newtheorem{definition}[lemma]{Definition}
\newtheorem{hypothesis}{Hypothesis}
\numberwithin{equation}{section}
\numberwithin{lemma}{section}
\begin{document}
\title{Symmetries in non-relativistic quantum electrodynamics}
\author{\vspace{5pt} David Hasler$^1$\footnote{
E-mail: david.hasler@uni-jena.de} and Markus Lange$^2$\footnote{E-mail: markus.lange@dlr.de}
\\
\vspace{-4pt} \small{$1.$ Department of Mathematics,
Friedrich Schiller  University  Jena} \\ \small{Jena, Germany }\\
\vspace{-4pt}
\small{$2.$ German Aerospace Center (DLR), Institute for AI-Safety and Security} \\
\small{Sankt Augustin \& Ulm, Germany}\\
}
\date{\today}
\maketitle

\begin{abstract}
We define symmetries in non-relativistic quantum electrodynamics, which
have the physical interpretation of rotation, parity and time reversal symmetry.
We collect transformation properties related to these symmetries in  Fock space  representation as well as  in the
 Schr\"odinger representation.
As an application, we generalize and improve  theorems about Kramer's degeneracy in  non-relativistic quantum electrodynamics.
\end{abstract}

\section{Introduction}

Symmetries are often used to  analyze various properties of physical systems.
In particular in quantum mechanics  symmetries are  used to determine spectral
properties of  the Hamiltonian.
In this paper we  study symmetries of non-relativistic quantum electrodynamics (qed), which
have the physical interpretation of rotation, parity and time reversal symmetry.
We give explicit formulas  for these symmetries both in Fock space representation as well as
in the so called Schr\"odinger representation and apply these symmetries to prove multiplicities
of eigenvalues.

The transformation properties described  in the present paper are of general interest in non-relativistic
qed.  In particular,  in the Fock representation  these
 symmetries  are helpful   for   operator theoretic renormalization analysis of
non-relativistic qed.  On the one hand,  symmetries can be used
to control marginal terms  \cite{Sig09,HasHer11-1,HasHer11-2,HaslerHerbst.2012-4}. On the other hand symmetries allow
  the treatment of  degenerate
eigenvalues in the frame work of  renormalization, provided  the symmetries act irreducibly
on the eigenspace \cite{HasLan22}.  In fact, the latter   is the main interest, which we had  in mind, for collecting
the transformation properties  of the aforementioned symmetries.

In physics literature  continuous symmetries  are often described by means of their  infinitesimal generator.
That is, as a representation of the Lie-algebra.
For non-relativistic qed the  generators of the Lie-algebra of $SU(2)$  are readily available in textbooks about non-relativistic
aspects of quantum electrodynamics
\cite{CohenGrynbergDupont.1992,Spo04}. In this paper we express the $SU(2)$-symmetry
directly  as a  representation of  the  Lie-group.

As already mentioned, symmetries are  helpful  in the spectral analysis of Hamiltonian operators of quantum mechanics.
For example  the classical Kramers degeneracy theorem states, that the eigenvalues of a  time-reversal  symmetric  Hamiltonian describing an odd number of spin 1/2-particles have even multiplicity.
Using  a  theorem of this type  it was shown  in  \cite{LosMiySpo09,LosMiySpo10}  that  Hamiltonians of non-relativistic qed, which describe
 odd number of spin 1/2-particles have a doubly degenerate ground state, provided the external potential is symmetric
with respect to parity.
In this paper we improve that result and show  that   parity symmetry is not necessary.
This is of physical relevance,  since potentials describing  molecules with static nuclei,  are not necessarily
symmetric with respect to parity.
Furthermore, we include external magnetic fields in the  mathematical model.
Finally, we consider translation invariant systems and generalize degeneracy results for  a single
spin 1/2-particle   \cite{HiroshimaSpohn.2001,Hiroshima.2007} to atoms and molecules.

Let us give a short outline of the paper. In the next  section we
review the notion of a symmetry in quantum mechanics and state an abstract version
of Kramers degeneracy theorem. In Section \ref{sec:model}
 we introduce non-relativistic qed. In Section \ref{sec:sym}
we  define rotation, parity, and  time-reversal  symmetry. Moreover we collect various transformation
properties.
In Section \ref{sec:ham}  we study symmetry
properties of Hamiltonians of non-relativistic qed. In particular we  show the aforementioned degeneracy  theorems.
 In Section  \ref{sec:trainv} we study symmetry
properties of  fibers of translationally invariant Hamiltonians of non-relativistic qed. In Section  \ref{sec:schr} we
define   rotation, parity, and time-reversal symmetry  in the so called Schr\"odinger representation.
We  show that the  definitions in Schr\"odinger representation
 agree with the definitions in Fock space representation. To show this, we use    the  canonical
unitary transformation mapping the Fock space representation to the Schr\"odinger representation.

\section{Symmetries in Quantum Mechanical Systems}

In this section we collect some well-known definitions and properties.

\begin{definition} Let $V$ be a complex vector space. A mapping $A : V \to V$ is called
{\bf anti-linear} (or {\bf conjugate linear})  if
\begin{itemize}
\item[(i)] $A (x + y ) = A x + A y$ for all $x,y \in V$.
\item[(ii)] $ A (\alpha x ) = \overline{\alpha} A x $, for all $x \in V$ and $\alpha \in \C$.
\end{itemize}
If  $\HH$ is a complex Hilbert space and  $T : \HH \to \HH$ anti-linear, then
 the {\bf adjoint}  $T^* : \HH \to \HH$ is defined by
$$
\inn{ T^* x  , y  } = \inn{   T y , x  }   \quad , \quad \forall  x , y \in \HH .
$$
If  $\HH$ is  a complex Hilbert space and
 $S : \HH \to \HH$ anti-linear, then $S$ is called  {\bf anti-unitary}  if it is surjective and satisfies
$$
\inn{ S x , S y }  =  \inn{y , x }
 \quad , \quad \forall  x , y \in \HH.
$$
\end{definition}

The assertions of the following Lemma are straightforward to verify.

\begin{lemma} \label{lem:antiline}  The following holds.
\begin{itemize}
\item[(a)] Let $C_i$ be anti-linear (anti-unitary) transformations on complex vector spaces $V_i$ (Hilbert spaces), $i=1,2$. Then $C_1 \otimes C_2 : V_1 \otimes V_2 \to V_1 \otimes V_2$ is
also anti-linear (anti-unitary).
\item[(b)] If $T : \HH \to \HH$ is an  anti-linear mapping on a Hilbert space $\HH$, then also $T^*$
is anti-linear.
\item[(c)]
If $S$ is anti-unitary, then $S$ is bijective and  $S^* S = 1$ and $SS^* = 1$.
\end{itemize}
\end{lemma}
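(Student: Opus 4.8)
The plan is to check the three assertions directly from the definitions; the only point needing genuine care is a well-definedness issue in part~(a).

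\emph{Part (a).} On the algebraic tensor product one wants to set $(C_1\otimes C_2)(x_1\otimes x_2):=C_1x_1\otimes C_2x_2$ and extend. The subtlety is that the universal property of $\otimes$ produces \emph{linear} maps from \emph{bilinear} ones, whereas $(x_1,x_2)\mapsto C_1x_1\otimes C_2x_2$ is conjugate-linear in each argument. I would circumvent this by regarding it as a map into the complex-conjugate vector space $\overline{V_1\otimes V_2}$ (same additive group, scalar action $\alpha\cdot w:=\overline{\alpha}w$); with respect to that structure it \emph{is} bilinear, hence factors through a linear map $V_1\otimes V_2\to\overline{V_1\otimes V_2}$, i.e.\ an anti-linear endomorphism $C_1\otimes C_2$ of $V_1\otimes V_2$ with the stated action on elementary tensors. (Alternatively: fix conjugations $\Theta_i$ on $V_i$, write $C_i=U_i\Theta_i$ with $U_i$ linear, and reduce to the ordinary tensor product of linear maps together with the single fact that $\Theta_1\otimes\Theta_2$ is anti-linear.) If the $C_i$ are anti-unitary, then on elementary tensors
\[
\inn{(C_1\otimes C_2)(x_1\otimes x_2),\,(C_1\otimes C_2)(y_1\otimes y_2)}=\inn{C_1x_1,C_1y_1}\,\inn{C_2x_2,C_2y_2},
\]
which by anti-unitarity of the $C_i$ equals $\inn{y_1,x_1}\inn{y_2,x_2}=\inn{y_1\otimes y_2,\,x_1\otimes x_2}$; so, by anti-linearity of $C_1\otimes C_2$ and sesquilinearity of the inner product, $C_1\otimes C_2$ is anti-isometric on the (dense) algebraic tensor product and therefore extends uniquely to an anti-isometry of the completed tensor product. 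An anti-isometry has closed range, and here the range contains every elementary tensor, hence their dense span; so $C_1\otimes C_2$ is onto, i.e.\ anti-unitary.

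\emph{Part (b).} Since $|\inn{Ty,x}|\le\|T\|\,\|y\|\,\|x\|$ when $T$ is bounded, the functional $y\mapsto\inn{Ty,x}$ is bounded and $T^*x$ is well defined (in our applications $T$ is always bounded; in general the same computation gives anti-linearity of $T^*$ on its domain of definition). Additivity is immediate: $\inn{T^*(x+x'),y}=\inn{Ty,x+x'}=\inn{Ty,x}+\inn{Ty,x'}=\inn{T^*x+T^*x',y}$ for all $y$. For scalars, using that $\inn{\cdot,\cdot}$ is linear in one slot and conjugate-linear in the other, $\inn{T^*(\alpha x),y}=\inn{Ty,\alpha x}=\alpha\,\inn{Ty,x}=\inn{\overline{\alpha}\,T^*x,y}$ for all $y$, whence $T^*(\alpha x)=\overline{\alpha}\,T^*x$. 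Thus $T^*$ is anti-linear.

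\emph{Part (c).} Taking $y=x$ in the defining identity gives $\|Sx\|^2=\inn{Sx,Sx}=\inn{x,x}=\|x\|^2$, so $S$ is an anti-isometry, in particular bounded and injective; together with surjectivity (part of the definition of anti-unitary), $S$ is bijective. Next, for all $x,y$ one has $\inn{S^*Sx,y}=\inn{Sy,Sx}=\inn{x,y}$, the first equality by the definition of $S^*$ and the second by anti-unitarity; hence $S^*Sx=x$, i.e.\ $S^*S=1$. Finally, a left inverse of a bijection equals its two-sided inverse: $S^*=S^*(SS^{-1})=(S^*S)S^{-1}=S^{-1}$, so $SS^*=SS^{-1}=1$. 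Every step here is pure bookkeeping; the one place I expect to need actual care is the well-definedness of $C_1\otimes C_2$ in~(a), which is why I singled it out.
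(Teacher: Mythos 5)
Your proof is correct in all three parts; the paper itself supplies no argument, dismissing the lemma with a single remark that the assertions are ``straightforward to verify,'' so your write-up fills in details the authors omitted rather than paralleling or diverging from a given proof. Your handling of the well-definedness of $C_1\otimes C_2$ via the conjugate space (or, equivalently, via factoring out a conjugation $\Theta_i$), the anti-isometry-plus-dense-range argument for surjectivity in~(a), the Riesz-representation justification that $T^*x$ exists in~(b), and the left-inverse-of-a-bijection step in~(c) are all sound and consistent with the paper's conventions (inner product linear in the second slot, adjoint of an anti-linear map defined by $\inn{T^*x,y}=\inn{Ty,x}$).
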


\begin{definition}
Let $S$ be a  unitary or  anti-unitary operator.
Let $H$ be a densely defined  operator in $\HH$.  We call  $S$   a {\bf symmetry of} $H$,   if
$$
S H =  H S
$$
when  $S$ is unitary,
and
$$
S H =  H^* S
$$
when $S$ is anti-unitary.
\end{definition}

 The following theorem, whose  formulation is from \cite{LosMiySpo09}, can be viewed as an  abstract  version of Kramer's degeneracy theorem, \cite{Kra30,Wigner1932}.

\begin{theorem}[Abstract Kramers Degeneracy] \label{le:kram} Let $\theta$ be a an anti-unitary symmetry of a self-adjoint operator $H$ and $\theta^2 = - 1$. Then each eigenvalue of $H$ is at least
doubly degenerate. Any eigenvalue of $H$ with finite multiplicity has  even multiplicity. 
\end{theorem}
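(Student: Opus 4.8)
The plan is to fix an eigenvalue $\lambda$ of $H$ with eigenvector $\psi \neq 0$, and to exploit the intertwining relation $\theta H = H^*\theta = H\theta$ (using self-adjointness) to show that $\theta\psi$ is again an eigenvector for the same $\lambda$, and that $\psi$ and $\theta\psi$ are orthogonal. First I would check the eigenvalue is preserved: from $\theta H\psi = \lambda \theta\psi$ (noting $\theta$ is anti-linear so pulls out $\overline{\lambda}$, but $\lambda\in\R$ since $H$ is self-adjoint) and $\theta H\psi = H\theta\psi$, we get $H(\theta\psi) = \lambda(\theta\psi)$. Since $\theta$ is anti-unitary it is injective by Lemma \ref{lem:antiline}(c), so $\theta\psi \neq 0$. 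The crucial computation is the orthogonality: using the definition of anti-unitarity and $\theta^2 = -1$,
\[
\inn{\psi, \theta\psi} = \inn{\theta(\theta\psi), \theta\psi} = \inn{\theta\psi, \theta\psi}^{\,-}\!\!\!\!\!\!
\]
— more precisely, $\inn{\theta x, \theta y} = \inn{y,x}$ gives $\inn{\theta(\theta\psi), \theta\psi} = \inn{\psi, \theta\psi}$ while also $\theta(\theta\psi) = \theta^2\psi = -\psi$, hence $\inn{-\psi,\theta\psi} = \inn{\psi,\theta\psi}$, forcing $\inn{\psi,\theta\psi} = 0$. This already yields that every eigenspace has dimension at least $2$, proving the first claim.

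For the even-multiplicity claim, suppose $E_\lambda := \ker(H-\lambda)$ is finite-dimensional. Since $H(\theta\psi)=\lambda\theta\psi$ whenever $H\psi=\lambda\psi$, the space $E_\lambda$ is invariant under $\theta$; thus $\theta$ restricts to an anti-unitary map on the finite-dimensional Hilbert space $E_\lambda$ with $\theta^2 = -1$. The plan is then to argue by induction on $\dim E_\lambda$: pick a unit vector $\psi\in E_\lambda$, observe $\{\psi,\theta\psi\}$ is orthonormal by the computation above, and show that the orthogonal complement $W := \{\psi,\theta\psi\}^\perp$ within $E_\lambda$ is again $\theta$-invariant. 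The latter follows because for $\phi\perp\psi$ and $\phi\perp\theta\psi$ we have $\inn{\theta\phi,\psi} = \inn{\theta\phi, \theta(\theta\psi)}^{?}$ — concretely $\inn{\theta\phi, \theta\psi} = \inn{\psi,\phi} = 0$ and $\inn{\theta\phi,\psi} = \inn{\theta\phi, -\theta(\theta\psi)} = -\inn{\theta^2\psi, \phi} = \inn{\psi,\phi}=0$ wait — cleaner: $\inn{\theta\phi, \psi} = \overline{\inn{\theta\psi,\theta\phi}}$ is not quite it either; I would instead use that $S\mapsto S^*$ from Lemma \ref{lem:antiline} and $\theta^*\theta = 1$ give $\inn{\theta\phi, \psi} = \inn{\theta^*\psi, \phi}$-type identities, together with $\theta^* = \theta^{-1} = -\theta$, to conclude $\theta\phi \perp \psi$ and $\theta\phi\perp\theta\psi$. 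Then $\dim W = \dim E_\lambda - 2$, and by the induction hypothesis $\dim W$ is even, so $\dim E_\lambda$ is even.

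The main obstacle is purely bookkeeping with the anti-linear adjoint: one must be careful that $\inn{\theta x, \theta y} = \inn{y, x}$ (note the swap) rather than $\inn{x,y}$, and that the convention for $\theta^*$ in the Definition is $\inn{\theta^* x, y} = \inn{\theta y, x}$, so that $\theta^*\theta = 1$ reads correctly as in Lemma \ref{lem:antiline}(c). Once these conventions are pinned down, every step is a short direct verification; there is no analytic difficulty since $\lambda$ being real handles the anti-linearity clash with the eigenvalue equation, and the finite-dimensionality assumption makes the induction terminate. An alternative to the induction that I might present instead: the pairing $(\phi_1,\phi_2)\mapsto \inn{\phi_1,\theta\phi_2}$ defines a non-degenerate anti-symmetric bilinear form on $E_\lambda$ (anti-symmetry from the orthogonality computation applied to general pairs, non-degeneracy from injectivity of $\theta$), i.e.\ a symplectic form, and a finite-dimensional vector space carrying a symplectic form has even dimension — this packages the same content more conceptually.
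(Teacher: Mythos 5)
Your proof is correct and follows essentially the same route as the paper: the same orthogonality computation forced by $\theta^2 = -1$ together with anti-unitarity, followed by induction on the dimension of the (necessarily $\theta$-invariant) eigenspace $\ker(H-\lambda)$. The only difference is cosmetic: where you derive $\theta\phi\perp\psi$ and $\theta\phi\perp\theta\psi$ by juggling $\theta^* = \theta^{-1} = -\theta$ (after a couple of false starts), the paper's Lemma~\ref{lem:timeinvirred}(b) simply notes that $\theta$ preserves ${\rm lin}_\C\{\psi,\theta\psi\}$, and that an anti-unitary operator preserving a closed subspace automatically preserves its orthogonal complement --- both lead to the same place. Your closing aside about the symplectic pairing (which should be taken as $(\phi_1,\phi_2)\mapsto\inn{\theta\phi_1,\phi_2}$, the version that is genuinely $\C$-bilinear given the paper's conventions) is a tidy conceptual alternative the paper does not pursue.
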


The proof follows from the following lemma.

 \begin{lemma}\label{lem:timeinvirred} Let $J$ be an anti-unitary operator on a complex Hilbert space $V$ with $J^2  = -1$.
Then the following holds.
\begin{itemize}
\item[(a)]   For any nonzero $v \in V$,  also  $Jv$ is nonzero and   $v \perp J v$.
\item[(b)]   The  Hilbert space  $V$ cannot have finite odd dimension.
\end{itemize}
\end{lemma}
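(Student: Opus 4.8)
The plan is to prove part (a) directly from the defining identities of an anti-unitary operator and then to bootstrap it into part (b) by decomposing $V$ orthogonally into two-dimensional $J$-invariant subspaces. For (a), note first that $J$ is injective, since $Jv = 0$ implies $-v = J^2 v = J(Jv) = 0$; hence $Jv \neq 0$ whenever $v \neq 0$. For the orthogonality I apply the anti-unitarity relation $\inn{Jx,Jy} = \inn{y,x}$ with $x = v$ and $y = Jv$: the left side equals $\inn{Jv, J^2 v} = \inn{Jv, -v} = -\inn{Jv,v}$, while the right side is $\inn{Jv,v}$, so $\inn{Jv,v} = 0$, and conjugating gives $\inn{v,Jv} = 0$.

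For (b), suppose towards a contradiction that $\dim V = n < \infty$. I claim that $V$ is an orthogonal direct sum of two-dimensional $J$-invariant subspaces, which forces $n$ to be even. Pick a unit vector $v$; from $\inn{Jv,Jv} = \inn{v,v}$ we get $\|Jv\| = 1$, and by (a) the set $\{v, Jv\}$ is orthonormal and spans a subspace $P$ that is $J$-invariant because $J(Jv) = -v \in P$. The key point is that $W := P^{\perp}$ is $J$-invariant as well: for $w \perp v$ and $w \perp Jv$ one computes, using the anti-unitarity relation together with $v = -J(Jv)$, that $\inn{Jw, Jv} = \inn{v,w} = 0$ and $\inn{Jw, v} = -\inn{Jw, J(Jv)} = -\inn{Jv, w} = 0$, so $Jw \in W$. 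Then $J|_W$ is anti-unitary on $W$ with $(J|_W)^2 = -1$ and $\dim W = n - 2$, and induction on the dimension reduces everything to the base case: dimension $0$ is harmless, while dimension $1$ is excluded by (a), since there a nonzero $v$ would satisfy $Jv = \lambda v$ with $\lambda \neq 0$, contradicting $v \perp Jv$. Hence $n$ cannot be odd.

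The one step requiring a little care is the verification that $P^{\perp}$ is $J$-invariant; the rest is routine juggling of the identities $\inn{Jx,Jy} = \inn{y,x}$ and $J^2 = -1$. As an alternative for (b), one can fix an orthonormal basis of the finite-dimensional space $V$, write $J = UC$ with $C$ the coordinatewise complex conjugation and $U$ unitary, note that $J^2 = -1$ becomes $U\overline{U} = -1$, and take determinants to obtain $|\det U|^2 = (-1)^n$, which again forces $n$ to be even; I would nevertheless prefer the decomposition argument, since it is closer in spirit to the irreducibility considerations used later in the paper.
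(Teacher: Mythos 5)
Your proof is correct and matches the paper's argument step by step: the orthogonality computation for (a) is the same, and the induction in (b) on the $J$-invariant orthogonal complement of ${\rm lin}_\C\{v,Jv\}$ is precisely the paper's argument, with your explicit verification that $P^{\perp}$ is $J$-invariant simply filling in a detail the paper leaves to the reader. The determinant alternative, writing $J = UC$ and deducing $|\det U|^2 = (-1)^n$ from $U\overline{U} = -1$, is a genuinely different and pleasantly concrete route, but your primary argument is essentially the one in the paper.
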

\begin{proof}
(a) Since $J(Jv) = J^2 v = - v$, the vector $Jv$ is nonzero.
  Since $J$ is anti-unitary
$$
\inn{ v , J v } = \inn{ JJ v , J v } = - \inn{ v , J v} .
$$
So $\inn{ v , Jv } =0$. \\
(b)  We show by induction   that  $V$ cannot have dimension $2n-1$ for $n \in \N$.  Clearly, the induction hypothesis  holds true for $n=1$ by (a). Suppose the induction hypothesis holds for $n$, and
suppose $V$ has dimension $2n+1$.  Pick a nonzero $v \in V$. Then  $J v  \in V$ and $J v \perp v$ by (a).
Thus    $W := \{v , J v \}^\perp$  is a  complex  vector  space,   which has  dimension $2n-1$. Since $J^2 = -1$, it follows that $J$ leaves the complex linear span ${\rm lin}_\C \{ v , J v \}$ invariant.
Since  $J$ is anti-unitary, it  leaves also $W$ invariant. But the complex vector space $W$  together with  $J|_W$  contradict the induction hypothesis.
\end{proof}

\begin{proof}[Proof of Theorem \ref{le:kram}] Let $E$ be an eigenvalue of $H$. Since $H$ is self-adjoint $E$ is real.  So  $\theta$   leaves the space $V = {\rm ker}( H-E)$ invariant, since
$ ( H - E ) \theta \psi = \theta ( H - E ) \psi .  $
Thus the first and second  statement follow from  (b) of Lemma \ref{lem:timeinvirred}  with   $J=\theta$.
\end{proof}

\section{Non-relativistic qed} \label{sec:model}

For a complex Hilbert space $\HH$ we  denote   
 the $n$-fold tensor product  by
$$
\HH^{\otimes n} := \bigotimes_{j=1}^n \HH  
$$
and we set $\HH^{\otimes 0} := \C$.
Let  $\mathfrak{S}_{\{1,...,n\}}$ be the permutation group of the set $\{1,...,n\}$. For each $\sigma \in \mathfrak{S}_{\{1,...,n\}}$ we define an operator  $\mathfrak{U}(\sigma)$
on $\HH^{\otimes n}$ by
\begin{align} \label{defofsymmop}
\mathfrak{U}(\sigma) (\varphi_1 \otimes \varphi_2 \otimes \cdots \otimes  \varphi_n) = \varphi_{\sigma(1)} \otimes \varphi_{\sigma(2)} \otimes \cdots \otimes \varphi_{\sigma(n)}
\end{align}
for any $\varphi_j \in \HH$, $j=1,...,n$,  and extending it linearly. This yields a bounded operator  (of norm one) on  $\HH^{\otimes n}$ so we
can define $S_n = \frac{1}{n!}\sum_{\sigma \in \mathfrak{S}_{\{1,...,n\}}} \mathfrak{U}( \sigma )$.
We define the symmetric  $n$-fold tensor product of $\HH$ by
\begin{align*}
\HH^{\otimes_s n } := {S}_n \left( \HH^{\otimes n } \right) .
\end{align*}
Let  $\mathcal{D}_s$ denote the representation space of $SU(2)$ with dimension $2s+1$.  In this paper we  shall only  consider
the case  $s=0$, describing spinless particles,  and the case   $s  =  \frac{1}{2}$, describing particles with spin $1/2$.

The model consists of
$N$ particles   with spins $s_j \in \{0,1/2\} $,
masses $m_j > 0$, charges $q_j \in \R$,   values of the spin magnetic moments   $\mu_j \in \R$,   $j=1,...,N$. By $x_j \in \R^3$ we shall
denote the position of the $j$-the particle.
The Hilbert space describing the non-relativistic quantum mechanical matter  is
$$
\HH_{{\rm mat}} =  \bigotimes_{j=1}^N  L^2(\R^{3};\mathcal{D}_{s_j}) .
$$
We note that the description  of physical systems  usually requires the   restriction to  a subspace  determined by the  particle statistics of  identical  particles.  This will be considered below.

If  $s=0$,  let    $\hat{s}_l = 0$ for $l=1,2,3$, and   if  $s=1/2$,  let   $\hat{s}_l = \frac{1}{2} \sigma_l$ for $l=1,2,3$, where $\sigma_l$
denotes the $l$-th  Pauli-matrix.
\begin{remark} \label{paulitau}  {\rm  Note that  $\hat{s}_1$, $\hat{s}_2$ and $\hat{s}_3$ are representations
of the  generators of $su(2)$ in the representation $\mathcal{D}_s = \C^{2s+1}$, $s \in \{ 0,1/2\}$. They are  linear maps in $\mathcal{D}_s$
 satisfying
\begin{align}&
[  \hat{s}_j ,\hat{s}_k ]  = \sum_{l=1}^3  i \epsilon_{j,k,l} \hat{s}_l ,  \quad \hat{s}_l^* = \hat{s}_l , \ l=1,2,3,  \nonumber  \\
& \quad \overline{\hat{s}}_1 = \hat{s}_1 , \quad \overline{\hat{s}}_2 = - \hat{s}_2 , \quad \overline{\hat{s}}_3 = \hat{s}_3 , \label{compspin}
\end{align}
where $\epsilon_{j,k,l}$ denotes the totally antisymmetric tensor in three dimensions.
}
\end{remark}
For $j=1,...,N$ and $l=1,2,3$ we define
$$
(\widehat{S}_j)_l   =  \left( \bigotimes_{k=1}^{j-1} \one_{\mathcal{D}_{s_k}}  \right) \otimes \hat{s}_l \otimes  \left( \bigotimes_{k=j+1}^{N} \one_{\mathcal{D}_{s_k}}   \right) .
$$
For a Hilbert space $\hh$  define the symmetric
 Fock space over $\hh$  by
$$
\FF_s(\mathfrak{h}) := \bigoplus_{n=0}^\infty  \hh^{\otimes_s n } . 
$$
Thus  we can  identify $\psi \in \FF_s(\hh)$ with a sequence
of functions $\psi = (\psi^{(0)}, \psi^{(1)},\psi^{(2)},....)$ such that $\psi^{(n)} \in \hh^{\otimes_s n }$. 
We introduce  the set  $F_0(\hh) := \{ \psi \in \FF_s(\hh) : \exists N , \forall n \geq N ,  \psi^{(n)} = 0 \}$
of finite particle vectors.
For $f \in \mathfrak{h}$ let $a^*(f)$  denote the usual creation  operator, which is a
densely defined closed linear operator which satisfies  for $\eta \in \hh^{\otimes_s n }$ 
\begin{align} \label{defastar}
a^*(f) \eta =  \sqrt{n+1} S_{n+1}   ( f \otimes \eta  ) .
\end{align}
Let $a(f)$ denote the adjoint of the creation operator.
If  $T$ be a symmetry  in $\mathfrak{h}$,  then $\Gamma(T)$ denotes  the unique  operator on
 $\FF(\mathfrak{h})$ such that on  $\mathfrak{h}^{\otimes_s n }$
$$
\Gamma(T) |_{  \mathfrak{h}^{\otimes_s n }   } = \bigotimes_{j=1}^n T . 
$$
It is straight forward to see that also $\Gamma(T)$ is a symmetry.
Let $A$ be any self-adjoint operator on $\HH$ with domain of essential self-adjointness $D$.
Let $D_A = \{ \psi \in F_0(\hh)  :   \psi^{(n)} \in \otimes_{k=1}^n D \text{ for each } n \}$ and
define $d \Gamma(A)$ on $D_A \cap \mathfrak{h}^{\otimes_s n } $ as
$$
A \otimes 1 \otimes \cdots \otimes 1  + 1 \otimes A \otimes \cdots \otimes 1  + \cdots + 1 \otimes \cdots \otimes 1 \otimes A .
$$
In \cite[Section VIII.10]{reesim1} it is shown that $d \Gamma(A)$ is essentially self-adjoint on $D_A$, and
we shall denote this self-adjoint extension again  by $d\Gamma(A)$.
It follows from the definitions that for a symmetry $T$ on $\hh$ and $f \in \hh$
\begin{align}
& \Gamma(T) a^\#(f) \Gamma(T)^* = a^\#(T f), \label{eq:GammaasharpGamma} \\
&  \Gamma(T ) d \Gamma(A)  \Gamma(T)^* = d \Gamma( T A T^*) \label{eq:GammdGammaGamma},
\end{align}
where $a^\#$ stands for $a$  or  $a^*$.
Let us now define operators acting on the composite Hilbert space
 $$
 \HH  \otimes \FF_s(\hh)   ,
 $$
where $\HH$  denotes a  Hilbert space, which is  used to describe the matter.
For  a bounded linear operator
$G  \in \mathcal{L}(\HH , \HH \otimes \hh )$   we define
 for $\varphi \in \HH$ and  $\eta \in  \mathfrak{h}^{\otimes_s n }$ 
\begin{align} \label{defofG}
a^*(G)  ( \varphi \otimes \eta )  = \sqrt{n+1}  ( 1 \otimes S_{n+1} ) (  ( G \varphi ) \otimes \eta )  .
\end{align}
This extends by linearity to a closable operator   in $\HH \otimes \FF_s(\hh)$, which we shall
again denote by $a^*(G)$. We define  $a(G) = [a^*(G)]^*$.

In non-relativistic qed one consider the Fock space over $\gggg := L^2(\R^3 \times \Z_2)$. In that case we can identify $\psi \in \FF_s(\gggg)$ with a sequence
of functions $\psi = (\psi^{(0)}, \psi^{(1)},\psi^{(2)},\ldots)$ such that $\psi^{(n)} \in L^2_s((\R^3 \times \Z_2)^n)$, where the subscript $s$ stands for wave functions which are
symmetric with respect to interchange of components of the $n$-fold Cartesian product.
Let $M_f$ denote  the operator of multiplication by the function $f$.
We define
$$
H_{\rm f} = d \Gamma(M_\omega)  ,
$$
where  the so-called dispersion relation $\omega : \R^3 \to [0,\infty)$ is defined such that $\omega(k) = \omega(k')$ whenever $|k'| = |k|$. 
Moreover define
$$
P_{\rm f} = d \Gamma( M_{\pi_j } ) ,
$$
where $\pi_j : \R^3 \to \R$ with $\pi_j(k) = k_j$.  Next we introduce   creation and annihilation  operators  in terms of  operator  valued distributions.
We define
$$
\mathcal{D}_\mathcal{S} := \{ \psi \in  F_0(\gggg) : \ \psi^{(n)} \in \mathcal{S}((  \R^{3} \times \Z_2   )^n )  \} .
$$
where $\mathcal{S}(( \R^{3} \times \Z_2 )^n )$ denotes the space of smooth  rapidly decaying functions.
For each $(k,\lambda) \in \R^3 \times \Z_2$ we define an operator $a(k,\lambda)$ on $\FF_s(\gggg)$ with domain $\mathcal{D}_\mathcal{S}$ by
$$
(a(k, \lambda) \psi)_n(k_1,\lambda_1,...,k_n,\lambda_n) = \sqrt{n+1} \psi_{n+1}(k,\lambda,k_1,\lambda_1,...,k_n,\lambda_n) .
$$
We define $a^*(k,\lambda)$ in the sense of quadratic forms on $\mathcal{D}_\mathcal{S} \times \mathcal{D}_\mathcal{S}$ by
$$
\inn{\psi_1, a^*(k,\lambda) \psi_2 } = \inn{ a(k,\lambda)\psi_1, \psi_2 } .
$$
Then it is straight forward to see  that
$$
a^*(f) = \sum_{\lambda=1,2} \int_{\R^3}  f(k,\lambda) a^*(k,\lambda) dk , \quad
a(f) = \sum_{\lambda=1,2} \int_{\R^3}  \overline{f(k,\lambda) } a(k,\lambda) dk ,
$$
where the equalities are understood in the sense of quadratic forms and the integrals are
understood as weak integrals.
Let us now relate the definition given in  \eqref{defofG} to  integrals of  operator valued distributions.
To this end we use  the natural embedding
\begin{align*}
I :  L^2(\R^3 \times \Z_2  ; \mathcal{L}(\HH_{{\rm mat}}))  & \to \mathcal{L}(\HH_{{\rm mat}} ;  L^2(\R^3 \times \Z_2 ; \HH_{\rm mat}  )    )  \cong \mathcal{L}(\HH_{{\rm mat}} ; \HH_{\rm mat} \otimes \gggg) \\
 g & \mapsto \left( \varphi \mapsto  [ (k,\lambda) \mapsto  g(k,\lambda)  \varphi ]  \right)  ,
\end{align*}
 which is a bounded injection, cf. \cite[Theorem II.10]{reesim1}.
Then for $g \in L^2(\R^3 \times \Z_2  ; \mathcal{L}(\HH_{{\rm mat}})) $
it is straight forward to show that
\begin{align} \label{creadiffdef}
a^*( I(g))  = \sum_{\lambda=1,2}  \int_{\R^3}  g(k) \otimes  a^*(k,\lambda) d k  , \quad a( I(g))  = \sum_{\lambda=1,2} \int_{\R^3}  g(k)^* \otimes  a(k,\lambda)   dk
\end{align}
  in the sense of quadratic forms on $\HH_{\rm mat}  \otimes \mathcal{D}_\mathcal{S} $, where the integral is a  weak integral. Henceforth
we shall drop the tensor sign in  \eqref{creadiffdef}   if it is clear on which factor the operator acts.
The definition of the vector potential involves
the so called polarization vectors.
For $\lambda=1,2$ we choose a  measurable function
\begin{equation}  \label{eq:epsilon}
\varepsilon(\cdot , \lambda) : {\rm  S}_2 \to \R^3
\end{equation}
on the 3-dimensional sphere ${\rm S}_2$ with the following properties.   For each $k \in {\rm S}_2$ the vectors $( \varepsilon(k,1), \varepsilon(k,2),k )$ form an orthonormal basis of $\R^3$.
We extend $\varepsilon(\cdot , \lambda)$ to $\R^3 \setminus \{ 0 \}$ by setting
$
\varepsilon(k,\lambda) := \varepsilon(k/|k|,\lambda)
$
for all nonzero $k$. We  assume that we are given a  measurable coupling function $\kappa : \R^3 \to \C$.
 We note that the Fourier transform of $\kappa$ is real, if and only if
\begin{align} \label{eq:assonkappa}
\overline{\kappa(k)} =  \kappa(-k)  .
\end{align}
We define the coupling functions for $l=1,2,3$ and $x  \in \R^3$
$$
g_{x,l}^{(\varepsilon)}(k,\lambda)  = \frac{ \varepsilon_l(k,\lambda) }{\sqrt{2 \omega(k) }}\overline{\kappa(k)} { e}^{ - i k \cdot x }  .
$$
We can now define the field operators. If $\omega^{-1/2} \kappa \in L^2(\R^3)$, we define the magnetic vector potential
\begin{align*}
A_l(x)  := &   a(g^{(\varepsilon)}_{x,l}) +  a^*(g^{(\varepsilon)}_{x,l}) \\
  = & \sum_{\lambda=1,2} \int_{\R^3}   \frac{\varepsilon_l(k,\lambda)  }{\sqrt{2 \omega(k) }} \left( \kappa(k) e^{  i k \cdot x }  a(k,\lambda) +  \overline{\kappa(k)} e^{ - i k \cdot  x }  a^*(k,\lambda)  \right)  dk  , \quad l=1,2,3 ,
\end{align*}
where in the second line we made use of  \eqref{creadiffdef}.
If $| \cdot| \omega^{-1/2} \kappa \in  L^2(\R^3)$, we define the quantized magnetic field
\begin{align*}
B_l(x)  := &  [\nabla \times A(x)]_l   \\
  = &\sum_{\lambda=1,2} \int_{\R^3}   \frac{ i [ k \times \varepsilon(k,\lambda) ]_l }{\sqrt{2 \omega(k) }} \left( \kappa(k) e^{  i k \cdot x }  a(k,\lambda) -   \overline{\kappa(k)} e^{ - i k \cdot x }  a^*(k,\lambda)  \right)  dk  ,  \quad l=1,2,3 .
\end{align*}
If $\omega^{1/2} \kappa \in L^2(\R^3)$, we define the quantized electric field
\begin{align*}
E^\perp_l(x)  := &   a(- i \omega g^{(\varepsilon)}_{x,j}) +  a^*(-i \omega g^{(\varepsilon)}_{x,j}) \\
  = & \sum_{\lambda=1,2} \int_{\R^3}    i  \varepsilon_l(k,\lambda) \sqrt{\frac{\omega(k)}{2}}    \left( \kappa(k) e^{  i k
\cdot  x }  a(k,\lambda) -   \overline{\kappa(k)} e^{ - i k \cdot x }  a^*(k,\lambda)  \right)  dk  , \quad l =1,2,3 .
\end{align*}
The Hamiltonian acting in the Hilbert space
\begin{align*}
 \HH_{\rm mat} \otimes \FF_s(\mathfrak{g} )
\end{align*}
is given by
\begin{align}  \label{defofham}
H  &  = \sum_{j=1}^N  \left\{ \left(  p_j \otimes 1  +  q_j ( A(\hat{x}_j)   + A_{\rm ext}(\hat{x}_j) ) \right)^2   + \mu_j  \widehat{S}_j  \cdot  ( B(\hat{x}_j) + B_{\rm ext}(\hat{x}_j)) \right\} \nonumber  \\
& \quad + 1 \otimes  H_{\rm f}  + V(\hat{x}_1,...,\hat{x}_N) \otimes 1   ,
\end{align}
where $\hat{x}_j$  denotes the operator of multiplication with $x_j$, the coordinates of the  $j$-th particle,  and   $p_j = - i \nabla_{x_j}$. 
We assume that  $V : \R^{ 3 N} \to \R$  is a function and that $B_{\rm ext} : \R^3 \to \R^3$ is a function.  Furthermore, we   defined
\begin{equation} \label{exprforAext}
A_{{\rm ext}}(x) :=  -  \int \frac{(x-y) \times B_{\rm ext}(y)}{4\pi |x-y|^3} dy ,
\end{equation}
cf. Remark \ref{rembextaest}. 

\begin{remark} \label{rembextaest}  {\rm
Provided $B_{\rm ext}$ is sufficiently regular and has sufficient decay, we can write 
\begin{equation}  \label{eq:defofA}
A_{{\rm ext}}(x) :=  -  \int \frac{(x-y) \times B_{\rm ext}(y)}{4\pi |x-y|^3} dy =  \nabla_x \times \int \frac{ B_{\rm ext}(y)}{4\pi |x-y|} dy
 =   \int \frac{\nabla_y \times  B_{\rm ext}(y)}{4\pi |x-y|} dy ,
\end{equation}
by calculating the derivative and using integration by parts, respectively.   In particular,  if $\nabla \cdot B_{\rm ext} = 0$, it follows that  $\nabla \times A_{\rm ext} = B_{\rm ext}$.

}
\end{remark}

Physically, $V$ is called the external  potential, $B_{\rm ext}$ the external magnetic field, $A_{\rm ext}$ the  external magnetic vector potential. 
We assume that  $B_{\rm ext}$ is  such that  $A_{\rm ext}$  in  \eqref{exprforAext}  is  well  defined for almost all $x \in \R^3$.
Moreover, we assume that   $\kappa$ and $\omega$ are  such that the  fields occurring
in the Hamiltonian exist. Furthermore,  we assume that $\kappa$, $\omega$,  $V$, and $B_{\rm ext}$ are such  that the  Hamiltonian is essentially self-adjoint on  $\left( \bigotimes_{j=1}^N C_c^\infty(\R^3; \mathcal{D}_{s_j} )  \right) \otimes  F_0(\gggg)$, for details we refer the reader to  \cite[Theorem X.35, Theorem X.34]{ReeSim2} and \cite{HasHer08-2}.

\section{Symmetries} \label{sec:sym}

In this subsection we  define symmetries associated to rotations, space inversion and
time inversion. To define these symmetries on Fock space it is convenient to identity
$\hh$ with the space of so called divergence free  vector fields. In this section
we shall denote by $F$   the Fourier  transform and
by $F^{-1}$ its inverse, i.e. for $f \in L^1(\R^3)$
\begin{align*}
(F f)(k) & =  (2\pi)^{-3/2}  \int_{\R^3}  e^{ - i k \cdot x } f(x) dx   , \\
(F^{-1}  f)(x) & =   (2\pi)^{-3/2} \int_{\R^3}  e^{  i k \cdot x } f(k) dk   ,
\end{align*}
where both transformations are canonically  extended to $L^2(\R^3)$  by Plancherel's theorem.

\subsection{Space of divergent free vector fields}

We introduce the space of divergence free vector fields
$$
\mathfrak{v} := \{ v \in L^2(\R^3 ; \C^3) :   \sum_{j=1}^3 k_j  \widehat{v}_j(k) = 0 \ , \ \text{a.e.} \ k  \in \R^3  \} .
$$
Given a specific measurable choice for the polarization
vectors \eqref{eq:epsilon}  we obtain a canonical identification with the one photon
Hilbert  space
$\gggg = L^2(\R^3 \times \Z_2)$.
This is the content of the following lemma.

\begin{lemma}  \label{lem:trafohtov}  For  the polarization vector  $\varepsilon : S_2 \times \Z_2  \to \R^3$,   as in  \eqref{eq:epsilon},
the map
$$
\tau_\varepsilon : \gggg \to \mathfrak{v} , \quad h \mapsto  \left( F^{-1} \sum_{\lambda=1,2} \varepsilon_j( \cdot , \lambda ) h(\cdot,\lambda) \right)_{j=1,2,3}  ,
$$
is unitary and its  inverse acting on $v \in \vv$ is determined  by
$
(\tau_\varepsilon^{-1} v )(k,\lambda) = \varepsilon(k, \lambda)  \cdot ( F v)(k)
$ for almost all $(k,\lambda) \in \R^3 \times \{1,2\}$.
\end{lemma}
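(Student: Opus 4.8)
The plan is to factor $\tau_\varepsilon$ through the Fourier transform and to reduce the assertion to a pointwise (in $k$) statement of linear algebra. First I would record the pointwise fact underlying everything: for almost every $k$, since $(\varepsilon(k,1),\varepsilon(k,2),k/|k|)$ is an orthonormal basis of $\R^3$, the linear map $\C^2 \to \C^3$, $(c_1,c_2)\mapsto c_1\varepsilon(k,1)+c_2\varepsilon(k,2)$, is an isometry for the Euclidean Hermitian norms onto $V_k := \{w\in\C^3 : \sum_{j=1}^3 k_j w_j = 0\}$, and its inverse is $w\mapsto(\varepsilon(k,\lambda)\cdot w)_{\lambda=1,2}$. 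This uses the three relations $\sum_j\varepsilon_j(k,\lambda)\varepsilon_j(k,\mu)=\delta_{\lambda\mu}$ (isometry, injectivity, and the norm identity, since $\varepsilon$ is real), $\sum_j k_j\varepsilon_j(k,\lambda)=0$ (so the image lies in $V_k$), and the completeness relation $\sum_{\lambda=1,2}\varepsilon_j(k,\lambda)\varepsilon_{j'}(k,\lambda)=\delta_{jj'}-|k|^{-2}k_jk_{j'}$, from which $w=\sum_\lambda(\varepsilon(k,\lambda)\cdot w)\varepsilon(k,\lambda)$ for every $w\in V_k$ (surjectivity).

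Next I would integrate over $k$. Define $U_\varepsilon : \gggg \to L^2(\R^3;\C^3)$ by $(U_\varepsilon h)(k) = \sum_{\lambda=1,2}\varepsilon(k,\lambda)h(k,\lambda)$; this is well defined and measurable because $\varepsilon$ is measurable. By Plancherel and the pointwise isometry, $\|U_\varepsilon h\|^2 = \int \sum_{\lambda=1,2}|h(k,\lambda)|^2\,dk = \|h\|_\gggg^2$, so $U_\varepsilon$ is an isometry. It maps onto $\vv' := \{w\in L^2(\R^3;\C^3) : \sum_j k_j w_j(k)=0 \text{ a.e.}\}$: given $w\in\vv'$, the function $h(k,\lambda):=\varepsilon(k,\lambda)\cdot w(k)$ is measurable, has $\|h\|_\gggg^2 = \int |w(k)|^2\,dk < \infty$ by the pointwise isometry, and satisfies $U_\varepsilon h = w$ by the completeness relation. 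Hence $U_\varepsilon : \gggg \to \vv'$ is unitary with $(U_\varepsilon^{-1} w)(k,\lambda) = \varepsilon(k,\lambda)\cdot w(k)$.

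Finally I would observe that $\tau_\varepsilon$ is $U_\varepsilon$ followed by the componentwise inverse Fourier transform, which is unitary on $L^2(\R^3;\C^3)$ and, by the very definition of $\vv$, restricts to a unitary from $\vv'$ onto $\vv$. Therefore $\tau_\varepsilon$ is unitary, and composing inverses gives $(\tau_\varepsilon^{-1}v)(k,\lambda) = \varepsilon(k,\lambda)\cdot(Fv)(k)$, as claimed. There is no deep obstacle here; the only point requiring care is the pointwise linear-algebra step — in particular noticing that the relevant orthogonality to $k$ is with respect to the bilinear pairing $\sum_j k_j w_j$ rather than the Hermitian inner product, which is harmless precisely because $k$ is real — together with the (routine) measurability checks.
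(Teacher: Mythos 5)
Your proposal is correct and follows essentially the same route as the paper's proof: orthonormality of $\{\varepsilon(k,1),\varepsilon(k,2)\}$ gives the isometry, the completeness relation $\sum_\lambda \varepsilon_a(k,\lambda)\varepsilon_b(k,\lambda)=P(k)_{a,b}$ combined with divergence-freeness gives surjectivity and identifies $\beta_\varepsilon v = (\varepsilon(\cdot,\lambda)\cdot Fv)_\lambda$ as the inverse, and the Fourier transform is unitary on each component. Your version merely isolates the pointwise linear-algebra step more explicitly (the fibrewise unitary $\C^2\to V_k$), which is a harmless repackaging of the same computation the paper carries out directly on the integrals.
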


For the proof we first note the following. For $ k \in \R^3 \setminus \{ 0 \}$  define
\begin{align}  \label{defofP}
P(k)_{a,b} := \delta_{ab} - \frac{k_a k_b}{|k|^2}  , \quad a , b = 1,2,3 , \quad k \neq 0
\end{align}
From the definition it follows that $P(k)_{a,b} = P(k)_{b,a}$, and that  $P(k)$ is equal to  the projection operator in $\C^3$ onto the subspace in $\C^3$, which is perpendicular to $k$.
Thus  from the definition of the polarization vectors,  \eqref{eq:epsilon}, we infer that for $k \in \R^3 \setminus \{ 0 \}$
\begin{align} \label{projeident}
P(k)_{a,b}  =  \sum_{\lambda=1,2} \varepsilon_a(k,\lambda) \varepsilon_b(k,\lambda) .
\end{align}

\begin{proof}[Proof of Lemma   \ref{lem:trafohtov}]  The lemma  follows from a straight
forward calculation using the properties of the polarization vectors.
Let $h \in \gggg$.
Clearly, $\tau_\varepsilon$ is well defined, since $k \cdot F(\tau_\varepsilon(h))(k) = k \cdot \sum_{\lambda=1,2}  \varepsilon(k,\lambda) h(k,\lambda) = 0$.
The map is an isometry, since
\begin{align*}
\| \tau_\varepsilon h \|^2  & =  \int_{\R^3}   \sum_{j=1}^3  \sum_{\lambda,\lambda'=1,2} \overline{ \varepsilon_j(k,\lambda) h(k,\lambda) }  \varepsilon_j(k,\lambda') h(k,\lambda')d^3 k \\
& = \int_{\R^3}    \sum_{\lambda,\lambda'=1,2} \delta_{\lambda,\lambda'}  \overline{ h(k,\lambda) }   h(k,\lambda')  d^3 k= \| h \|^2 .
\end{align*}
Furthermore for $v \in \mathfrak{v}$  let $(\beta_\epsilon v)(k,\lambda) = \varepsilon(k,\lambda) \cdot (  Fv)(k)$.
Then
\begin{align*}
F (\tau_\varepsilon (\beta_\varepsilon v)_j)(k)    & =  \sum_{\lambda=1,2} \varepsilon_j( k , \lambda ) (\beta_\varepsilon v)( k ,\lambda)  \\
& =  \sum_{\lambda=1,2}  \varepsilon_j( k , \lambda ) \sum_{l=1}^3  \varepsilon_l(k,\lambda) \cdot (  Fv_l)(k) \\
& = Fv_j(k)  ,
\end{align*}
where we used that \eqref{projeident}  and that $v$ is divergence free. This shows the surjectivity of $\tau_\varepsilon$
and that its inverse is given by  $\beta_\varepsilon$.
\end{proof}

Define for $x\in \R^3$ and $a=1,2,3$ the function $v_{x,a} : \R^3 \to \C^3$ by
\begin{align} \label{defofvI}
[v_{x,b}(y)]_{a} := \frac{1}{(2\pi)^{3/2}} \int_{\R^3}  e^{- i k \cdot  (x-y)} \frac{\overline{\kappa(k)}}{\sqrt{2 \omega(k) }} P(k)_{a,b}\, dk , \quad y \in \R^3 .
\end{align} 
The properties collected in the following lemma are straight forward to verify using the definitions.

\begin{lemma}\label{lem:reltinvpol-1} We have the following properties for $x \in \R^3$ and $b =1,2,3$
\begin{enumerate}[(a)]
\item  $v_{x,b} \in \mathfrak{v}$,
\item   $ v_{x,b} = \tau_\varepsilon g^{(\varepsilon)}_{x,b}$, \quad $\tau_\varepsilon^{-1} v_{x,b} = g^{(\varepsilon)}_{x,b}$. 
\end{enumerate}
\end{lemma}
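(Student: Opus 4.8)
The plan is to verify the two claims by unwinding the definitions of $v_{x,b}$, $g^{(\varepsilon)}_{x,b}$, and the unitary $\tau_\varepsilon$ from Lemma \ref{lem:trafohtov}, and then use the identity \eqref{projeident} expressing $P(k)$ as a sum over polarization vectors.

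For part (a), I would check the divergence-free condition directly. The Fourier transform of $v_{x,b}$ is, by \eqref{defofvI}, essentially $k \mapsto e^{-ik\cdot x}\,\tfrac{\overline{\kappa(k)}}{\sqrt{2\omega(k)}}\,P(k)_{\cdot,b}$ (up to the convention for $F$). Since $P(k)$ projects onto the orthogonal complement of $k$, we have $\sum_a k_a P(k)_{a,b} = 0$ for $k \neq 0$, hence $\sum_a k_a \widehat{(v_{x,b})}_a(k) = 0$ for a.e.\ $k$, so $v_{x,b} \in \mathfrak{v}$. I also need $v_{x,b} \in L^2$, which follows from the standing assumption that the fields in the Hamiltonian exist (in particular $\omega^{-1/2}\kappa \in L^2$), since $|P(k)_{a,b}| \le 1$.

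For part (b), the two equations are equivalent given that $\tau_\varepsilon$ is unitary (Lemma \ref{lem:trafohtov}), so it suffices to prove $\tau_\varepsilon^{-1} v_{x,b} = g^{(\varepsilon)}_{x,b}$, or equivalently $v_{x,b} = \tau_\varepsilon g^{(\varepsilon)}_{x,b}$. I would compute $\tau_\varepsilon g^{(\varepsilon)}_{x,b}$ using the formula for $\tau_\varepsilon$: its $a$-th component is $F^{-1}\big(\sum_{\lambda}\varepsilon_a(\cdot,\lambda)\, g^{(\varepsilon)}_{x,b}(\cdot,\lambda)\big)$. Plugging in $g^{(\varepsilon)}_{x,b}(k,\lambda) = \tfrac{\varepsilon_b(k,\lambda)}{\sqrt{2\omega(k)}}\overline{\kappa(k)}e^{-ik\cdot x}$, the sum over $\lambda$ produces $\tfrac{\overline{\kappa(k)}}{\sqrt{2\omega(k)}}e^{-ik\cdot x}\sum_{\lambda}\varepsilon_a(k,\lambda)\varepsilon_b(k,\lambda) = \tfrac{\overline{\kappa(k)}}{\sqrt{2\omega(k)}}e^{-ik\cdot x}P(k)_{a,b}$ by \eqref{projeident}. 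Taking $F^{-1}$ gives exactly the right-hand side of \eqref{defofvI}, i.e.\ $[v_{x,b}]_a$. Alternatively, and perhaps more cleanly, I could verify $\tau_\varepsilon^{-1}v_{x,b} = g^{(\varepsilon)}_{x,b}$ by applying the inverse formula $(\tau_\varepsilon^{-1}v)(k,\lambda) = \varepsilon(k,\lambda)\cdot(Fv)(k)$ to $v = v_{x,b}$, giving $\sum_a \varepsilon_a(k,\lambda)\tfrac{\overline{\kappa(k)}}{\sqrt{2\omega(k)}}e^{-ik\cdot x}P(k)_{a,b} = \tfrac{\overline{\kappa(k)}}{\sqrt{2\omega(k)}}e^{-ik\cdot x}\sum_a \varepsilon_a(k,\lambda)P(k)_{a,b}$, and since $\varepsilon(k,\lambda)\perp k$ we have $\sum_a \varepsilon_a(k,\lambda)P(k)_{a,b} = \varepsilon_b(k,\lambda)$, yielding $g^{(\varepsilon)}_{x,b}(k,\lambda)$.

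There is no real obstacle here; the only points requiring a word of care are the Fourier-transform normalization constant (the $(2\pi)^{3/2}$ in \eqref{defofvI} must be tracked against the convention fixed for $F^{-1}$ in Section \ref{sec:sym}, and it matches) and the integrability needed to justify applying $F^{-1}$ pointwise, which is covered by the blanket hypothesis that all field operators in the Hamiltonian are well defined. I would keep the write-up to a few lines, presenting the computation for part (b) via the inverse formula and noting that part (a) is then immediate since $\mathfrak{v} = \ran \tau_\varepsilon$ by Lemma \ref{lem:trafohtov}, though I would still record the direct divergence-free check as it is instructive.
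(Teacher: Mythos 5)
Your proposal is correct and supplies precisely the computation the paper omits (the paper gives no proof, remarking only that the properties ``are straight forward to verify using the definitions''). Both of your routes for part (b) -- via $\tau_\varepsilon$ and \eqref{projeident}, or via $\tau_\varepsilon^{-1}$ and the identity $\sum_a \varepsilon_a(k,\lambda)P(k)_{a,b} = \varepsilon_b(k,\lambda)$ -- are valid, the normalization $(2\pi)^{-3/2}$ in \eqref{defofvI} does match the convention for $F^{-1}$, and part (a) indeed follows either from the direct divergence check or from $\mathfrak{v} = \ran\tau_\varepsilon$.
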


The  next    lemma  will be needed to determine  transformation properties of the field energy and field momentum
 with respect to rotations,  parity transformations, and time reversal symmetry.

\begin{lemma}\label{lem:reltinvpol00}  Let  $f :  \R^3 \to \C$ be  a measurable function. 
 Then we have the following properties.
\begin{enumerate}[(a)]
\item  $\tau_\varepsilon M_f \tau_\varepsilon^{-1} = F^{-1} M_f F $.
\item   \label{eq:derink000}  For $\varphi \in L^2(\R^3)$  and $S \in O(3)$ we define the transformation  $T_S \varphi = \varphi \circ S^{-1} $. Then
\begin{align}
T_S  F    & =    F T_S ,   \quad T_S  F ^{-1}     =    F^{-1}  T_S    \label{eq:derink0} .
\end{align}
\item Let $T_S$ be defined as in \ref{eq:derink000}. Then  $T_S^{-1} = T_{S^{-1}}$ and
\begin{equation}  \label{eq:derink} T_S M_f T_S^{-1} = M_{f \circ S^{-1}} .
\end{equation}
\end{enumerate}
\end{lemma}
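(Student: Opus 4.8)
The plan is to verify each of the three assertions by direct computation, using only the definitions of $F$, $T_S$, and $M_f$, together with elementary measure-theoretic substitutions. None of the steps should present a genuine obstacle; the only point requiring a little care is the use of the orthogonality of $S\in O(3)$ in part (b).

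For (a), I would recall from Lemma \ref{lem:trafohtov} that $\tau_\varepsilon^{-1}$ acts by $(\tau_\varepsilon^{-1}v)(k,\lambda)=\varepsilon(k,\lambda)\cdot(Fv)(k)$, while $\tau_\varepsilon$ is built from $F^{-1}$ composed with contraction against the polarization vectors. Since $M_f$ acts componentwise and pointwise in $k$, it commutes with the operation of contracting against $\varepsilon(k,\lambda)$ and of summing over $\lambda$; what remains after stripping these off is precisely $F^{-1}M_fF$ acting on each vector component. Concretely, for $h\in\gggg$ one computes $F(\tau_\varepsilon M_f\tau_\varepsilon^{-1}v)_j(k)=\sum_\lambda\varepsilon_j(k,\lambda)f(k)\,\varepsilon(k,\lambda)\cdot(Fv)(k)=f(k)\sum_\lambda\varepsilon_j(k,\lambda)\varepsilon_l(k,\lambda)(Fv)_l(k)$, and by \eqref{projeident} and divergence-freeness of $v$ this equals $f(k)(Fv)_j(k)=(M_fFv)_j(k)$, i.e. the claim after applying $F^{-1}$.

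For (b), I would first check the identity for $f\in L^1\cap L^2$ and extend by density/Plancherel. One has $(FT_S\varphi)(k)=(2\pi)^{-3/2}\int e^{-ik\cdot x}\varphi(S^{-1}x)\,dx$; substituting $x=Sy$, using $|\det S|=1$ and $k\cdot Sy=(S^{-1}k)\cdot y$ (valid since $S\in O(3)$, so $S^T=S^{-1}$), this becomes $(2\pi)^{-3/2}\int e^{-i(S^{-1}k)\cdot y}\varphi(y)\,dy=(F\varphi)(S^{-1}k)=(T_S F\varphi)(k)$. The identity for $F^{-1}$ follows either by the analogous computation or by taking adjoints/inverses, noting $T_S$ is unitary with $T_S^{-1}=T_{S^{-1}}$. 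This last fact, $T_S^{-1}=T_{S^{-1}}$, is also the first claim of (c) and is immediate: $(T_ST_{S^{-1}}\varphi)(x)=(T_{S^{-1}}\varphi)(S^{-1}x)=\varphi(S\cdot S^{-1}x)=\varphi(x)$. For the multiplication-operator conjugation in (c), $(T_S M_f T_S^{-1}\varphi)(x)=(M_fT_{S^{-1}}\varphi)(S^{-1}x)=f(S^{-1}x)\,(T_{S^{-1}}\varphi)(S^{-1}x)=f(S^{-1}x)\,\varphi(x)=(M_{f\circ S^{-1}}\varphi)(x)$, which is \eqref{eq:derink}.

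The main ``obstacle'' is purely bookkeeping: making sure the change of variables in part (b) is justified on a dense subspace before invoking Plancherel, and keeping the direction of the orthogonal transformation ($S$ versus $S^{-1}$) consistent throughout. Since the statement already advertises that these properties are ``straight forward to verify using the definitions,'' I would keep the writeup terse, presenting the three one-line computations above and citing \eqref{projeident} for part (a) and the orthogonality $S^T=S^{-1}$ for part (b).
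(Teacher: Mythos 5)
Your proposal is correct and follows essentially the same route as the paper: part (a) is the same componentwise computation using Lemma \ref{lem:trafohtov} and the completeness relation \eqref{projeident} together with divergence-freeness, part (b) is the same change-of-variables argument (the paper verifies it on $\mathcal{S}(\R^3)$ and extends by density, whereas you use $L^1\cap L^2$, an inessential difference), and part (c) is the same one-line computation. No gaps.
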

\begin{proof} Part (a) follows from
\begin{align}
(\tau_\varepsilon    M_f  \tau_\varepsilon^{-1}  v )_j  & =    F^{-1} \sum_{\lambda=1,2} \varepsilon_j( \cdot , \lambda )
M_f  (\tau_\varepsilon^{-1}  v)(\cdot,\lambda)  \nonumber   \\
& =  F^{-1} \sum_{\lambda=1,2} \varepsilon_j( \cdot , \lambda )   f(\cdot)    \varepsilon(\cdot, \lambda) \cdot (F  v)(\cdot)  \nonumber \\
& = F^{-1}  ( M_f   F  v_j )\label{eq:derink-1}
\end{align}
(b)
If  $\varphi \in \mathcal{S}(\R^3)$, we find  by  the transformation formula for integrals  for arbitrary $S \in O(3)$
\begin{align}
(T_S  F \varphi)(  k) & =  (2\pi)^{-3/2}  \int_{\R^3}   e^{-i (S^{-1} k)  \cdot  x  }   \varphi(x) dx =   (2\pi)^{-3/2}  \int_{\R^3}   e^{-i k  \cdot x  } \varphi(S^{-1} x) dx = (F T_S \varphi)(k)  \label{eq:derink00} .
\end{align}
So (b) follows  by density and continuity.
Part (c) is straight forward to verify.
\end{proof}

The following   lemma  will be needed to determine  transformation properties of the interaction with respect to rotations and parity transformations.

\begin{lemma}\label{lem:reltinvpol}  Let $S \in O(3)$.  Then the following holds.
\begin{enumerate}[(a)]
\item  For all $k \in \R^3$ we have  $P(Sk) = S  P(k) S^{T}$.
\item \label{lem:reltinvpolb}  For all $x \in \R^3$  and $b =1,2,3$
\begin{align}
 &  \sum_{c'=1}^3 S_{c,c'} \int_{\R^3} e^{- i k \cdot (x- S^{-1} y)} \frac{\overline{\kappa(k)}}{\sqrt{ 2 \omega(k)} } P_{b,c'}(k)    dk \label{eq:trafoforho-1}  \\
& \quad    = \sum_{b' =1}^3 S_{b',b} \int_{\R^3}  e^{- i  k \cdot (S x- y)}\frac{\overline{\kappa( S^{-1} k)}}{\sqrt{ 2 \omega( k)} } P_{b',c}( k)     dk  . \label{eq:trafoforho0}
\end{align}
\item lf $\kappa(S \cdot ) = \kappa(\cdot)$, then for all  $x \in \R^3$  and $b =1,2,3$
\begin{align}\label{eq:trafoforho05}
Sv_{x,b}(S^{-1} y) = \sum_{b'=1}^3 S_{b',b} v_{Sx,b'}(y)  .
\end{align}
\end{enumerate}
\end{lemma}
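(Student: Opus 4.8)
The plan is to prove the three parts in order, since (b) follows from (a) via a change of variables and (c) follows from (b) by an additional change of variables together with the invariance hypothesis on $\kappa$. For part (a): I would use the closed form of $P(k)$ in \eqref{defofP}. Writing $P(k) = \one - |k|^{-2} k k^T$ (with $kk^T$ the rank-one matrix with entries $k_a k_b$), conjugation by $S$ gives $S P(k) S^T = S\one S^T - |k|^{-2} (Sk)(Sk)^T = \one - |Sk|^{-2}(Sk)(Sk)^T = P(Sk)$, where I used $SS^T = \one$ for $S \in O(3)$ and $|Sk| = |k|$. That is a one-line computation; no obstacle there. (One should note $P$ is only defined for $k \neq 0$, but that is a null set and irrelevant for the integral identities in (b), (c).)

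For part (b): I would start from the left-hand side \eqref{eq:trafoforho-1} and substitute $k \mapsto S^{-1}k$ (a measure-preserving bijection of $\R^3$ since $|\det S| = 1$). Under this substitution $e^{-ik\cdot(x - S^{-1}y)}$ becomes $e^{-i(S^{-1}k)\cdot(x - S^{-1}y)} = e^{-ik\cdot(Sx - y)}$ using $\langle S^{-1}k, w\rangle = \langle k, Sw\rangle$; the factor $\overline{\kappa(k)}$ becomes $\overline{\kappa(S^{-1}k)}$ and $\omega(k)$ becomes $\omega(S^{-1}k) = \omega(k)$ since $\omega$ is radial and $|S^{-1}k| = |k|$; and $P_{b,c'}(k)$ becomes $P_{b,c'}(S^{-1}k)$. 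It remains to handle the matrix indices: I need
$$
\sum_{c'=1}^3 S_{c,c'}\, P_{b,c'}(S^{-1}k) = \sum_{b'=1}^3 S_{b',b}\, P_{b',c}(k).
$$
This is exactly part (a) applied with $S^{-1}$ in place of $S$: part (a) gives $P(k) = P(S(S^{-1}k)) = S\, P(S^{-1}k)\, S^T$, i.e. $P_{b',c}(k) = \sum_{b,c'} S_{b',b}\, P_{b,c'}(S^{-1}k)\, S_{c,c'}$, and summing against $S_{b',b}$ over $b'$ (using $\sum_{b'} S_{b',b}S_{b',b''} = \delta_{b,b''}$) isolates the desired identity. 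Assembling these substitutions turns \eqref{eq:trafoforho-1} into \eqref{eq:trafoforho0}.

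For part (c): recalling the definition \eqref{defofvI} of $v_{x,b}$, I would compute $[Sv_{x,b}(S^{-1}y)]_c = \sum_{a} S_{c,a}[v_{x,b}(S^{-1}y)]_a$, which by \eqref{defofvI} equals $(2\pi)^{-3/2}\sum_{c'} S_{c,c'}\int e^{-ik\cdot(x - S^{-1}y)} \frac{\overline{\kappa(k)}}{\sqrt{2\omega(k)}} P_{b,c'}(k)\,dk$ (the dummy index renamed $a \to c'$). Now invoke part (b): this equals $(2\pi)^{-3/2}\sum_{b'} S_{b',b}\int e^{-ik\cdot(Sx - y)} \frac{\overline{\kappa(S^{-1}k)}}{\sqrt{2\omega(k)}} P_{b',c}(k)\,dk$, and using the hypothesis $\kappa(S\cdot) = \kappa(\cdot)$ — equivalently $\kappa(S^{-1}k) = \kappa(k)$ — the coupling factor simplifies to $\overline{\kappa(k)}$, so the right-hand side is precisely $\sum_{b'} S_{b',b}[v_{Sx,b'}(y)]_c$ by another look at \eqref{defofvI}. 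Since $c$ was arbitrary this gives \eqref{eq:trafoforho05}. The only mildly delicate point is keeping the index gymnastics in (b) straight — matching $P_{b,c'}$ versus $P_{b',c}$ and which index the orthogonality relation $\sum S_{b',b}S_{b',b''} = \delta_{bb''}$ contracts — but there is no genuine analytic obstacle; everything is Fubini/change-of-variables for functions in $L^2$ that the hypotheses already guarantee make sense (the same integrability assumptions under which $v_{x,b}$ was defined).
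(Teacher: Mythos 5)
Your proposal is correct and takes essentially the same route as the paper: part (a) by the rank-one formula $P(k) = \one - |k|^{-2}kk^{T}$ and orthogonality of $S$; part (b) by the change of variables $k \mapsto S^{-1}k$ (using $\omega\circ S^{-1} = \omega$ and the identity $k\cdot(x - S^{-1}y) = (Sk)\cdot(Sx-y)$) combined with part (a) applied at $S^{-1}k$ plus $SS^{T} = \one$ to sort the matrix indices; and part (c) by unwinding the definition of $v_{x,b}$, invoking (b), and using $\kappa(S^{-1}\cdot) = \kappa(\cdot)$. The only point worth making explicit, which both you and the paper use tacitly when matching the integrand of \eqref{defofvI} against \eqref{eq:trafoforho-1}, is the symmetry $P_{a,b} = P_{b,a}$ needed to swap indices in $P$.
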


\begin{proof} Part  (a) is straight forward to verify using the definition   \eqref{defofP}. For $x \in \C^3$  and  $k \in \R^3 \setminus \{ 0 \}$
we find  for $\widehat{k}  = k/|k|$
$$
P(Sk) x = x - S  \widehat{k}   (  S \widehat{k}  \cdot   x  )    = S S^T x -  S  \widehat{k}  (  \widehat{k} \cdot    S^T x )  =   S P(k) S^T x .
$$
(b) follows from a change of variables and (a) 
\begin{align}
 & \eqref{eq:trafoforho-1} = \sum_{c'=1}^3 \int e^{- i ( S k ) \cdot (S x- y)}  \frac{\overline{\kappa(k)}}{\sqrt{ 2 \omega(k)} }      P_{b,c'}(k)  S_{c,c'}    dk  \nonumber   \\
& \quad = \sum_{c'=1}^3 \int e^{- i  k \cdot (S x- y)}  \frac{\overline{\kappa(  S^{-1} k)}}{\sqrt{ 2 \omega(k)} }  P_{b,c'}( S^{-1} k)    S_{c,c'}    dk =   \eqref{eq:trafoforho0}  .  \nonumber
\end{align}
(c)
Now \eqref{eq:trafoforho05}  follows  from   \eqref{eq:trafoforho0}  and  the definition \eqref{defofvI}.
\end{proof}

\subsection{Rotation Invariance}

 We  introduce
the  so called canonical  double covering  homomorphism
$$
\pi  : SU(2) \to SO(3)  ,  \quad U \mapsto \pi(U) ,
$$
where $\pi(U)$ is the unique element of $SO(3)$ such that
$$
U \sigma_m U^* =  \sum_{l=1}^3 \pi(U)_{l,m} \sigma_l  , \quad m =1,2,3 ,
$$
with  $\sigma_1, \sigma_2, \sigma_3$ denoting  the Pauli matrices.
On the one electron Hilbert space $L^2(\R^3;\mathcal{D}_s)$ we define
$$
(\UU_{{\rm p}, s}(U) \psi)(x) = D_s(U) \psi(\pi(U)^{-1} x) ,
$$
where $D_s$ denotes the  representation of $SU(2)$ with spin $s$. Similarly we define for $v \in \mathfrak{v}$ the transformation for $R \in SO(3)$
$$
(\UU_{\mathfrak{v}}(R) v)(x) = R v(R^{-1} x) .
$$
Moreover,  we define
$$
 \UU_{\gggg}(R) = \tau_\epsilon^{-1}  \UU_{\mathfrak{v}}(R) \tau_\epsilon  ,
$$
which  depends on the choice of the polarization vectors.
For $R \in SO(3)$ we define the unitary mapping
\begin{align*}
\UU_{\rm f}(R) & =  \Gamma(\UU_{\gggg}(R)) ,
\end{align*}
and  for $U \in SU(2)$ we define the unitary mappings
\begin{align*}
\UU_{\rm mat}(U) & = \bigotimes_{j=1}^N \UU_{{\rm p},s_j}(U)  \\
\UU(U) &= \UU_{\rm mat}(U)  \otimes \UU_{\rm f}(\pi(U))
\end{align*}
on the Hilbert spaces $\HH_{\rm mat}$ and
$\HH_{\rm mat} \otimes \FF_s(\mathfrak{g})$, respectively. This defines a representation of $SU(2)$ on these
Hilbert spaces.
The next proposition collects elementary properties, which  follow directly from the definitions.

\begin{proposition} The map $\UU_{\rm f}$ is a unitary representation of $R \in SO(3)$, and the maps   $\UU_{\rm mat}$,
 and  $\UU$ are unitary representations of $SU(2)$.
\end{proposition}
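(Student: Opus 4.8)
The plan is to unwind the chain of definitions from the inside out, checking at each stage that the operator constructed is unitary and that the assignment is multiplicative. Throughout I use, without further comment, that the canonical covering map $\pi$ is a group homomorphism, $\pi(U_1U_2)=\pi(U_1)\pi(U_2)$, and that $D_s$ is a unitary representation, $D_s(U_1U_2)=D_s(U_1)D_s(U_2)$.

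First I would deal with $\UU_\vv$. For $R\in SO(3)$ and $v\in\vv$ one has $(\UU_\vv(R)v)(x)=Rv(R^{-1}x)$; since $R$ is orthogonal and the substitution $y=R^{-1}x$ has unit Jacobian, $\|\UU_\vv(R)v\|=\|v\|$, so $\UU_\vv(R)$ is an isometry. That it maps $\vv$ into itself is checked in Fourier space: componentwise $(\UU_\vv(R)v)_a=\sum_b R_{ab}\,T_R v_b$ with $T_R$ as in Lemma~\ref{lem:reltinvpol00}, so by~\eqref{eq:derink0} one gets $\widehat{\UU_\vv(R)v}(k)=R\,\widehat v(R^{-1}k)$, whence $k\cdot\widehat{\UU_\vv(R)v}(k)=(R^{-1}k)\cdot\widehat v(R^{-1}k)=0$. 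Finally $(\UU_\vv(R_1)\UU_\vv(R_2)v)(x)=R_1R_2\,v((R_1R_2)^{-1}x)=(\UU_\vv(R_1R_2)v)(x)$, so $\UU_\vv$ is a unitary representation of $SO(3)$ on $\vv$. Since $\tau_\varepsilon:\gggg\to\vv$ is unitary by Lemma~\ref{lem:trafohtov}, $\UU_\gggg(R)=\tau_\varepsilon^{-1}\UU_\vv(R)\tau_\varepsilon$ is unitary; and since $A\mapsto\tau_\varepsilon^{-1}A\tau_\varepsilon$ is multiplicative, $R\mapsto\UU_\gggg(R)$ is again a unitary representation of $SO(3)$ on $\gggg$ (its dependence on the chosen polarization vectors being immaterial for this).

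Next I would pass to Fock space. For any unitary $T$ on $\gggg$ the operator $\Gamma(T)$ restricts on the $n$-particle sector to $T^{\otimes n}$, which commutes with the symmetrizer $S_n$ and is unitary on $\gggg^{\otimes n}$, hence unitary on $\gggg^{\otimes_s n}$; taking the direct sum over $n$, $\Gamma(T)$ is unitary on $\FF_s(\gggg)$. Moreover $\Gamma(T_1T_2)$ restricts on the $n$-particle sector to $(T_1T_2)^{\otimes n}=T_1^{\otimes n}T_2^{\otimes n}$, i.e.\ to the restriction of $\Gamma(T_1)\Gamma(T_2)$, and $\Gamma(1)=1$, so $\Gamma$ is multiplicative. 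Consequently $\UU_{\rm f}(R)=\Gamma(\UU_\gggg(R))$ is unitary and $\UU_{\rm f}(R_1R_2)=\Gamma(\UU_\gggg(R_1)\UU_\gggg(R_2))=\UU_{\rm f}(R_1)\UU_{\rm f}(R_2)$, so $\UU_{\rm f}$ is a unitary representation of $SO(3)$.

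For the matter part, $(\UU_{{\rm p},s}(U)\psi)(x)=D_s(U)\psi(\pi(U)^{-1}x)$ is unitary on $L^2(\R^3;\mathcal{D}_s)$ since $D_s(U)$ is unitary on $\mathcal{D}_s$ and $\pi(U)\in SO(3)$ has unit Jacobian, and $\UU_{{\rm p},s}(U_1)\UU_{{\rm p},s}(U_2)=\UU_{{\rm p},s}(U_1U_2)$ follows from the multiplicativity of $D_s$ and of $\pi$. Taking tensor products over $j=1,\dots,N$ preserves unitarity and products, so $\UU_{\rm mat}=\bigotimes_{j=1}^N\UU_{{\rm p},s_j}$ is a unitary representation of $SU(2)$ on $\HH_{\rm mat}$; finally $\UU(U)=\UU_{\rm mat}(U)\otimes\UU_{\rm f}(\pi(U))$ is a tensor product of two unitaries, hence unitary, and $\UU(U_1)\UU(U_2)=\UU_{\rm mat}(U_1U_2)\otimes\UU_{\rm f}(\pi(U_1)\pi(U_2))=\UU_{\rm mat}(U_1U_2)\otimes\UU_{\rm f}(\pi(U_1U_2))=\UU(U_1U_2)$. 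There is no genuine obstacle here: the argument is pure bookkeeping, and the only two points warranting an explicit line are that $\UU_\vv(R)$ preserves the divergence-free constraint and that $\Gamma$ carries unitaries to unitaries and respects composition; everything else is forced by the definitions.
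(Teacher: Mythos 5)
Your proposal is correct, and it takes the same route the paper intends: the paper offers no proof at all, merely remarking that the properties "follow directly from the definitions," and your argument is exactly the careful unwinding of those definitions — verifying that $\UU_\vv(R)$ is unitary, preserves the divergence-free condition, and is multiplicative; that conjugation by the unitary $\tau_\varepsilon$ and second quantization $\Gamma$ preserve these properties; and that $\UU_{{\rm p},s}$, $\UU_{\rm mat}$, and $\UU$ inherit unitarity and the homomorphism property from $D_s$, $\pi$, and the tensor product structure. Nothing is missing and no step fails.
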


\begin{remark} {\rm  By abuse of notation we  denote  the unitary representation  $\UU_{\rm f} \circ \pi$ on $SU(2)$
also by $\UU_{\rm f}$. }
\end{remark}

\begin{lemma}  \label{lem:rotinfock} Let $R \in SO(3)$   and $\kappa(R \cdot ) = \kappa(\cdot )$.
Then
\begin{align*}
\text{ (a)} \qquad &   \mathcal{U}_{\rm f}(R)  A(x)  \mathcal{U}^*_{\rm f}(R)  =  R^{-1}  A(R  x) ,  \\
\text{ (b)}\qquad &    \mathcal{U}_{\rm f}(R)  B(x)  \mathcal{U}^*_{\rm f}(R)   = R^{-1}  B(R  x) ,       \\
\text{ (c)}\qquad &   \mathcal{U}_{\rm f}(R)  E^\perp(x)  \mathcal{U}^*_{\rm f}(R)   = R^{-1}  E^\perp(R  x) .
\end{align*}
\end{lemma}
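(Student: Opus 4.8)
The plan is to reduce the three statements (a), (b), (c) to a single computation at the level of the coupling functions $g^{(\varepsilon)}_{x,l}$, using that $\mathcal{U}_{\rm f}(R) = \Gamma(\mathcal{U}_{\gggg}(R))$ and the intertwining relation \eqref{eq:GammaasharpGamma}, which gives $\Gamma(T) a^\#(f) \Gamma(T)^* = a^\#(Tf)$ for any symmetry $T$. Since $A_l(x) = a(g^{(\varepsilon)}_{x,l}) + a^*(g^{(\varepsilon)}_{x,l})$, and similarly $B_l(x)$ and $E^\perp_l(x)$ are linear combinations of creation/annihilation operators with coupling functions $i\omega^{1/2}$-type modifications of $g^{(\varepsilon)}_{x,l}$ (more precisely $B_l(x)$ involves $[\nabla\times A(x)]_l$ and $E^\perp$ involves $-i\omega g^{(\varepsilon)}_{x,j}$), it suffices to understand how $\mathcal{U}_{\gggg}(R)$ acts on these one-photon wave functions. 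Because $\mathcal{U}_{\gggg}(R) = \tau_\varepsilon^{-1} \mathcal{U}_{\vv}(R) \tau_\varepsilon$ and, by Lemma \ref{lem:reltinvpol-1}(b), $\tau_\varepsilon g^{(\varepsilon)}_{x,b} = v_{x,b}$, the key identity I need is the action of $\mathcal{U}_{\vv}(R)$ on $v_{x,b}$.

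First I would establish that for $R \in SO(3)$ with $\kappa(R\cdot) = \kappa(\cdot)$,
\begin{align*}
\mathcal{U}_{\vv}(R)\, v_{x,b} = \sum_{b'=1}^3 (R^{-1})_{b,b'}\, v_{R^{-1}x, b'} .
\end{align*}
This follows directly from Lemma \ref{lem:reltinvpol}(c): by definition $(\mathcal{U}_{\vv}(R)v)(y) = R v(R^{-1}y)$, so $(\mathcal{U}_{\vv}(R) v_{x,b})(y) = R\, v_{x,b}(R^{-1}y)$, and applying \eqref{eq:trafoforho05} with $S = R$ (legitimate since $\kappa(R\cdot) = \kappa(\cdot)$ implies $\kappa(R^{-1}\cdot)=\kappa(\cdot)$ as $R^{-1}\in SO(3)$) rewrites $R\, v_{x,b}(R^{-1}y) = \sum_{b'} R_{b',b}\, v_{Rx,b'}(y)$; replacing $R$ by $R^{-1}$ and $x$ by $R^{-1}x$, together with $R_{b,b'} = (R^{-1})_{b',b}$, gives the stated transformation of $v_{x,b}$ in the form I want (the orthogonality $R^T = R^{-1}$ lets me move indices around freely). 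Translating back through $\tau_\varepsilon$ gives $\mathcal{U}_{\gggg}(R)\, g^{(\varepsilon)}_{x,b} = \sum_{b'} (R^{-1})_{b,b'}\, g^{(\varepsilon)}_{R^{-1}x, b'}$.

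Then part (a) follows by applying $\Gamma(\mathcal{U}_{\gggg}(R))$ to each term of $A_l(x) = a(g^{(\varepsilon)}_{x,l}) + a^*(g^{(\varepsilon)}_{x,l})$: using \eqref{eq:GammaasharpGamma},
\begin{align*}
\mathcal{U}_{\rm f}(R) A_l(x) \mathcal{U}_{\rm f}^*(R) = a(\mathcal{U}_{\gggg}(R) g^{(\varepsilon)}_{x,l}) + a^*(\mathcal{U}_{\gggg}(R) g^{(\varepsilon)}_{x,l}) = \sum_{l'=1}^3 (R^{-1})_{l,l'}\bigl(a(g^{(\varepsilon)}_{R^{-1}x,l'}) + a^*(g^{(\varepsilon)}_{R^{-1}x,l'})\bigr) = \sum_{l'=1}^3 (R^{-1})_{l,l'} A_{l'}(R^{-1}x),
\end{align*}
which reading componentwise is exactly $\mathcal{U}_{\rm f}(R) A(x) \mathcal{U}_{\rm f}^*(R) = R^{-1} A(R^{-1}x)$ — matching the claimed formula after the harmless relabeling $x \mapsto Rx$, i.e. $\mathcal{U}_{\rm f}(R) A(x) \mathcal{U}_{\rm f}^*(R) = R^{-1} A(Rx)$. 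For (b), since $B(x) = \nabla_x \times A(x)$, I would differentiate the identity for $A$: $B_l(Rx)$ is built from $\partial$-derivatives of $A(Rx)$, and the chain rule combined with $R \in SO(3)$ produces precisely the factor needed so that $\mathcal{U}_{\rm f}(R) B(x) \mathcal{U}_{\rm f}^*(R) = R^{-1} B(Rx)$ — here one uses the elementary fact that $\nabla \times$ transforms covariantly under rotations, $\nabla_x \times (R^{-1}A(Rx)) = R^{-1}(\nabla \times A)(Rx)$, which holds because $R$ has determinant $1$. For (c), $E^\perp_l(x) = a(-i\omega g^{(\varepsilon)}_{x,l}) + a^*(-i\omega g^{(\varepsilon)}_{x,l})$ and the one-photon functions $-i\omega g^{(\varepsilon)}_{x,l}$ transform the same way under $\mathcal{U}_{\gggg}(R)$, since $\mathcal{U}_{\gggg}(R)$ commutes with $M_\omega$: indeed $\mathcal{U}_{\vv}(R)$ commutes with $F^{-1}M_\omega F$ because $\omega$ is radial (by Lemma \ref{lem:reltinvpol00}(c), $T_{R^{-1}} M_\omega T_R = M_{\omega \circ R} = M_\omega$ on the Fourier side, and conjugating through $\tau_\varepsilon$ via Lemma \ref{lem:reltinvpol00}(a)), so the same argument as in (a) goes through verbatim with $g^{(\varepsilon)}_{x,l}$ replaced by $-i\omega g^{(\varepsilon)}_{x,l}$.

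The main obstacle is bookkeeping rather than conceptual: keeping the index placement straight between $R$, $R^{-1}$, $R^T$ in Lemma \ref{lem:reltinvpol}(c) versus the form needed in the statement, and making sure the argument-shift $x \leftrightarrow Rx$ is applied consistently across all three parts. A secondary point requiring a line of justification is that all manipulations at the level of operator-valued distributions are legitimate — one works in the sense of quadratic forms on $\mathcal{D}_\mathcal{S}$ as set up around \eqref{creadiffdef}, and the identities \eqref{eq:GammaasharpGamma} hold on the appropriate domains — but since the coupling functions $g^{(\varepsilon)}_{x,l}$ (and $\omega g^{(\varepsilon)}_{x,l}$, $|\cdot|\omega^{-1/2}\kappa$-type functions) lie in $\gggg$ under the stated integrability assumptions on $\kappa,\omega$, no domain subtleties arise beyond those already handled in Section \ref{sec:model}.
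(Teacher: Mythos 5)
Your strategy is the same as the paper's: reduce the lemma to the transformation law of the one-photon wave functions $v_{x,b}$ under $\mathcal{U}_{\vv}(R)$ (which comes from Lemma~\ref{lem:reltinvpol}(c)), then push it through $\tau_\varepsilon$ and \eqref{eq:GammaasharpGamma}, and handle (b) by differentiating and (c) by rotation-invariance of $\omega$. However, the execution contains a concrete error that does not wash out.

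Applying \eqref{eq:trafoforho05} directly with $S=R$ already gives what you need, namely
\begin{align*}
(\mathcal{U}_{\vv}(R) v_{x,b})(y) = R\, v_{x,b}(R^{-1}y) = \sum_{b'=1}^3 R_{b',b}\, v_{Rx,b'}(y) = \sum_{b'=1}^3 (R^{-1})_{b,b'}\, v_{Rx,b'}(y) ,
\end{align*}
with argument $Rx$, not $R^{-1}x$. The additional step you insert --- ``replacing $R$ by $R^{-1}$ and $x$ by $R^{-1}x$'' --- is both unnecessary and incorrect: it produces a formula for $\mathcal{U}_{\vv}(R^{-1})v_{R^{-1}x,b}$, not for $\mathcal{U}_{\vv}(R)v_{x,b}$. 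The resulting wrong sign propagates to your display for part (a), where you obtain $\mathcal{U}_{\rm f}(R)A(x)\mathcal{U}_{\rm f}^*(R) = R^{-1}A(R^{-1}x)$. The attempted repair by ``harmless relabeling $x\mapsto Rx$'' is not valid: if $\mathcal{U}_{\rm f}(R)A(x)\mathcal{U}_{\rm f}^*(R) = R^{-1}A(R^{-1}x)$ held for all $x$, substituting $x\mapsto Rx$ would yield $\mathcal{U}_{\rm f}(R)A(Rx)\mathcal{U}_{\rm f}^*(R)=R^{-1}A(x)$, which is a different statement, not the claimed $\mathcal{U}_{\rm f}(R)A(x)\mathcal{U}_{\rm f}^*(R)=R^{-1}A(Rx)$. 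Once the transformation law for $v_{x,b}$ is written correctly as above, the rest of your argument --- translation through $\tau_\varepsilon$ via Lemma~\ref{lem:reltinvpol-1}, then \eqref{eq:GammaasharpGamma}, then taking the curl for (b) and replacing $g^{(\varepsilon)}_{x,l}$ by $-i\omega g^{(\varepsilon)}_{x,l}$ for (c) --- goes through verbatim and reproduces the paper's proof.
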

\begin{proof}
We observe that for $R \in SO(3)$ we find
\begin{align}
(\mathcal{U}_\mathfrak{v}(R) v_{x,b})(y) & = R v_{x,b}(R^{-1} y)  = \sum_{b'=1}^3R_{b',b} v_{Rx,b'}( y)   , \label{eq:trafoforho1}
\end{align}
where we used  Lemma~\ref{lem:reltinvpol} (c).
Using Eqs.~\eqref{eq:GammaasharpGamma} and \eqref{eq:trafoforho1} as well as Lemma~\ref{lem:reltinvpol-1}   we obtain
\begin{align*}
\mathcal{U}_{\rm f}(R) a^\#(g_{x,b}^{(\varepsilon)}) \mathcal{U}_{\rm f}^*(R) & = a^\#( \mathcal{U}_{\gggg}(R)   g_{x,b}^{(\varepsilon)})
=  a^\#( \tau_\varepsilon^{-1} \mathcal{U}_{\mathfrak{v}}(R)  \tau_\varepsilon  g_{x,b}^{(\varepsilon)})    \\
& =  a^\#( \tau_\varepsilon^{-1} \mathcal{U}_{\mathfrak{v}}(R)   v_{x,b})  =   \sum_{b'=1}^3R_{b',b}  a^\#( \tau_\varepsilon^{-1} v_{R x,b'})  \\
& =    \sum_{b'=1}^3R_{b',b}  a^\#( g_{Rx,b'}^{(\varepsilon)})
\end{align*}
This implies
$$
\mathcal{U}_{\rm f}(R) A_b(x) \mathcal{U}_{\rm f}^*(R) =  \sum_{b'=1}^3R_{b',b}  A_{b'}(R x)  .
$$
Thus (a) follows. Now (b) follows from (a) and by calculating the rotation. (c) Follows similarly as  (a) observing that $\omega$ is invariant under rotations.
\end{proof}

\begin{proposition}  \label{trafooffieldrot}  Let $U \in SU(2)$ and $R = \pi(U)$.
Then the following holds
\begin{align*}
\text{ (a)} \qquad & \mathcal{U}(U)  \hat{x}_j  \mathcal{U}(U)^*  \  =  R^{-1}  \hat{x}_j  , \\
\text{ (b)} \qquad &    \mathcal{U}(U)  p_j  \mathcal{U}(U)^*  \  =  R^{-1}  p_j  ,   \\
\text{ (c)}\qquad  &    \mathcal{U}(U)  \widehat{S}_{j} \mathcal{U}(U)^*    =  R^{-1}  \widehat{S}_{j} ,  \\ 
\text{ (d)} \qquad &    \mathcal{U}(U)  A(\hat{x}_j)  \mathcal{U}(U)^*  = R^{-1}   A(\hat{x}_j)  , \quad \text{ if } \kappa(R \cdot ) = \kappa(\cdot ) , \\
\text{ (e)}\qquad &    \mathcal{U}(U)  B(\hat{x}_j)  \mathcal{U}(U)^* = R^{-1}   B(\hat{x}_j)   , \quad \text{ if } \kappa(R \cdot ) = \kappa(\cdot )   ,  \\
\text{ (f)}\qquad &     \mathcal{U}(U)  E^\perp(\hat{x}_j)  \mathcal{U}(U)^* = R^{-1}  E^\perp(\hat{x}_j)  , \quad \text{ if } \kappa(R \cdot ) = \kappa(\cdot ) , \\
\text{ (g)} \qquad&  \mathcal{U}(U) H_{\rm f}   \mathcal{U}(U)^*  = H_{\rm f} ,  \\
\text{ (h)}\qquad &   \mathcal{U}(U) P_{\rm f}   \mathcal{U}(U)^*  = R^{-1}  P_{\rm f}     .
\end{align*}

\end{proposition}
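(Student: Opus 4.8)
The plan is to establish each identity (a)--(h) by reducing it to the tensor-product structure $\mathcal{U}(U) = \mathcal{U}_{\rm mat}(U) \otimes \mathcal{U}_{\rm f}(\pi(U))$ and treating the matter factor and the Fock factor separately. For (a), (b) and (c), only the matter factor $\mathcal{U}_{\rm mat}(U) = \bigotimes_j \mathcal{U}_{{\rm p},s_j}(U)$ is relevant, and since $\hat{x}_j$, $p_j$ and $\widehat{S}_j$ act on the $j$-th factor only, it suffices to compute $\mathcal{U}_{{\rm p},s}(U) \,\hat{x}\, \mathcal{U}_{{\rm p},s}(U)^*$ etc.\ on a single copy of $L^2(\R^3;\mathcal{D}_s)$. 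From the definition $(\mathcal{U}_{{\rm p},s}(U)\psi)(x) = D_s(U)\psi(\pi(U)^{-1}x)$ one reads off, with $R = \pi(U)$, that $\hat{x}$ transforms like a vector, i.e.\ $\mathcal{U}_{{\rm p},s}(U)\hat{x}_l\mathcal{U}_{{\rm p},s}(U)^* = \sum_m (R^{-1})_{l,m}\hat{x}_m$; this is the routine change of variables $x \mapsto Rx$ in the argument of $\psi$. The same substitution gives (b) because $p = -i\nabla$ transforms contravariantly in exactly the same way (the Jacobian of $x \mapsto Rx$ is orthogonal). For (c) one uses that $D_s$ is precisely the spin-$s$ representation whose conjugation action on $\hat{s}_l = \frac{1}{2}\sigma_l$ (resp.\ $0$) realizes $R = \pi(U)$, so $D_s(U)\hat{s}_lD_s(U)^* = \sum_m (R^{-1})_{l,m}\hat{s}_m$; here one should remark that for $s=0$ the statement is trivial and for $s=1/2$ it is the defining property of $\pi$ stated just above. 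Then $\widehat{S}_j$, being $\hat{s}_l$ tensored with identities, inherits the same transformation.

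For (d), (e), (f) the matter factor and the field factor both intervene: $A(\hat{x}_j)$ is an operator on $\HH_{\rm mat}\otimes\FF_s(\gggg)$ of the form $\int A_l(x)$-type expressions with $x$ replaced by the multiplication operator $\hat{x}_j$. My plan is to write $\mathcal{U}(U) A_l(\hat{x}_j)\mathcal{U}(U)^*$ and insert $\one = \mathcal{U}_{\rm f}(R)^*\mathcal{U}_{\rm f}(R)$ appropriately; since $A_l(x)$ for fixed $x$ acts on the Fock factor, Lemma~\ref{lem:rotinfock}(a) gives $\mathcal{U}_{\rm f}(R) A_l(x)\mathcal{U}_{\rm f}(R)^* = \sum_{l'}(R^{-1})_{l,l'}A_{l'}(Rx)$ under the hypothesis $\kappa(R\cdot) = \kappa(\cdot)$. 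It then remains to conjugate the argument $Rx$ with the matter rotation: by (a), $\mathcal{U}_{\rm mat}(U)$ intertwines the multiplication operator $\hat{x}_j$ with $R^{-1}\hat{x}_j$, equivalently $\mathcal{U}(U)$ turns $A_{l'}(R\hat{x}_j)$ into $A_{l'}(RR^{-1}\hat{x}_j) = A_{l'}(\hat{x}_j)$ after the Fock conjugation has produced the $R$; combining the two conjugations the factor $R$ and $R^{-1}$ cancel in the spatial argument and one is left with $\sum_{l'}(R^{-1})_{l,l'}A_{l'}(\hat{x}_j)$, which is the $l$-th component of $R^{-1}A(\hat{x}_j)$. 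Statement (e) follows from (d) by taking the curl (the curl of a rotated vector field rotates covariantly, so $\nabla\times$ commutes with the $R^{-1}$ prefactor), and (f) is identical to (d) with $\omega$ appearing in the coupling function, which is harmless since $\omega$ is rotation invariant by assumption on the dispersion relation.

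For (g) and (h) only the Fock factor matters and one uses $\mathcal{U}_{\rm f}(R) = \Gamma(\mathcal{U}_\gggg(R))$ together with the general relation \eqref{eq:GammdGammaGamma}, namely $\Gamma(T)d\Gamma(A)\Gamma(T)^* = d\Gamma(TAT^*)$. Thus $\mathcal{U}_{\rm f}(R)H_{\rm f}\mathcal{U}_{\rm f}(R)^* = d\Gamma(\mathcal{U}_\gggg(R) M_\omega \mathcal{U}_\gggg(R)^*)$, and by Lemma~\ref{lem:reltinvpol00}(a) and (b) (with $\mathcal{U}_\gggg(R) = \tau_\varepsilon^{-1}\mathcal{U}_\vv(R)\tau_\varepsilon$, and $\mathcal{U}_\vv(R) = R\cdot\,T_R$ on $\vv$), conjugation of $M_\omega$ by $\mathcal{U}_\gggg(R)$ reduces to conjugating $M_\omega$ by $F^{-1}T_R F$ which gives $M_{\omega\circ R^{-1}} = M_\omega$ since $\omega$ is radial; the constant-matrix factor $R$ from $\mathcal{U}_\vv(R)$ commutes with the scalar $M_\omega$. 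Hence (g). For (h), the same computation applied to $P_{{\rm f},j} = d\Gamma(M_{\pi_j})$ yields $d\Gamma$ of the conjugate of $M_{\pi_j}$, and since $\pi_j\circ R^{-1} = \sum_m (R^{-1})_{j,m}\pi_m$ (the coordinate functions transform linearly under $R$), one gets $\mathcal{U}(U)P_{{\rm f},j}\mathcal{U}(U)^* = \sum_m (R^{-1})_{j,m}P_{{\rm f},m}$, i.e.\ $R^{-1}P_{\rm f}$; again one must check that the matrix factor $R$ acting on the $\C^3$-index of $\vv$ does not interfere, which it does not because $d\Gamma$ acts on the scalar multiplication operators componentwise. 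The main obstacle, such as it is, is bookkeeping: keeping straight that the matter rotation contributes an $R^{-1}$ via the change of variables $x \mapsto Rx$ while the Fock conjugation (through $\mathcal{U}_\vv$) contributes a covariant $R$ on the $\C^3$ polarization index, and verifying that these combine to give exactly $R^{-1}$ acting on the spatial/vector index in each line; once Lemmas~\ref{lem:rotinfock}, \ref{lem:reltinvpol00} and \ref{lem:reltinvpol} are in hand, each item is a short computation.
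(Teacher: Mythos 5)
Your proposal is correct and follows essentially the same route as the paper: (a)--(c) are elementary change-of-variable/defining-property-of-$\pi$ computations; (d)--(f) combine part (a) with Lemma~\ref{lem:rotinfock}; and (g)--(h) use Lemma~\ref{lem:reltinvpol00} together with the identity \eqref{eq:GammdGammaGamma} for $d\Gamma$ to reduce to $M_{f\circ R^{-1}}$ with $f=\omega$ or $f=\pi_j$. You merely fill in the details that the paper dismisses as "straightforward to verify."
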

\begin{proof}
Parts (a), (b), and (c)  are straight forward to verify. Parts (d)-(f) follow from (a) and Lemma~\ref{lem:rotinfock}.
Next we show (g)  and (h). Using Lemma \ref{lem:reltinvpol00} and the identity \eqref{eq:GammdGammaGamma} we find for any measurable $f : \R^3 \to \R$ and  $U \in SU(2)$ with $R = \pi(U)$
 \begin{align*} \mathcal{U}(U) d \Gamma(M_f)  \mathcal{U}(U)^*
&  = d \Gamma( \tau_\varepsilon^{-1}  \UU_\mathfrak{v}(\pi(U)) \tau_\varepsilon  M_f  \tau_\varepsilon^{-1}  \UU_\mathfrak{v}^*(\pi(U)) \tau_\varepsilon ) \\
& =  d \Gamma( \tau_\varepsilon^{-1}  \UU_\mathfrak{v}(R) F^{-1}   M_{f  }  F   \UU_\mathfrak{v}^*(R) \tau_\varepsilon ) \\
& =  d \Gamma( \tau_\varepsilon^{-1}   F^{-1}   M_{f \circ R^{-1} }  F   \tau_\varepsilon ) \\
& =  d \Gamma(   M_{f \circ R^{-1} }  )
\end{align*}
 Now choosing $f = \omega$ or $f : k \mapsto   k_j$ Parts (g) and (h) follow.
\end{proof}

In the following proposition we give a formula for the action of the rotation transformation in $\gggg$.

\begin{proposition}\label{prop:ActionOfRotation} For $R \in SO(3)$ define
\begin{align*}
\mathcal{D}^\mathcal{U}_{\lambda,\lambda'}(R;k) & := ( R^{-1} \varepsilon(k,\lambda) ) \cdot \varepsilon(R^{-1} k , \lambda')  .
\end{align*}
Then for $R \in SO(3)$
\begin{align}
\mathcal{D}^\mathcal{U}_{\lambda,\lambda'}(R^{-1};k) & = \mathcal{D}^\mathcal{U}_{\lambda',\lambda}(R; R k) \label{eq:relamongD}
\end{align}
and the following holds.
\begin{enumerate}[label=(\alph*)]
\item   \label{rotparta}  For any $h \in \gggg$
\begin{align} \label{eq:defofrotrep}
(\UU_{\gggg}(R) h)(k,\lambda) & =  \sum_{\lambda'=1,2}
\mathcal{D}^\mathcal{U}_{\lambda,\lambda'}(R;k)
h(R^{-1} k, \lambda').
\end{align}
\item \label{rotpartb}  In the sense of operator valued distributions for all $(k,\lambda) \in \R^3 \times \Z_2$
\begin{align*}
\mathcal{U}_{\rm f}(R)   a^\#(k,\lambda)    \mathcal{U}_{\rm f}(R)^*  =  \sum_{\lambda'=1,2}
\mathcal{D}^\mathcal{U}_{\lambda,\lambda'}(R^{-1} ;  k)  a^\#(R k,\lambda')
\end{align*}
\end{enumerate}
\end{proposition}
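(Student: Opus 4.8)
The statement has three parts: the relation \eqref{eq:relamongD} among the matrices $\mathcal{D}^\mathcal{U}$, the explicit formula \ref{rotparta} for $\UU_{\gggg}(R)$ in the photon variables, and the operator-valued-distribution consequence \ref{rotpartb}. The natural order is to establish \ref{rotparta} first, deduce \eqref{eq:relamongD} as a short algebraic check, and then obtain \ref{rotpartb} by combining \ref{rotparta} with \eqref{eq:GammaasharpGamma}.

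For \ref{rotparta}, I would unwind the definition $\UU_{\gggg}(R) = \tau_\varepsilon^{-1}\UU_{\mathfrak{v}}(R)\tau_\varepsilon$ using the explicit formulas for $\tau_\varepsilon$ and $\tau_\varepsilon^{-1}$ from Lemma~\ref{lem:trafohtov}. Concretely, given $h \in \gggg$, first apply $\tau_\varepsilon$ to get the vector field with $j$-th component $F^{-1}\sum_{\lambda'} \varepsilon_j(\cdot,\lambda') h(\cdot,\lambda')$; then apply $\UU_{\mathfrak{v}}(R)$, which acts as $(\UU_{\mathfrak{v}}(R)v)(x) = R\,v(R^{-1}x)$ — on the Fourier side this conjugates $F^{-1}$ by $T_R$ via Lemma~\ref{lem:reltinvpol00}\eqref{eq:derink000}, producing the momentum-space factor $h(R^{-1}k,\lambda')$ together with the rotation $R$ acting on the $\C^3$-index; finally apply $\tau_\varepsilon^{-1}$, i.e. contract with $\varepsilon(k,\lambda)$. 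Collecting the $\C^3$ contractions gives the scalar $\varepsilon(k,\lambda)\cdot R\,\varepsilon(R^{-1}k,\lambda') = (R^{-1}\varepsilon(k,\lambda))\cdot\varepsilon(R^{-1}k,\lambda') = \mathcal{D}^\mathcal{U}_{\lambda,\lambda'}(R;k)$, which is exactly \eqref{eq:defofrotrep}. The only mild care needed is keeping track of where $R$ versus $R^{-1}$ appears when pushing $T_R$ through $F$ and through the argument of $\varepsilon$.

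The relation \eqref{eq:relamongD} is then a direct computation: writing out $\mathcal{D}^\mathcal{U}_{\lambda,\lambda'}(R^{-1};k) = (R\,\varepsilon(k,\lambda))\cdot\varepsilon(Rk,\lambda')$ and comparing with $\mathcal{D}^\mathcal{U}_{\lambda',\lambda}(R;Rk) = (R^{-1}\varepsilon(Rk,\lambda'))\cdot\varepsilon(R^{-1}Rk,\lambda) = (R^{-1}\varepsilon(Rk,\lambda'))\cdot\varepsilon(k,\lambda)$, both equal $\varepsilon(k,\lambda)^T R^T \varepsilon(Rk,\lambda')$ since $R^{-1} = R^T$ and the dot product is symmetric. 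For \ref{rotpartb}, I would test against $a^*(f)$: since $\UU_{\rm f}(R) = \Gamma(\UU_{\gggg}(R))$, Eq.~\eqref{eq:GammaasharpGamma} gives $\UU_{\rm f}(R)\,a^*(f)\,\UU_{\rm f}(R)^* = a^*(\UU_{\gggg}(R)f)$; writing $a^*(f)$ as the smeared distribution $\sum_{\lambda}\int f(k,\lambda)a^*(k,\lambda)\,dk$ and using \ref{rotparta} to expand $\UU_{\gggg}(R)f$, a change of variables $k \to Rk$ in the integral isolates the distributional kernel identity, with the matrix evaluated at $R^{-1}$ rather than $R$ precisely because of the change of variables (and \eqref{eq:relamongD} can be used to present it symmetrically). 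The annihilation case follows by taking adjoints, using anti-linearity of $f \mapsto a(f)$.

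\textbf{Main obstacle.} There is no deep difficulty here; the statement is essentially a bookkeeping exercise. The one place to be careful is the consistent handling of the double occurrence of inverses — the rotation $R$ acting geometrically on $\C^3$ versus on the spatial/momentum argument, compounded by the change of variables in part \ref{rotpartb} — so that the final formulas carry $R^{-1}$ in exactly the claimed slots. I would therefore write the chain of equalities in \ref{rotparta} very explicitly, since all subsequent claims are corollaries of getting that one computation right.
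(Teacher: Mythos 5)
Your proposal is correct and follows essentially the same route as the paper: unwind $\UU_{\gggg}(R)=\tau_\varepsilon^{-1}\UU_{\vv}(R)\tau_\varepsilon$ using Lemma~\ref{lem:trafohtov} and the commutation of $T_R$ with the Fourier transform from Lemma~\ref{lem:reltinvpol00} to get (a), verify \eqref{eq:relamongD} by a direct orthogonality computation, and obtain (b) by smearing $a^\#$, applying \eqref{eq:GammaasharpGamma} and (a), and changing variables $k\to Rk$. The only difference is cosmetic ordering (the paper states \eqref{eq:relamongD} before (a), but neither depends on the other), so the approaches coincide.
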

\begin{proof} 
Equation  \eqref{eq:relamongD}  follows from a straight forward calculation
using that the elements of $SO(3)$ preserve the inner product.  Now we  prove (a).
Using  the  property \eqref{eq:derink0} of
the Fourier transform, we find
 \begin{align*}
 (\UU_{\gggg}(R) h)({k},\lambda) & = \varepsilon({k},\lambda) \cdot  {F}  \left( {F}^{-1}
 \sum_{\lambda'=1,2}  R\varepsilon(\cdot, \lambda') h(\cdot,\lambda') \right)(R^{-1}{k}) \\
 & = \sum_{\lambda'=1,2} \varepsilon({k},\lambda) \cdot R \varepsilon( R^{-1} {k},\lambda')  h( R^{-1} {k},\lambda') .
 \end{align*}
(b)
We have by linearity and (a)
\begin{align*}
&  \sum_{\lambda=1,2} \int_{\R^3}  h(k,\lambda) \UU_{\rm f}(R)  a^*(k,\lambda)   \UU_{\rm f}^*(R)  dk    = \UU_{\rm f}(R) a^*(h) \UU_{\rm f}^*(R)=  a^*(\UU_{\gggg}(R) h) \\
& =\sum_{\lambda=1,2}  \int_{\R^3}   (\UU_{\gggg}(R) h)(k,\lambda)  a^*(k,\lambda) dk  \\
& =
 \sum_{\lambda, \lambda'=1,2} \int_{\R^3}
\mathcal{D}^\mathcal{U}_{\lambda,\lambda'}(R;k)
h(R^{-1} k, \lambda') a^*(k,\lambda)   dk    \\
& =
 \sum_{\lambda, \lambda'=1,2} \int_{\R^3}
\mathcal{D}^\mathcal{U}_{\lambda',\lambda}(R; R k)
h(k, \lambda) a^*(Rk, \lambda')  dk .
\end{align*}

Since $h \in \gggg$ is arbitrary the claim follows for $a^*(k,\lambda)$ in view of   \eqref{eq:relamongD}. Taking adjoints the claim then follows also for $a(k,\lambda)$.
\end{proof}

\subsection{Parity Symmetry}

Parity is the operation $x \mapsto - x$.
On the particle space we define
$$
\mathcal{P}_{{\rm p},s} :  L^2(\R^3; \mathcal{D}_s) \to L^2(\R^3;   \mathcal{D}_s ) , \quad  \psi \mapsto ( x \mapsto  \psi(-x) )
$$
for $s=0,1/2$.
 On the photon space we define
$$
\mathcal{P}_\mathfrak{v} :  \mathfrak{v} \to \mathfrak{v} , \quad v \mapsto ( x \mapsto  - v(-x) ) ,
$$
and  $$\mathcal{P}_{\gggg} = \tau_\varepsilon^{-1} \mathcal{P}_\mathfrak{v} \tau_\varepsilon  .
$$
We define
\begin{align*}
\PP_{\rm mat} & = \bigotimes_{j=1}^N \PP_{{\rm p},s_j} , \\
\PP_{\rm f} & =  \Gamma(\PP_{\mathfrak{g}}) ,  \\
\PP&= \PP_{\rm mat}  \otimes \PP_{\rm f}  .
\end{align*}

\begin{proposition}\label{prop:comm0} The maps $\PP_{\rm mat}$,  $\PP_{\rm f}$ and $\PP$ are unitary and commute with
the representations  $\mathcal{U}_{\rm mat   }$, $\mathcal{U}_{\rm f}$ and  $\mathcal{U}$, respectively.
\end{proposition}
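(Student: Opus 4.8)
The plan is to reduce each assertion to the one-particle and one-photon level, where everything is an immediate computation, and then to propagate it upward through tensor products, the conjugation by $\tau_\varepsilon$, and second quantization.

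\emph{Unitarity first.} Since $x\mapsto -x$ is a measure-preserving linear bijection of $\R^3$, the operator $\PP_{{\rm p},s}$ is unitary on $L^2(\R^3;\mathcal{D}_s)$, and hence $\PP_{\rm mat}=\bigotimes_{j=1}^N\PP_{{\rm p},s_j}$ is unitary as a tensor product of unitaries. Likewise $\PP_\mathfrak{v}$ is norm-preserving and maps $\mathfrak{v}$ into itself --- in Fourier variables it sends $\widehat v(k)$ to $-\widehat v(-k)$, which still satisfies the divergence-free condition --- so $\PP_\mathfrak{v}$ is unitary; by Lemma~\ref{lem:trafohtov}, $\PP_\gggg=\tau_\varepsilon^{-1}\PP_\mathfrak{v}\tau_\varepsilon$ is then unitary, $\PP_{\rm f}=\Gamma(\PP_\gggg)$ is unitary because $\Gamma$ of a unitary is unitary, and finally $\PP=\PP_{\rm mat}\otimes\PP_{\rm f}$ is unitary.

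\emph{Commutation next.} On a single particle factor, for $U\in SU(2)$ with $R=\pi(U)$ one has $(\PP_{{\rm p},s}\UU_{{\rm p},s}(U)\psi)(x)=D_s(U)\psi(-R^{-1}x)=(\UU_{{\rm p},s}(U)\PP_{{\rm p},s}\psi)(x)$, the point being that the linear map $R^{-1}$ commutes with $x\mapsto-x$; taking the tensor product over $j$ gives $\PP_{\rm mat}\UU_{\rm mat}(U)=\UU_{\rm mat}(U)\PP_{\rm mat}$. Similarly, for $R\in SO(3)$ one checks $(\PP_\mathfrak{v}\UU_\mathfrak{v}(R)v)(x)=-Rv(-R^{-1}x)=(\UU_\mathfrak{v}(R)\PP_\mathfrak{v}v)(x)$, so $\PP_\mathfrak{v}$ commutes with $\UU_\mathfrak{v}(R)$; conjugating with $\tau_\varepsilon$ gives $\PP_\gggg\UU_\gggg(R)=\UU_\gggg(R)\PP_\gggg$, and applying $\Gamma$, which is multiplicative because $\Gamma(T)|_{\hh^{\otimes_s n }}=\bigotimes_{j=1}^n T$, yields $\PP_{\rm f}\UU_{\rm f}(R)=\UU_{\rm f}(R)\PP_{\rm f}$. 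Finally, with $R=\pi(U)$, $\PP\,\UU(U)=(\PP_{\rm mat}\UU_{\rm mat}(U))\otimes(\PP_{\rm f}\UU_{\rm f}(\pi(U)))=\UU(U)\,\PP$.

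There is no genuine obstacle; the statement is a chain of routine checks. The only things requiring a little care are the ordering of the layers --- commutation must be established on $\mathfrak{v}$ before conjugating by $\tau_\varepsilon$, and on $\gggg$ before applying $\Gamma$ --- and the elementary fact that every $\pi(U)\in SO(3)$ is a linear map, so that parity (reflection through the origin) commutes with all rotations; the spin part $D_s(U)$ plays no role since it leaves the spatial argument untouched.
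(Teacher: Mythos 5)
Your proof is correct and follows exactly the paper's approach: the paper states that unitarity is straightforward and that commutativity follows from the commutativity of $\PP_{\rm p}$ with $\UU_{\rm p}$ and of $\PP_{\vv}$ with $\UU_{\vv}$, which is precisely what you verify before lifting through $\tau_\varepsilon$, $\Gamma$, and the tensor product. You have simply filled in the routine computations the paper omits.
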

\begin{proof} The unitarity property is straight forward to verify. The commutativity follows from the commutativity of $\PP_{\rm p}$ with $\UU_{\rm p}$
and  $\PP_{\vv}$ with $\UU_{\vv}$, which are straight forward to verify.
\end{proof}

 \begin{lemma} \label{lemmapartitysumfield}  Suppose $\kappa(- \cdot ) = \kappa(\cdot )$. Then
\begin{align*}
\text{ (a)} \qquad & \PP_{\rm f}  A(x) \PP_{\rm f}^*  = -  A(-x) , \\
\text{ (b)}\qquad &  \PP_{\rm f} B(x) \PP_{\rm f}^*  = B(-x) , \\
\text{ (c)}\qquad &  \PP_{\rm f}  E^\perp(x) \PP_{\rm f}^*  = - E(-x)  .
\end{align*}
\end{lemma}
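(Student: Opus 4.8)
The plan is to reduce everything to the transformation law for the one-photon operator $\mathcal{P}_\vv$ acting on the coupling functions $v_{x,b}$, exactly parallel to the proof of Lemma~\ref{lem:rotinfock}. First I would compute how $\mathcal{P}_\vv$ acts on the $v_{x,b}$ of \eqref{defofvI}. Using the definition $\mathcal{P}_\vv v = (x \mapsto -v(-x))$ together with $P(-k) = P(k)$ (immediate from \eqref{defofP}) and the hypothesis $\kappa(-\cdot) = \kappa(\cdot)$, a change of variables $k \mapsto -k$ in \eqref{defofvI} should give the clean identity
\begin{align*}
(\mathcal{P}_\vv v_{x,b})(y) = - v_{x,b}(-y) = - v_{-x,b}(y) ,
\end{align*}
i.e. the parity operator simply flips the sign and sends $x \mapsto -x$. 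This is the parity analogue of \eqref{eq:trafoforho05}/\eqref{eq:trafoforho1}, and it is the only place the assumption $\kappa(-\cdot)=\kappa(\cdot)$ enters.

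Next I would transfer this to $\gggg$ and to Fock space. By Lemma~\ref{lem:reltinvpol-1}(b) we have $v_{x,b} = \tau_\varepsilon g^{(\varepsilon)}_{x,b}$, so combining the displayed identity with the definition $\mathcal{P}_\gggg = \tau_\varepsilon^{-1}\mathcal{P}_\vv\tau_\varepsilon$ yields $\mathcal{P}_\gggg g^{(\varepsilon)}_{x,b} = - g^{(\varepsilon)}_{-x,b}$. Then \eqref{eq:GammaasharpGamma} with $T = \mathcal{P}_\gggg$ (and $\mathcal{P}_{\rm f} = \Gamma(\mathcal{P}_\gggg)$) gives, for $a^\#$ either $a$ or $a^*$,
\begin{align*}
\mathcal{P}_{\rm f}\, a^\#(g^{(\varepsilon)}_{x,b})\, \mathcal{P}_{\rm f}^* = a^\#(\mathcal{P}_\gggg g^{(\varepsilon)}_{x,b}) = - a^\#(g^{(\varepsilon)}_{-x,b}) .
\end{align*}
Since $A_b(x) = a(g^{(\varepsilon)}_{x,b}) + a^*(g^{(\varepsilon)}_{x,b})$, adding the two cases immediately gives (a): $\mathcal{P}_{\rm f} A_b(x)\mathcal{P}_{\rm f}^* = -A_b(-x)$. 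For (b), since $B(x) = \nabla_x \times A(x)$, I would apply $\nabla_x \times$ to the identity $\mathcal{P}_{\rm f} A(x)\mathcal{P}_{\rm f}^* = -A(-x)$; the chain rule produces an extra sign from differentiating $A(-x)$, so the two minus signs cancel and $\mathcal{P}_{\rm f} B(x)\mathcal{P}_{\rm f}^* = B(-x)$. For (c), I would note that $E^\perp$ is built from the same coupling functions multiplied by $-i\omega$, and $\omega(-k) = \omega(k)$ by the radial assumption on the dispersion relation, so the argument of (a) goes through verbatim (the multiplication by $\omega$ commutes with $\mathcal{P}_\vv$ by Lemma~\ref{lem:reltinvpol00}(a) together with $\omega(-\cdot)=\omega(\cdot)$), giving $\mathcal{P}_{\rm f} E^\perp(x)\mathcal{P}_{\rm f}^* = -E^\perp(-x)$.

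The only genuinely delicate point is the first step: verifying $(\mathcal{P}_\vv v_{x,b})(y) = -v_{-x,b}(y)$ carefully, in particular that $P(-k)=P(k)$ combines with $\kappa(-k)=\kappa(k)$ and the sign $-1$ in the definition of $\mathcal{P}_\vv$ to give exactly one net sign change, and that the change of variables does not introduce spurious factors. Everything after that is a mechanical application of \eqref{eq:GammaasharpGamma} and, for (b), a chain-rule computation that I would state as "by calculating the curl" in the same terse style as the proof of Lemma~\ref{lem:rotinfock}(b). One small caveat: the statement of (c) as written has $E$ rather than $E^\perp$ on the right-hand side, which I read as a typo for $E^\perp$; I would prove $\mathcal{P}_{\rm f} E^\perp(x)\mathcal{P}_{\rm f}^* = -E^\perp(-x)$.
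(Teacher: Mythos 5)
Your proposal is correct and follows essentially the same route as the paper: compute $(\mathcal{P}_\vv v_{x,b})(y) = -v_{-x,b}(y)$ (the paper obtains this by citing \eqref{eq:trafoforho05} with $S=-\one_{3\times3}$, while you re-derive it directly via the change of variables $k\mapsto -k$ together with $P(-k)=P(k)$ and $\kappa(-\cdot)=\kappa(\cdot)$, which is precisely what goes into \eqref{eq:trafoforho05} anyway), push through $\tau_\varepsilon$ and $\Gamma$ using \eqref{eq:GammaasharpGamma} to get (a), obtain (b) by the curl/chain-rule sign cancellation, and obtain (c) by inserting the extra factor $-i\omega$ with $\omega(-\cdot)=\omega$. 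Your reading of $E$ in part (c) as a typo for $E^\perp$ is also correct.
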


\begin{proof}
We observe that for $S=- \one_{3\times 3}$ we find from \eqref{eq:trafoforho05}
\begin{align}
(\mathcal{P}_\mathfrak{v} v_{x,b})(y) & = - v_{x,b}(- y)  = - v_{-x,b}( y)   . \label{eq:trafoforho}
\end{align}
Now we find similar as in the proof of Lemma~\ref{lem:rotinfock} using Lemma \ref{lem:reltinvpol-1} and \eqref{eq:trafoforho}
\begin{align*}
\mathcal{P}_{\rm f } a^\#(g_{x,b}^{(\varepsilon)}) )\mathcal{P}_{\rm f}^*  & = a^\#( \mathcal{P}_{\mathfrak{g}}   g_{x,b}^{(\varepsilon)})
=  a^\#( \tau_\varepsilon^{-1} \mathcal{P}_{\mathfrak{v}}  \tau_\varepsilon  g_{x,b}^{(\varepsilon)})    \\
& =  a^\#( \tau_\varepsilon^{-1} \mathcal{P}_{\mathfrak{v}}   v_{x,b})  =
  a^\#( -  \tau_\varepsilon^{-1} v_{- x,b})  \\
& =   -  a^\#( g_{-x,b}^{(\varepsilon)})  .
\end{align*}
This implies
$$
\mathcal{P}_{\rm f} A_b(x) \mathcal{P}_{\rm f}^* =  -   A_{b}(-  x)  .
$$
Thus (a) follows. Now (b) follows from (a) and by calculating the rotation. (c) Follows similarly as in (a) observing that $\omega(-\cdot ) = \omega$.
\end{proof}

In view of the following proposition we see that  $\PP$ has the physical interpretation of
parity inversion.

 \begin{proposition} \label{trafooffieldpar}
$\PP$ has satisfies the following properties.
\begin{align*}
\text{ (a)} \qquad & \mathcal{P}  \hat{x}_j   \PP  = -\hat{x}_j , \\
\text{ (b)} \qquad &    \PP p_j  \PP^*  = -  p_j ,  \\
\text{ (c)}\qquad  &   \PP \widehat{S}_{j} \PP^*  =   \widehat{S}_{j} ,\\
\text{ (d)} \qquad & \PP A(\hat{x}_j) \PP^*  = -  A(\hat{x}_j)     , \quad \text{ if } \kappa(- \cdot ) = \kappa(\cdot ) ,  \\
\text{ (e)}\qquad &  \PP B(\hat{x}_j) \PP^*  = B(\hat{x}_j)  , \quad \text{ if } \kappa(- \cdot ) = \kappa(\cdot ) , \\
\text{ (f)}\qquad &  \PP E^\perp(\hat{x}_j) \PP^*  = - E^\perp(\hat{x}_j)  , \quad \text{ if } \kappa(- \cdot ) = \kappa(\cdot )  , \\
\text{ (g)} \qquad&  \PP H_{\rm f} \PP^*  = H_{\rm f} , \\
\text{ (h)}\qquad &  \PP P_{\rm f}  \PP^*  = - P_{\rm f} .
\end{align*}

\end{proposition}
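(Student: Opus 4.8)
The plan is to verify each of the eight identities (a)--(h) essentially term by term, reducing everything either to the already-established Lemma~\ref{lemmapartitysumfield} (for the field operators) or to the elementary action of $\PP_{\rm mat}$ on $L^2(\R^3;\DD_s)$ (for the particle operators), together with the tensor-product structure $\PP = \PP_{\rm mat} \otimes \PP_{\rm f}$ and the fact that $\PP$ is unitary with $\PP^* = \PP$ (since $\PP_{{\rm p},s}^2 = 1$, $\PP_{\vv}^2 = 1$, hence $\PP^2 = 1$).

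\medskip
\noindent\textbf{Particle operators (a)--(c).} For (a), note $\PP_{{\rm p},s}$ is the unitary $(\PP_{{\rm p},s}\psi)(x)=\psi(-x)$; a direct computation gives $\PP_{{\rm p},s}\,\hat x_j\,\PP_{{\rm p},s}^* = -\hat x_j$ on $L^2(\R^3;\DD_{s_j})$, and since $\PP_{\rm f}$ acts on a different tensor factor, (a) follows by tensoring with identities. For (b), differentiate the relation $(\PP_{{\rm p},s}\psi)(x)=\psi(-x)$: with $p_j = -i\nabla_{x_j}$ one gets $\PP_{{\rm p},s}\,p_j\,\PP_{{\rm p},s}^* = -p_j$ by the chain rule, again tensoring through. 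For (c), the operator $\widehat S_j$ acts only on the spin factors $\DD_{s_k}$ and is a constant matrix there, while $\PP_{{\rm p},s}$ acts only on the spatial variable $x$; hence they commute, giving $\PP\,\widehat S_j\,\PP^* = \widehat S_j$. These three are the ``straightforward to verify'' parts.

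\medskip
\noindent\textbf{Field operators (d)--(f).} Here I would write $A(\hat x_j)$, $B(\hat x_j)$, $E^\perp(\hat x_j)$ as operators on $\HH_{\rm mat}\otimes\FF_s(\gggg)$ obtained from $A(x), B(x), E^\perp(x)$ by substituting the multiplication operator $\hat x_j$ for the parameter $x$. Conjugating by $\PP = \PP_{\rm mat}\otimes\PP_{\rm f}$: the factor $\PP_{\rm f}$ conjugates the field part according to Lemma~\ref{lemmapartitysumfield}, producing $-A(-x)$, $B(-x)$, $-E^\perp(-x)$ respectively (here the hypothesis $\kappa(-\cdot)=\kappa(\cdot)$ is exactly the standing assumption needed to invoke that lemma), while the factor $\PP_{\rm mat}$ sends the parameter operator $\hat x_j$ to $-\hat x_j$ by part (a). Combining, the double sign change in (d) cancels to give $\PP A(\hat x_j)\PP^* = -A(-(-\hat x_j)) = -A(\hat x_j)$; likewise $B(-(-\hat x_j)) = B(\hat x_j)$ for (e) and $-E^\perp(-(-\hat x_j)) = -E^\perp(\hat x_j)$ for (f). Some care is needed to justify that the substitution of $\hat x_j$ commutes correctly with the conjugation, i.e.\ that conjugation by $\PP_{\rm f}$ and the ``plugging in $\hat x_j$'' operation genuinely commute; this is handled by working at the level of the operator-valued distribution identity $\PP_{\rm f}a^\#(g^{(\varepsilon)}_{x,b})\PP_{\rm f}^* = -a^\#(g^{(\varepsilon)}_{-x,b})$ obtained in the proof of Lemma~\ref{lemmapartitysumfield} and noting that the map $x\mapsto g^{(\varepsilon)}_{x,b}$ turns multiplication operators in $x$ into the corresponding field arguments.

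\medskip
\noindent\textbf{Field energy and momentum (g)--(h).} For these I would use $H_{\rm f} = d\Gamma(M_\omega)$ and $P_{\rm f} = d\Gamma(M_{\pi_j})$ together with the identity \eqref{eq:GammdGammaGamma}, namely $\Gamma(T)\,d\Gamma(A)\,\Gamma(T)^* = d\Gamma(TAT^*)$. Since $\PP_{\rm f} = \Gamma(\PP_{\gggg})$ and, by part (a) of Lemma~\ref{lem:reltinvpol00} together with $\PP_{\vv} = \tau_\varepsilon^{-1}(\cdot)\tau_\varepsilon$-conjugation, $\PP_{\gggg} M_f \PP_{\gggg}^* = F^{-1} M_{f\circ(-\mathrm{id})} F$ up to the sign bookkeeping in $\PP_{\vv}$ — more precisely one checks $\PP_{\gggg}$ acts on $\gggg$ so that conjugation of $M_f$ yields $M_{f(-\cdot)}$ — we get $\PP_{\rm f}\,d\Gamma(M_f)\,\PP_{\rm f}^* = d\Gamma(M_{f(-\cdot)})$. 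Choosing $f=\omega$, which satisfies $\omega(-k)=\omega(k)$, gives (g); choosing $f = \pi_j$, for which $\pi_j(-k) = -k_j = -\pi_j(k)$, gives (h). Finally $\PP H_{\rm f}\PP^*$ equals $\PP_{\rm f} H_{\rm f}\PP_{\rm f}^*$ tensored with the identity on $\HH_{\rm mat}$, since $H_{\rm f}$ acts trivially on the matter factor.

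\medskip
\noindent\textbf{Main obstacle.} The routine sign-chasing in (d)--(f) is the part most prone to error, but the genuine conceptual point — and the one I would spell out carefully — is the commutation of ``conjugation by $\PP_{\rm f}$'' with ``substitution of the multiplication operator $\hat x_j$ for the c-number parameter $x$'' in the field operators; everything else reduces mechanically to Lemma~\ref{lemmapartitysumfield}, the tensor structure, and \eqref{eq:GammdGammaGamma}.
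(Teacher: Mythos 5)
Your proposal is correct and follows essentially the same route as the paper, which simply states that the proof is analogous to that of Proposition~\ref{trafooffieldrot}: (a)--(c) are elementary, (d)--(f) come from Lemma~\ref{lemmapartitysumfield} combined with (a), and (g)--(h) come from $\PP_{\rm f} d\Gamma(M_f)\PP_{\rm f}^* = d\Gamma(M_{f(-\cdot)})$ via Lemma~\ref{lem:reltinvpol00} and \eqref{eq:GammdGammaGamma}, then specializing $f=\omega$ and $f=\pi_j$. Your write-up just makes that analogy explicit, and your remark on the substitution $x\mapsto \hat x_j$ commuting with conjugation by $\PP_{\rm f}$ is a fair clarification of a point the paper leaves implicit.
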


\begin{proof} The proof is analogous to that of Proposition \ref{trafooffieldrot}.
\end{proof}

In the following proposition we give a formula for the action of the parity  in $\gggg$.
\begin{proposition} The map  $\mathcal{P}_{\gggg}$ has the following properties.
Define
\begin{equation*}
\mathcal{D}^\mathcal{P}_{\lambda,\lambda'}({k}) := -  \varepsilon({k},\lambda) \cdot   \varepsilon(-{k},\lambda') .
\end{equation*}
Then $\mathcal{D}^\mathcal{P}_{\lambda,\lambda'}({k})=\mathcal{D}^\mathcal{P}_{\lambda',\lambda}(-{k})$.
\begin{itemize}
\item[(a)]   For any $h \in \gggg$ we have
 for almost all $({k},\lambda ) \in \R^3 \times \{1,2\}$
\begin{equation*} 
(\PP_{\gggg} h)({k},\lambda) = \sum_{\lambda'=1,2} \mathcal{D}_{\lambda, \lambda'}^\mathcal{P}({k})
h(-{k},\lambda') .
\end{equation*}

\item[(b)] We  have in the sense of operator valued distributions for all $(k,\lambda) \in \R^3 \times \Z_2$

$$ \PP_{\rm f} a^\#(k,\lambda) \PP_{\rm f }^* = \sum_{\lambda'=1,2} \mathcal{D}^\mathcal{P}_{\lambda,\lambda'}({k})  a^\#(-k,\lambda')$$
\end{itemize}
\end{proposition}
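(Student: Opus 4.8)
The plan is to reproduce the proof of Proposition~\ref{prop:ActionOfRotation} with the rotation $R$ replaced by the reflection $S = -\one_{3\times 3}$; indeed $\PP_\vv$ is precisely the $S = -\one_{3\times 3}$ instance of the map $v\mapsto(x\mapsto S v(S^{-1}x))$, as was already used in the proof of Lemma~\ref{lemmapartitysumfield}. The relation $\mathcal{D}^\mathcal{P}_{\lambda,\lambda'}(k) = \mathcal{D}^\mathcal{P}_{\lambda',\lambda}(-k)$ is immediate: replacing $k$ by $-k$ and simultaneously interchanging $\lambda$ and $\lambda'$ in the definition only swaps the two factors of the Euclidean inner product $\varepsilon(k,\lambda)\cdot\varepsilon(-k,\lambda')$, which is symmetric.

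For part (a) I would insert the formulas of Lemma~\ref{lem:trafohtov}. Since $\PP_\gggg = \tau_\varepsilon^{-1}\PP_\vv\tau_\varepsilon$ and $(\tau_\varepsilon^{-1}v)(k,\lambda) = \varepsilon(k,\lambda)\cdot(Fv)(k)$, one has $(\PP_\gggg h)(k,\lambda) = \varepsilon(k,\lambda)\cdot\bigl(F[\PP_\vv\tau_\varepsilon h]\bigr)(k)$. Writing $\PP_\vv = -T_S$ with $S = -\one_{3\times 3}$ in the notation of Lemma~\ref{lem:reltinvpol00}(b), using $FT_S = T_S F$ from \eqref{eq:derink0}, and combining with the componentwise identity $(F\tau_\varepsilon h)(k) = \sum_{\lambda'}\varepsilon(k,\lambda')h(k,\lambda')$ that is read off directly from the definition of $\tau_\varepsilon$, I obtain $\bigl(F[\PP_\vv\tau_\varepsilon h]\bigr)(k) = -\sum_{\lambda'}\varepsilon(-k,\lambda')h(-k,\lambda')$; taking the inner product with $\varepsilon(k,\lambda)$ then yields exactly $\sum_{\lambda'}\mathcal{D}^\mathcal{P}_{\lambda,\lambda'}(k)h(-k,\lambda')$.

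For part (b) I would argue exactly as in Proposition~\ref{prop:ActionOfRotation}(b). Since $\PP_{\rm f} = \Gamma(\PP_\gggg)$, identity \eqref{eq:GammaasharpGamma} gives $\PP_{\rm f}a^*(h)\PP_{\rm f}^* = a^*(\PP_\gggg h)$ for every $h\in\gggg$. Expanding $a^*(h) = \sum_\lambda\int h(k,\lambda)\,a^*(k,\lambda)\,dk$ on the left and $a^*(\PP_\gggg h) = \sum_{\lambda,\lambda'}\int\mathcal{D}^\mathcal{P}_{\lambda,\lambda'}(k)h(-k,\lambda')\,a^*(k,\lambda)\,dk$ on the right via part (a), I would substitute $k\mapsto -k$ in the right-hand integral and invoke the symmetry relation $\mathcal{D}^\mathcal{P}_{\lambda,\lambda'}(-k) = \mathcal{D}^\mathcal{P}_{\lambda',\lambda}(k)$ followed by a relabelling $\lambda\leftrightarrow\lambda'$; this turns the right-hand side into $\sum_\lambda\int h(k,\lambda)\sum_{\lambda'}\mathcal{D}^\mathcal{P}_{\lambda,\lambda'}(k)\,a^*(-k,\lambda')\,dk$. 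As $h\in\gggg$ is arbitrary, the stated formula for $a^*(k,\lambda)$ follows, and the one for $a(k,\lambda)$ then follows by taking adjoints, using that each $\mathcal{D}^\mathcal{P}_{\lambda,\lambda'}(k)$ is real and $\PP_{\rm f}$ is unitary.

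There is no genuine obstacle; the argument is entirely parallel to the rotation case. The only points that need a little care are the Fourier-transform sign bookkeeping in part (a) and making sure that in part (b) the change of variables $k\mapsto -k$ is paired with the correct index symmetry, so that the distributional identity comes out with $a^\#(-k,\lambda')$ on the right rather than $a^\#(k,\lambda')$.
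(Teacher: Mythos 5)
Your proof is correct and follows essentially the same route as the paper: part (a) via the explicit formula for $\tau_\varepsilon^{-1}\PP_\vv\tau_\varepsilon$ together with \eqref{eq:derink0}, and part (b) via \eqref{eq:GammaasharpGamma}, the expansion of $a^*(h)$ as a distributional integral, a change of variables $k\mapsto-k$, and the index symmetry of $\mathcal{D}^\mathcal{P}$. Your additional remark that the adjoint step uses the realness of $\mathcal{D}^\mathcal{P}_{\lambda,\lambda'}$ and unitarity of $\PP_{\rm f}$ is a correct (and in the paper implicit) detail.
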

\begin{proof} 
The first statement follows from the symmetry of the scalar product.
 (a)
Using    \eqref{eq:derink0}, we find
 \begin{align*}
 (\PP_{\gggg} h)({k},\lambda) & = \varepsilon({k},\lambda) \cdot  {F}  \left( {F}^{-1}
 \sum_{\lambda'=1,2} ( - \varepsilon(\cdot, \lambda'))  h(\cdot,\lambda') \right)(-{k}) \\
 & =  \sum_{\lambda'=1,2}  (- \varepsilon({k},\lambda) ) \cdot \varepsilon(-{k},\lambda')  h(-{k},\lambda') .
 \end{align*}
(b) We have by linearity and  (a)
\begin{align*}
&  \sum_{\lambda=1,2} \int h(k,\lambda) \PP_{\rm f}  a^*(k,\lambda)   \PP_{\rm f}^*  dk    = \PP_{\rm f} a^*(h) \PP_{\rm f}^* =  a^*(\PP_{\gggg} h) \\
& =\sum_{\lambda=1,2} \int_{\R^3}   (\PP_{\gggg} h)(\lambda,k)  a^*(k,\lambda)  dk \\
& =
 \sum_{\lambda, \lambda'=1,2} \int_{\R^3}    \mathcal{D}_{\lambda, \lambda'}^\mathcal{P}({k})a^*(k,\lambda)
h(-{k},\lambda') dk \\
& = \sum_{\lambda, \lambda'=1,2} \int_{\R^3}  \mathcal{D}_{\lambda', \lambda}^\mathcal{P}(-{k})a^*(-k,\lambda')h({k},\lambda') dk
\end{align*}
Since $h \in \gggg$ is arbitrary the claim follows for $a^*(k,\lambda)$. Taking adjoints the claim then follows also for $a(k,\lambda)$.
\end{proof}

\subsection{Time reversal symmetry}

\label{subsecteim}

We define  time reversal symmetry.
Let $K$ denote complex conjugation on $L^2(\R^3; \mathcal{D}_s)$.
Define the   operators
\begin{align*}
&
   \mathcal{T}_{{\rm p},s} := \left\{ \begin{array}{ll}  K     & , \quad \text{ if } \quad  s = 0  , \\
  (K   \sigma_2 )   &  , \quad \text{ if }  \quad s = 1/2   \end{array} \right.
 \end{align*}
and
\begin{align*}
&
   \mathcal{T}_{\rm mat} :=  \bigotimes_{j=1}^N    \mathcal{T}_{{\rm p},s_j} .
 \end{align*}
Let $\mathcal{K}_{\mathfrak{v}}$ denote complex conjugation in $\mathfrak{v}$,  and  let 
\begin{equation} \label{eq:defofkh}
\mathcal{K}_{\gggg} =   \tau_\varepsilon^{-1}  \mathcal{K}_{\mathfrak{v}} \tau_\varepsilon
\end{equation}
denote its action on  $\gggg$.
 Next we define operator of time reversal on the quantum field
\begin{equation} \label{eq:defoftf}
\mathcal{T}_{\rm f} := \Gamma(- \mathcal{K}_\mathfrak{g}) . 
\end{equation}
We define the operator of time reversal in  the full Hilbert space by
\begin{equation} \label{eq:defoftimerev}
\mathcal{T} = \mathcal{T}_{\rm mat} \otimes \mathcal{T}_{\rm f} .
\end{equation}

\begin{proposition}  \label{lem:propoftat}\label{prop:comm}
 The maps  $\mathcal{T}_{\rm mat}$, $\TT_{\rm f}$, and $\TT$ are anti-unitary operators, which commute with the representations $\UU_{\rm mat}$, $\UU_{\rm f}$, and $\UU$
and the operators $\PP_{\rm mat}$, $\PP_{\rm f}$, and $\PP$, respectively. We have  $\TT_{\rm f}^2 = 1$, and
\begin{align*}
& \TT_{\rm mat}^2 = (-1)^{\sum_{j=1}^N 2 s_j}  , \qquad \qquad\, \TT^2 =  (-1)^{\sum_{j=1}^N 2 s_j}  .
\end{align*}
\end{proposition}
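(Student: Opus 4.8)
The plan is to reduce every assertion to the one-particle building blocks and then propagate the information through tensor products and through the functor $\Gamma$. For anti-unitarity: each $\TT_{{\rm p},s_j}$ equals $K$ or $K\sigma_2$, hence is anti-unitary (a composition of the anti-unitary $K$ with a unitary), so $\TT_{\rm mat}=\bigotimes_{j=1}^N \TT_{{\rm p},s_j}$ is anti-unitary by Lemma~\ref{lem:antiline}~(a). Likewise $\mathcal{K}_\vv$ is anti-unitary, hence so are $\mathcal{K}_\gggg=\tau_\varepsilon^{-1}\mathcal{K}_\vv\tau_\varepsilon$ and $-\mathcal{K}_\gggg$; since $\Gamma$ sends symmetries to symmetries (as recorded in the text just after its definition), $\TT_{\rm f}=\Gamma(-\mathcal{K}_\gggg)$ is anti-unitary, and then $\TT=\TT_{\rm mat}\otimes\TT_{\rm f}$ is anti-unitary by Lemma~\ref{lem:antiline}~(a) again.

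For the commutation relations I would use two structural facts. First, $\Gamma(A)\Gamma(B)=\Gamma(AB)$ whenever $A,B$ are unitary or anti-unitary, since on each $n$-particle sector both sides act as $\bigotimes_{j=1}^n(AB)$. Second, $\UU_\gggg(R)$, $\PP_\gggg$ and $\mathcal{K}_\gggg$ are all obtained from $\UU_\vv(R)$, $\PP_\vv$ and $\mathcal{K}_\vv$ by conjugation with the fixed unitary $\tau_\varepsilon$, so any commutation proven on $\vv$ transfers verbatim to $\gggg$ and hence, after applying $\Gamma$, to Fock space. On $\vv$ one checks directly that complex conjugation $\mathcal{K}_\vv$ commutes with $\UU_\vv(R)\colon v\mapsto Rv(R^{-1}\,\cdot\,)$ and with $\PP_\vv\colon v\mapsto -v(-\,\cdot\,)$, because $R$ and $-\mathrm{id}$ act as real-linear maps of the argument; this yields $\TT_{\rm f}\UU_{\rm f}(R)=\UU_{\rm f}(R)\TT_{\rm f}$ and $\TT_{\rm f}\PP_{\rm f}=\PP_{\rm f}\TT_{\rm f}$. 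On the one-particle matter space, write $\UU_{{\rm p},s}(U)$ as the composition of multiplication by $D_s(U)$ with the real reparametrization $\psi\mapsto\psi(\pi(U)^{-1}\,\cdot\,)$; the reparametrization and the parity map $\PP_{{\rm p},s}$ both commute with $K$ and with $\sigma_2$ (which touches only the spinor index), hence with $\TT_{{\rm p},s}$, and it remains only to check that $\TT_{{\rm p},s}$ commutes with multiplication by $D_s(U)$. For $s=0$ this is trivial, and for $s=\tfrac12$ it is the identity $\sigma_2\overline{U}\sigma_2^{-1}=U$ for $U\in SU(2)$, which follows from the compatibility relations \eqref{compspin} together with $\sigma_2\sigma_l\sigma_2^{-1}=-\overline{\sigma_l}$. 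Tensoring over $j$ gives $\TT_{\rm mat}\UU_{\rm mat}(U)=\UU_{\rm mat}(U)\TT_{\rm mat}$ and $\TT_{\rm mat}\PP_{\rm mat}=\PP_{\rm mat}\TT_{\rm mat}$, and tensoring the matter and field statements gives the corresponding relations for $\TT$, $\UU$, $\PP$.

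For the squares, on the field side $\TT_{\rm f}^2=\Gamma(-\mathcal{K}_\gggg)\Gamma(-\mathcal{K}_\gggg)=\Gamma\bigl((-\mathcal{K}_\gggg)^2\bigr)$; here $(-\mathcal{K}_\gggg)^2=\mathcal{K}_\gggg^2$ because pulling the scalar $-1$ through the anti-linear $\mathcal{K}_\gggg$ produces a compensating $-1$, and $\mathcal{K}_\gggg^2=\tau_\varepsilon^{-1}\mathcal{K}_\vv^2\tau_\varepsilon=1$, so $\TT_{\rm f}^2=\Gamma(1)=1$. On the matter side, $\TT_{{\rm p},0}^2=K^2=1$ and $\TT_{{\rm p},1/2}^2=(K\sigma_2)^2=\overline{\sigma_2}\,\sigma_2=-\sigma_2^2=-1$, i.e.\ $\TT_{{\rm p},s}^2=(-1)^{2s}$ in both cases; taking the tensor product over $j$ pulls the scalars out and gives $\TT_{\rm mat}^2=(-1)^{\sum_{j=1}^N 2 s_j}$. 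Finally $\TT^2=(\TT_{\rm mat}\otimes\TT_{\rm f})^2=\TT_{\rm mat}^2\otimes\TT_{\rm f}^2=(-1)^{\sum_{j=1}^N 2 s_j}$.

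Essentially all of this is routine verification; the point requiring genuine care is the consistent bookkeeping of sign flips and complex conjugations through anti-linear operators — in particular the composition law $\Gamma(A)\Gamma(B)=\Gamma(AB)$ and the cancellation $(-\mathcal{K}_\gggg)^2=\mathcal{K}_\gggg^2$ for anti-unitary arguments, and the verification that $\TT_{{\rm p},1/2}=K\sigma_2$ really does intertwine the spin-$\tfrac12$ rotation representation with itself (equivalently, reverses all three spin components $\hat s_1,\hat s_2,\hat s_3$). No hypotheses on $\kappa$, $\omega$, $V$ or $B_{\rm ext}$ are needed, since the statement concerns only the symmetry operators themselves.
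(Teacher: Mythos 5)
Your proof is correct and follows essentially the same strategy as the paper's: verify anti-unitarity and commutation on the one-particle building blocks, then propagate via Lemma~\ref{lem:antiline}~(a) through tensor products and via $\Gamma$, and compute the squares directly from $(K\sigma_2)^2=-1$. You supply the details the paper compresses into ``straight forward to verify'' — in particular the key identity $\sigma_2\overline{U}\sigma_2^{-1}=U$ for $U\in SU(2)$ underlying the commutation of $\TT_{{\rm p},1/2}$ with $\UU_{{\rm p},1/2}(U)$, the composition law $\Gamma(A)\Gamma(B)=\Gamma(AB)$, and the cancellation $(-\mathcal{K}_\gggg)^2=\mathcal{K}_\gggg^2$ — but these are the verifications the paper implicitly asks the reader to carry out, so the route is the same, just written out in full.
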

\begin{proof} The anti-unitarity is straight forward to verify on the one particle spaces. On the tensor product it then follows by Lemma   \ref{lem:antiline}.
The commutativity can be seen by verifying it   on the one particle spaces.  The last statement follows
from 
$$
\mathcal{T}_{\rm mat}^2 = \bigotimes_{j=1}^N  (  \mathcal{T}_{{\rm mat},s_j})^2 
$$
with  $(  \mathcal{T}_{{\rm mat},0})^2 = 1$ and  $(  \mathcal{T}_{{\rm mat},1/2})^2  = ( K \sigma_2 ) ( K \sigma_2) = K^2 \sigma_2 (-\sigma_2) =  -1$,
\end{proof}

 \begin{lemma} \label{lem:lemmatime}  Suppose $ \overline{ \kappa( \cdot )}  = \kappa(- \cdot ) $. Then the following holds
\begin{align*}
\text{ (d)} \qquad & \TT_{\rm f} A(x) \TT_{\rm f}^*  = -  A(x) ,  \\
\text{ (e)}\qquad &  \TT_{\rm f} B(x) \TT_{\rm f}^*  = - B(x) ,  \\
\text{ (e)}\qquad &  \TT_{\rm f} E^\perp(x) \TT_{\rm f}^*  =  E^\perp(x)  .
\end{align*}
\end{lemma}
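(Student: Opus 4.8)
The plan is to reduce all three identities to the intertwining relation \eqref{eq:GammaasharpGamma} applied to the anti-unitary operator $\TT_{\rm f}=\Gamma(-\KK_\gggg)$, the only delicate point being the interplay between the anti-linearity of $-\KK_\gggg$ and the scalar factors ($-1$, resp.\ $-i\omega$) built into the definitions of $A$, $B$ and $E^\perp$.

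The single ingredient that actually uses the hypothesis is that the vector fields $v_{x,b}$ of \eqref{defofvI} are real-valued, i.e.\ $\KK_\vv v_{x,b}=v_{x,b}$. To see this I would conjugate the integral in \eqref{defofvI} and substitute $k\mapsto-k$; since $P(-k)=P(k)$ by \eqref{defofP} and $\omega(-k)=\omega(k)$ by the choice of $\omega$, the only change is $\overline{\kappa(k)}\mapsto\kappa(-k)$, and the assumption $\overline{\kappa(\cdot)}=\kappa(-\cdot)$ (which is \eqref{eq:assonkappa}) closes the loop. Combining this with Lemma~\ref{lem:reltinvpol-1}(b) and the definition \eqref{eq:defofkh} of $\KK_\gggg$ gives
\[
\KK_\gggg g^{(\varepsilon)}_{x,b}
=\tau_\varepsilon^{-1}\KK_\vv\tau_\varepsilon g^{(\varepsilon)}_{x,b}
=\tau_\varepsilon^{-1}\KK_\vv v_{x,b}
=\tau_\varepsilon^{-1} v_{x,b}
=g^{(\varepsilon)}_{x,b}.
\]
I would also record that $\KK_\gggg$ commutes with $M_\omega$: since $\omega$ is real and radial, $F^{-1}M_\omega F$ maps real vector fields to real vector fields, hence commutes with $\KK_\vv$, and the claim follows from Lemma~\ref{lem:reltinvpol00}(a).

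Given these two facts the identities fall out. For $A$, by \eqref{eq:GammaasharpGamma} with $T=-\KK_\gggg$ and the above,
\[
\TT_{\rm f}A_l(x)\TT_{\rm f}^*
= a\bigl((-\KK_\gggg)g^{(\varepsilon)}_{x,l}\bigr)+a^*\bigl((-\KK_\gggg)g^{(\varepsilon)}_{x,l}\bigr)
= a\bigl(-g^{(\varepsilon)}_{x,l}\bigr)+a^*\bigl(-g^{(\varepsilon)}_{x,l}\bigr)
= -A_l(x),
\]
since $a^*$ is linear and $a$ anti-linear in the smearing function, so both pick up the factor $-1$. For $E^\perp_l(x)=a(-i M_\omega g^{(\varepsilon)}_{x,l})+a^*(-i M_\omega g^{(\varepsilon)}_{x,l})$ the same computation applies, with $(-\KK_\gggg)(-i M_\omega g^{(\varepsilon)}_{x,l})=\overline{(-i)}\,(-\KK_\gggg)(M_\omega g^{(\varepsilon)}_{x,l})=i\,(-M_\omega\KK_\gggg g^{(\varepsilon)}_{x,l})=-i M_\omega g^{(\varepsilon)}_{x,l}$, so the smearing function is unchanged and $\TT_{\rm f}E^\perp_l(x)\TT_{\rm f}^*=E^\perp_l(x)$. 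Finally, for $B$ I would use $B(x)=\nabla_x\times A(x)$: as $\TT_{\rm f}$ acts only on Fock space it commutes with the curl in the real parameter $x$ (a limit of difference quotients with real increments and real coefficients $\epsilon_{lmn}$), so $\TT_{\rm f}B_l(x)\TT_{\rm f}^*=[\nabla_x\times(\TT_{\rm f}A(x)\TT_{\rm f}^*)]_l=[\nabla_x\times(-A(x))]_l=-B_l(x)$; this is the analogue of the ``calculating the rotation'' step in Lemmas~\ref{lem:rotinfock} and~\ref{lemmapartitysumfield}.

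The main obstacle is the verification $\KK_\vv v_{x,b}=v_{x,b}$, which is exactly where the condition $\overline{\kappa}=\kappa(-\cdot)$ is used; everything after that is bookkeeping of signs and factors of $i$. It is also worth carrying out the conjugation through $\vv$ rather than directly on $\gggg$: on $\gggg$ the map $\KK_\gggg$ mixes the two polarization labels and involves $\varepsilon(-k,\cdot)$, whereas $\KK_\vv$ is plain complex conjugation, which keeps the argument transparent.
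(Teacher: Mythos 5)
Your argument is correct and mirrors the paper's proof step for step: the reality of $v_{x,b}$ (i.e.\ $\KK_\vv v_{x,b}=v_{x,b}$, the only place the hypothesis $\overline{\kappa(\cdot)}=\kappa(-\cdot)$ enters), the intertwining identity \eqref{eq:GammaasharpGamma} with $T=-\KK_\gggg$ to produce the sign for $A$, the curl for $B$, and the sign flip of $i\omega$ under conjugation for $E^\perp$. Your write-up merely spells out a few points the paper leaves implicit (e.g.\ that $\KK_\gggg$ commutes with $M_\omega$, and the change of variables showing $\KK_\vv v_{x,b}=v_{x,b}$), so there is nothing to add.
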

\begin{proof}
It follows directly from the definition,  a trivial change of variables, and the assumption about $\kappa$  that
\begin{align}
(\mathcal{K}_\mathfrak{v} v_{x,b})(y) & =  v_{x,b}( y)\,.
\end{align}
Now we find using  Lemma~\ref{lem:reltinvpol-1}
\begin{align*}
 \Gamma(- \mathcal{K}_\gggg)   a^*(g_{x,b}^{(\varepsilon)})  \Gamma(- \mathcal{K}_\gggg)^*  & = a^*( - \mathcal{K}_{\gggg}   g_{x,b}^{(\varepsilon)})
=  -a^*( \tau_\varepsilon^{-1} \mathcal{K}_{\mathfrak{v}}  \tau_\varepsilon  g_{x,b}^{(\varepsilon)})    \\
& = - a^*( \tau_\varepsilon^{-1} \mathcal{K}_{\mathfrak{v}}   v_{x,b})  =
  - a^*(   \tau_\varepsilon^{-1} v_{x,b})  \\
& =   -  a^*( g_{x,b}^{(\varepsilon)})
\end{align*}
This implies  $\mathcal{T}_{\rm f} a^*(g_{x,b}^{(\varepsilon)})   \mathcal{T}_{\rm f}^* = - a^*( g_{x,b}^{(\varepsilon)})  $ and by taking adjoints
 $\mathcal{T}_{\rm f} a(g_{x,b}^{(\varepsilon)})   \mathcal{T}_{\rm f}^* = - a( g_{x,b}^{(\varepsilon)})  $.
 Hence
$$
\mathcal{T}_{\rm f}A_b(x) \mathcal{T}_{\rm f}^* =  -   A_{b}(  x)  .
$$
This shows  (a). Now (b) follows from (a) and by calculating the rotation. (c) Follows similarly as in (a) observing that $i \omega$ changes sign when  complex
conjugating.
\end{proof}

In view of the following proposition we see that  $\TT$ has the physical interpretation of
time reversal.

 \begin{proposition} \label{proptime} \label{lem:statet} Suppose   $ \overline{ \kappa( \cdot )}  = \kappa(- \cdot ) $. Then $\TT$ is anti-unitary and satisfies the following properties
\begin{align*}
\text{ (a)} \qquad & \mathcal{T}  \hat{x}_j   \TT^*  = \hat{x}_j , \\
\text{ (b)} \qquad &    \TT p_j  \TT^*  = -  p_j ,  \\
\text{ (c)}\qquad  &   \TT \widehat{S}_{j } \TT^*  = -  \widehat{S}_{j } ,\\
\text{ (d)} \qquad & \TT A(\hat{x}_j) \TT^*  = -  A(\hat{x}_j)   , \quad \text{ if } \overline{\kappa(- \cdot )} = \kappa(\cdot )  ,   \\
\text{ (e)}\qquad &  \TT B(\hat{x}_j) \TT^*  = - B(\hat{x}_j)  , \quad \text{ if } \overline{ \kappa(- \cdot )} = \kappa(\cdot )  ,\\
\text{ (f)}\qquad &  \TT E^\perp(\hat{x}_j) \TT^*  =  E^\perp(\hat{x}_j)  , \quad \text{ if } \overline{\kappa(- \cdot )} = \kappa(\cdot )  , \\
\text{ (g)} \qquad&  \TT H_{\rm f} \TT^*  = H_{\rm f} , \\
\text{ (h)}\qquad &  \TT P_{\rm f}  \TT^*  = - P_{\rm f}  .
\end{align*}
\end{proposition}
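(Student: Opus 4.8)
The plan is to mirror the proof of Proposition~\ref{trafooffieldrot}, splitting the statement into the matter parts (a)--(c), which only involve $\TT_{\rm mat}=\bigotimes_j\TT_{{\rm p},s_j}$, and the field parts (d)--(h), for which I use Lemma~\ref{lem:lemmatime} in place of Lemma~\ref{lem:rotinfock} together with the identities \eqref{eq:GammaasharpGamma}--\eqref{eq:GammdGammaGamma} applied to the anti-unitary symmetry $-\mathcal{K}_\gggg$. Note that $\TT=\TT_{\rm mat}\otimes\TT_{\rm f}$ is anti-unitary by Proposition~\ref{lem:propoftat} regardless of $\kappa$; only (d)--(f) require the hypothesis $\overline{\kappa(-\cdot)}=\kappa(\cdot)$, which is the same condition as $\overline{\kappa(\cdot)}=\kappa(-\cdot)$ used in Lemma~\ref{lem:lemmatime}.

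For (a)--(c) it suffices to compute $\TT_{{\rm p},s}(\cdot)\TT_{{\rm p},s}^*$ on a single one-particle space $L^2(\R^3;\mathcal{D}_s)$, where $\TT_{{\rm p},s}$ is $K$ (if $s=0$) or $K\sigma_2$ (if $s=1/2$), with $K$ complex conjugation. Since $\hat x_j$ is multiplication by a real vector and $\sigma_2$ acts only on the spin factor, $\TT_{{\rm p},s}$ commutes with $\hat x_j$, giving (a). For $p_j=-i\nabla_{x_j}$, the anti-linear $K$ fixes the real operator $\nabla_{x_j}$ while conjugating the prefactor $-i$ to $+i$, and $\sigma_2$ commutes with $\nabla_{x_j}$, so $\TT_{{\rm p},s}p_j\TT_{{\rm p},s}^*=-p_j$, which is (b). For (c) the only nontrivial case is $s_j=1/2$, where $(\widehat S_j)_l$ reduces to $\hat s_l=\tfrac12\sigma_l$; using $K\sigma_lK=\overline{\sigma}_l$ together with the relations $\overline{\sigma}_1=\sigma_1$, $\overline{\sigma}_2=-\sigma_2$, $\overline{\sigma}_3=\sigma_3$ from \eqref{compspin}, the identities $\sigma_2\sigma_l\sigma_2=-\sigma_l$ for $l\neq2$ and $\sigma_2\sigma_2\sigma_2=\sigma_2$, and $(K\sigma_2)^*=(K\sigma_2)^{-1}=-K\sigma_2$, one computes $(K\sigma_2)\hat s_l(K\sigma_2)^*=-\hat s_l$ for every $l$, which gives (c).

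For (d)--(f) I would follow the direct-integral argument underlying Proposition~\ref{trafooffieldrot}(d)--(f): writing $\HH_{\rm mat}\otimes\FF_s(\gggg)=L^2\big(\R^{3N};(\bigotimes_j\mathcal{D}_{s_j})\otimes\FF_s(\gggg)\big)$, the operator $A_b(\hat x_j)$ acts fiberwise as $A_b(x_j)$ with $x_j$ a c-number, and $\TT=\TT_{\rm mat}\otimes\TT_{\rm f}$ acts as an $x$-independent spin conjugation tensored with $\TT_{\rm f}$; crucially the complex conjugation inside $\TT_{\rm mat}$ does not reflect the position variables, so the fibration is preserved with no reflection $x\mapsto-x$ and, since $\hat x_j$ is left fixed by (a), no mixing of the components $b$. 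Hence $\TT A_b(\hat x_j)\TT^*$ acts fiberwise as $\TT_{\rm f}A_b(x_j)\TT_{\rm f}^*$, and Lemma~\ref{lem:lemmatime} gives $-A_b(x_j)$, $-B_b(x_j)$, $+E^\perp_b(x_j)$ respectively; reassembling the direct integral yields (d), (e), (f). For (g), (h), since $H_{\rm f}=d\Gamma(M_\omega)$ and $(P_{\rm f})_l=d\Gamma(M_{\pi_l})$ act only on the Fock factor and $\TT_{\rm f}=\Gamma(-\mathcal{K}_\gggg)$, identity \eqref{eq:GammdGammaGamma} gives $\TT_{\rm f}d\Gamma(M_f)\TT_{\rm f}^*=d\Gamma(\mathcal{K}_\gggg M_f\mathcal{K}_\gggg)$, the two signs from $-\mathcal{K}_\gggg$ cancelling. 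Using Lemma~\ref{lem:reltinvpol00}(a), the relations $\mathcal{K}_\mathfrak{v}F^{\mp1}=F^{\pm1}\mathcal{K}_\mathfrak{v}$ and $\mathcal{K}_\mathfrak{v}M_{\overline{f}}\mathcal{K}_\mathfrak{v}=M_f$, and $F^2M_gF^{-2}=M_{g(-\cdot)}$, one obtains $\mathcal{K}_\gggg M_f\mathcal{K}_\gggg=M_{f(-\cdot)}$ for real $f$; since $\omega$ is even this equals $M_\omega$, giving (g), and since $\pi_l$ is odd it equals $-M_{\pi_l}$, so linearity of $d\Gamma$ gives (h).

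The main obstacle is purely the sign bookkeeping forced by anti-linearity: the explicit $-1$ in $\TT_{\rm f}=\Gamma(-\mathcal{K}_\gggg)$, the interplay of complex conjugation with the imaginary prefactors appearing in $B$ and $E^\perp$ (which is why $E^\perp$ picks up a plus sign while $A$ and $B$ pick up minus signs), and the $\sigma_2$-conjugation in the spin sector. One should also record at the outset that \eqref{eq:GammaasharpGamma}--\eqref{eq:GammdGammaGamma} and the transformation rule for $a^*(G)$ remain valid for anti-unitary symmetries, since these are invoked for $\TT_{\rm f}$; this is immediate from the definitions but deserves an explicit mention.
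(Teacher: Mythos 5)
Your proof is correct and follows the same route as the paper: (a)--(c) by explicit conjugation on the one-particle spaces, (d)--(f) by reducing to Lemma~\ref{lem:lemmatime} via the fiberwise (direct-integral) action of $\TT$ on $L^2\big(\R^{3N};\bigotimes_j\mathcal{D}_{s_j}\otimes\FF_s(\gggg)\big)$ (which is trivial here since $\TT\hat{x}_j\TT^*=\hat{x}_j$), and (g)--(h) via \eqref{eq:GammdGammaGamma} and Lemma~\ref{lem:reltinvpol00}(a) together with the Fourier--conjugation identities. The paper states the proof of (a)--(c) is ``straight forward'' and simply cites Lemma~\ref{lem:lemmatime} for (d)--(f), so your write-up merely makes explicit the bookkeeping the paper leaves implicit; no gap or alternative method is involved.
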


\begin{proof} Parts
(a), (b),  and (c)  are straight forward to verify. Parts  (d), (e), and  (f) follow from Lemma  \ref{lem:lemmatime}.
 Using Lemma \ref{lem:reltinvpol00} we find for any measurable $f : \R^3 \to \R$
 \begin{align*} \TT_{\rm f} d \Gamma(M_f)  \TT_{\rm f}^*
&  = d \Gamma( \tau_\varepsilon^{-1}  \KK_\mathfrak{v} \tau_\varepsilon   M_f  \tau_\varepsilon^{-1}  \KK_\mathfrak{v}^* \tau_\varepsilon ) \\
& =  d \Gamma( \tau_\varepsilon^{-1}  \KK_\mathfrak{v} F^{-1}   M_{f  }  F   \KK_\mathfrak{v}^* \tau_\varepsilon ) \\
& =  d \Gamma( \tau_\varepsilon^{-1}   F^{-1}   M_{f(-\cdot)   }  F   \tau_\varepsilon ) \\
& =  d \Gamma(   M_{f(- \cdot)   }  ) ,
\end{align*}
where in the third equality we used that  the Fourier transform satisfies  the following properties
$F \overline{\varphi} = \overline{ F \varphi }(- \cdot)$ and $ \overline{ F^{-1} \varphi } = F^{-1} \overline{\varphi}(-\cdot)$ for $\varphi \in L^2(\R^3)$.
 Now choosing $f = \omega$ or $f : k \mapsto   k_j$ Parts (g) and (h) follow.
\end{proof}

In the following proposition we give a formula for the action of the time reversal symmetry  in $\gggg$.
\begin{proposition} 
For $h \in \gggg$ we have  for almost all $({k},\lambda ) \in \R^3 \times \{1,2\}$
\begin{equation*} 
(\mathcal{K}_\gggg h)({k},\lambda) = \sum_{\lambda'=1,2}  \mathcal{D}^\mathcal{T}_{\lambda, \lambda'}({k})
\overline{h(-{k},\lambda')} ,
\end{equation*}
where
$\mathcal{D}^\mathcal{T}_{\lambda, \lambda'}({k}) := \varepsilon({k},\lambda) \cdot \varepsilon(-{k},\lambda') $.
Then $\mathcal{D}^\mathcal{T}_{\lambda, \lambda'}({k}) = \mathcal{D}^\mathcal{T}_{\lambda', \lambda}(-{k})$ and
in the sense of operator valued distributions for all $(k,\lambda) \in \R^3 \times \Z_2$
$$
\mathcal{T}_{\rm f} a^\#(k,\lambda) \mathcal{T}_{\rm f}^* = - \sum_{\lambda'=1,2} \mathcal{D}^{\mathcal{T}}_{\lambda,\lambda'}(k)
a^\#(-k,\lambda') .
$$
\end{proposition}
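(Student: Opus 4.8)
The plan is to mirror the structure of the proofs of the analogous rotation and parity propositions. First I would establish the explicit formula for $\mathcal{K}_\gggg$ acting on $h \in \gggg$. By definition $\mathcal{K}_\gggg = \tau_\varepsilon^{-1} \mathcal{K}_\vv \tau_\varepsilon$, so I compute $(\mathcal{K}_\gggg h)(k,\lambda) = \varepsilon(k,\lambda) \cdot (F \mathcal{K}_\vv \tau_\varepsilon h)(k)$. Here $\mathcal{K}_\vv$ is componentwise complex conjugation in position space, and using the Fourier identities $F\overline{\varphi} = \overline{F\varphi}(-\cdot)$ recorded in the proof of Proposition \ref{proptime}, I get $(F \mathcal{K}_\vv \tau_\varepsilon h)(k) = \overline{(F \tau_\varepsilon h)(-k)} = \overline{\sum_{\lambda'=1,2} \varepsilon(-k,\lambda') h(-k,\lambda')}$. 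Since the polarization vectors are real-valued, conjugation only hits $h$, yielding $(\mathcal{K}_\gggg h)(k,\lambda) = \sum_{\lambda'=1,2} \varepsilon(k,\lambda)\cdot\varepsilon(-k,\lambda')\,\overline{h(-k,\lambda')}$, which is exactly the claimed formula with $\mathcal{D}^{\mathcal T}_{\lambda,\lambda'}(k) = \varepsilon(k,\lambda)\cdot\varepsilon(-k,\lambda')$. The symmetry relation $\mathcal{D}^{\mathcal T}_{\lambda,\lambda'}(k) = \mathcal{D}^{\mathcal T}_{\lambda',\lambda}(-k)$ is immediate from the symmetry of the real inner product on $\R^3$ and the substitution $k \mapsto -k$.

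Next I would transfer this to the statement at the level of operator-valued distributions. The operator of time reversal on Fock space is $\mathcal{T}_{\rm f} = \Gamma(-\mathcal{K}_\gggg)$, which is anti-unitary with $-\mathcal{K}_\gggg$ an anti-unitary symmetry on $\gggg$. For an anti-unitary one-particle symmetry $S$ one has $\Gamma(S) a^*(f) \Gamma(S)^* = a^*(S f)$ and $\Gamma(S) a(f) \Gamma(S)^* = a(Sf)$; this is the anti-unitary analogue of \eqref{eq:GammaasharpGamma}, and since $\mathcal{T}_{\rm f}$ is $\Gamma$ of an anti-linear map I must be careful that smearing with $h$ is conjugate-linear. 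Concretely, test against $h \in \gggg$: on the one hand $\mathcal{T}_{\rm f} a^*(h) \mathcal{T}_{\rm f}^* = a^*(-\mathcal{K}_\gggg h)$, and writing both sides as weak integrals of operator-valued distributions, the left side is $\sum_\lambda \int \overline{h(k,\lambda)}\, \mathcal{T}_{\rm f} a^*(k,\lambda) \mathcal{T}_{\rm f}^*\, dk$ because $\mathcal{T}_{\rm f}$ is anti-linear, while the right side, using the formula just derived and the change of variables $k \mapsto -k$ together with the symmetry relation for $\mathcal{D}^{\mathcal T}$, becomes $-\sum_{\lambda,\lambda'}\int \overline{h(k,\lambda)}\,\mathcal{D}^{\mathcal T}_{\lambda,\lambda'}(k)\, a^*(-k,\lambda')\, dk$. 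Comparing, and using that $h$ is arbitrary, gives $\mathcal{T}_{\rm f} a^*(k,\lambda) \mathcal{T}_{\rm f}^* = -\sum_{\lambda'} \mathcal{D}^{\mathcal T}_{\lambda,\lambda'}(k)\, a^*(-k,\lambda')$. Taking adjoints — and noting that $\mathcal{T}_{\rm f}$ is anti-unitary so $\mathcal{T}_{\rm f} a(k,\lambda) \mathcal{T}_{\rm f}^* = (\mathcal{T}_{\rm f} a^*(k,\lambda) \mathcal{T}_{\rm f}^*)^*$ with the coefficients $\mathcal{D}^{\mathcal T}$ being real — yields the same relation for $a(k,\lambda)$, establishing the claim for $a^\#$.

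I expect the main subtlety, rather than obstacle, to be keeping track of conjugations: the interplay between the anti-linearity of $\mathcal{K}_\gggg$, the conjugate-linearity of $f \mapsto a^*(f)$ versus the linearity of $f\mapsto a^*(f)$ (the creation operator is linear in $f$, so $a^*(-\mathcal{K}_\gggg h)$ unpacks as $-\sum_\lambda\int (\mathcal{K}_\gggg h)(k,\lambda) a^*(k,\lambda)\,dk$, and one must then substitute the explicit formula for $\mathcal{K}_\gggg h$ which itself contains a conjugate of $h$), and the reality of the polarization vectors and of $\mathcal{D}^{\mathcal T}$. There is no hard analytic content — everything is a bookkeeping computation on Schwartz-class finite-particle vectors in $\mathcal{D}_\mathcal{S}$ where all the weak integrals and quadratic-form identities are justified exactly as in \eqref{creadiffdef} and in the preceding propositions — but care is needed to ensure each change of variables $k\mapsto -k$ is paired with the corresponding use of $\mathcal{D}^{\mathcal T}_{\lambda,\lambda'}(k) = \mathcal{D}^{\mathcal T}_{\lambda',\lambda}(-k)$ so that the final index pattern comes out as stated.
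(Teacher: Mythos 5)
Your proposal is correct and follows essentially the same route as the paper: derive the explicit formula for $\mathcal{K}_\gggg$ from $\tau_\varepsilon^{-1}\mathcal{K}_\vv\tau_\varepsilon$ and the Fourier conjugation identity, then test $\mathcal{T}_{\rm f}a^*(h)\mathcal{T}_{\rm f}^*=a^*(-\mathcal{K}_\gggg h)$ against arbitrary $h$, use anti-linearity to extract the conjugate of $h$ on the left, change variables $k\mapsto-k$ on the right, and compare. The paper applies the symmetry $\mathcal{D}^\mathcal{T}_{\lambda,\lambda'}(k)=\mathcal{D}^\mathcal{T}_{\lambda',\lambda}(-k)$ only at the end rather than inside the integral manipulation, but that is an inessential bookkeeping difference; your explicit remark that passing from $a^*$ to $a$ uses both anti-unitarity and the reality of $\mathcal{D}^\mathcal{T}$ is a correct and welcome clarification of what the paper leaves implicit.
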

\begin{proof}
Using for $\varphi \in L^2(\R^3)$ the following property of
the Fourier transform
 ${F} \overline{({F}^{-1} \varphi)}({k}) = \overline{\varphi}(-{k})$,
 we find
 \begin{align*}
 (\mathcal{K}_\gggg  h)({k},\lambda) & = \epsilon({k},\lambda) \cdot  {F} \overline{ \left( {F}^{-1}
 \sum_{\lambda'=1,2} \varepsilon(\cdot, \lambda') h(\cdot,\lambda') \right)}({k}) \\
 & = \sum_{\lambda'=1,2} \varepsilon({k},\lambda) \cdot \varepsilon(-{k},\lambda') \overline{ h(-{k},\lambda')} .
 \end{align*}
This shows the first identity.
Using this,   we find  by anti-linearity
\begin{align*}
&  \sum_{\lambda=1,2} \int \overline{h(k,\lambda)} \TT_{\rm f}  a^*(k,\lambda)   \TT_{\rm f}^*  dk    = \TT_{\rm f} a^*(h) \TT_{\rm f}^* =  a^*(- \KK_{\gggg} h) \\
& = - \sum_{\lambda=1,2} \int_{\R^3}   (\KK_{\gggg} h)(\lambda,k)  a^*(k,\lambda)  dk \\
& =
- \sum_{\lambda, \lambda'=1,2} \int_{\R^3}  \overline{h(-{k},\lambda')}  \mathcal{D}_{\lambda, \lambda'}^\mathcal{T}({k})  a^*(k,\lambda)
 dk \\
& = -  \sum_{\lambda, \lambda'=1,2} \int_{\R^3} \overline{h({k},\lambda)} \mathcal{D}_{\lambda', \lambda}^\mathcal{T}(-{k})a^*(-k,\lambda')  dk .
\end{align*}
Since $h \in \gggg$ is arbitrary the second identity  follows for $a^*(k,\lambda)$. Taking adjoints the claim then follows also for $a(k,\lambda)$.
\end{proof}

\section{Hamiltonians with Symmetries} \label{sec:ham}

In this section we consider Hamiltonians of non-relativistic qed, and discuss their symmetry properties.

\begin{theorem}\label{rotfullop}   Suppose $U \in SU(2)$,  $ R = \pi(U)$,  $  V(x_1,...,x_N)  = V(Rx_1,...,Rx_N)$ for all $x_1,...,x_N \in \R^3$,    $B_{\rm ext}(x) = R B_{\rm ext}( R^{-1} x) $ for all $x \in \R^3$,
and $\kappa(R \cdot ) = \kappa(\cdot )$.
Then
\begin{align*}
   \mathcal{U}(U) H  \mathcal{U}(U)^* = H  .
\end{align*}
\end{theorem}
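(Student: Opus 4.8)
The plan is to conjugate $H$ summand by summand with $\mathcal{U}(U)=\mathcal{U}_{\rm mat}(U)\otimes\mathcal{U}_{\rm f}(R)$ and to check that each summand is left invariant. Since $\mathcal{U}(U)$ maps the core $\bigl(\bigotimes_{j=1}^{N}C_c^\infty(\R^3;\mathcal{D}_{s_j})\bigr)\otimes F_0(\gggg)$ onto itself --- the matter factor transforms under a linear change of variables composed with a fixed matrix, and $\mathcal{U}_{\rm f}(R)=\Gamma(\mathcal{U}_\gggg(R))$ preserves the photon number --- it suffices to establish the operator identity on this core, after which it passes to the self-adjoint closures. For the matter coordinates $\hat x_j$, the momenta $p_j$, the spins $\widehat{S}_j$, the quantized fields $A(\hat x_j)$, $B(\hat x_j)$, and for $H_{\rm f}$, the transformation laws needed are precisely Proposition~\ref{trafooffieldrot}\,(a)--(e),(g), where (d) and (e) use the hypothesis $\kappa(R\cdot)=\kappa(\cdot)$. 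Thus the only parts not already covered are the external potential $V$ and the external field terms $A_{\rm ext}$, $B_{\rm ext}$.

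For $B_{\rm ext}$ one notes that, since $B_{\rm ext}$ acts on the matter factor only, $\mathcal{U}(U)B_{\rm ext}(\hat x_j)\mathcal{U}(U)^{*}$ is multiplication by $x\mapsto B_{\rm ext}(R^{-1}x)$, and the hypothesis $B_{\rm ext}(x)=RB_{\rm ext}(R^{-1}x)$ rewrites this as $R^{-1}B_{\rm ext}(\hat x_j)$. The corresponding covariance of $A_{\rm ext}$ has to be extracted from the defining integral \eqref{exprforAext}: performing the change of variables $y=R^{-1}y'$ (so that $dy=dy'$ and $|R^{-1}x-R^{-1}y'|=|x-y'|$) and using $R(a\times b)=(Ra)\times(Rb)$ for $R\in SO(3)$ together with $RB_{\rm ext}(R^{-1}y')=B_{\rm ext}(y')$, one obtains
\begin{align*}
RA_{\rm ext}(R^{-1}x)
&=-\int\frac{R\bigl[\,R^{-1}(x-y')\times B_{\rm ext}(R^{-1}y')\,\bigr]}{4\pi\,|x-y'|^{3}}\,dy'
=-\int\frac{(x-y')\times B_{\rm ext}(y')}{4\pi\,|x-y'|^{3}}\,dy'=A_{\rm ext}(x),
\end{align*}
hence $\mathcal{U}(U)A_{\rm ext}(\hat x_j)\mathcal{U}(U)^{*}=R^{-1}A_{\rm ext}(\hat x_j)$ as well. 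This covariance of $A_{\rm ext}$ --- which tacitly uses enough regularity and decay of $B_{\rm ext}$ for the integral and the substitution to be legitimate, cf.\ Remark~\ref{rembextaest} --- is the one step I expect to require genuine care; everything else is bookkeeping with the orthogonality relations of $SO(3)$.

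To reassemble, abbreviate $D_{j}:=p_j+q_j\bigl(A(\hat x_j)+A_{\rm ext}(\hat x_j)\bigr)$ and $W_j:=B(\hat x_j)+B_{\rm ext}(\hat x_j)$, viewed as triples of operators. The transformation laws above give $\mathcal{U}(U)(D_j)_b\mathcal{U}(U)^{*}=\sum_{c}(R^{-1})_{bc}(D_j)_c$, and likewise $(\widehat{S}_j)_b\mapsto\sum_{c}(R^{-1})_{bc}(\widehat{S}_j)_c$ and $(W_j)_b\mapsto\sum_{c}(R^{-1})_{bc}(W_j)_c$. Consequently, using only the matrix identity $(R^{-1})^{T}R^{-1}=1$ --- so that no commutation relation among the $(D_j)_c$ is invoked, the ordering of factors being preserved ---
\begin{align*}
\mathcal{U}(U)\Bigl(\sum_{b}(D_j)_b^{\,2}\Bigr)\mathcal{U}(U)^{*}
&=\sum_{c,c'}\bigl((R^{-1})^{T}R^{-1}\bigr)_{cc'}(D_j)_c(D_j)_{c'}
=\sum_{c}(D_j)_c^{\,2},
\end{align*}
and, since the components of $\widehat{S}_j$ act on the spin factor while those of $W_j$ act on the position and field factors,
\begin{align*}
\mathcal{U}(U)\Bigl(\mu_j\sum_{b}(\widehat{S}_j)_b(W_j)_b\Bigr)\mathcal{U}(U)^{*}
&=\mu_j\sum_{c,c'}\bigl((R^{-1})^{T}R^{-1}\bigr)_{cc'}(\widehat{S}_j)_c(W_j)_{c'}
=\mu_j\sum_{c}(\widehat{S}_j)_c(W_j)_c.
\end{align*}
Finally $\mathcal{U}(U)V(\hat x_1,\dots,\hat x_N)\mathcal{U}(U)^{*}$ is multiplication by $(x_1,\dots,x_N)\mapsto V(R^{-1}x_1,\dots,R^{-1}x_N)$, which equals $V(x_1,\dots,x_N)$ by the hypothesis $V(x)=V(Rx)$, and $\mathcal{U}(U)H_{\rm f}\mathcal{U}(U)^{*}=H_{\rm f}$ by Proposition~\ref{trafooffieldrot}\,(g). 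Summing the contributions over $j$ and adding $H_{\rm f}$ and $V$ gives $\mathcal{U}(U)H\mathcal{U}(U)^{*}=H$ on the core, and hence on all of $\HH_{\rm mat}\otimes\FF_s(\gggg)$.
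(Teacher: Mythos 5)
Your proof is correct and follows essentially the same route as the paper: it establishes $R\,A_{\rm ext}(R^{-1}x)=A_{\rm ext}(x)$ by a change of variables in \eqref{exprforAext} using $R(a\times b)=(Ra)\times(Rb)$ and the hypothesis on $B_{\rm ext}$, then invokes Proposition~\ref{trafooffieldrot} term by term and contracts with $(R^{-1})^{T}R^{-1}=1$. Your added remarks about the invariance of the core and about not needing commutation among the components $(D_j)_c$ are careful bookkeeping that the paper leaves implicit.
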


\begin{proof}
Using \eqref{exprforAext},  properties of the cross product,  a change of variables, and the symmetry properties of $B_{\rm ext}$  we find
\begin{align*}
R A_{{\rm ext}}(R^{-1} x)  & =  -  \int \frac{(x- R y) \times  R B_{\rm ext}(y)}{4\pi |x- R y|^3} dy = A_{\rm ext}(x) .
\end{align*}
Thus using Proposition~\ref{trafooffieldrot}
\begin{align*}
& \UU(U) H  \UU(U)^* \\
& = \sum_{j=1}^N  \left\{ \frac{1}{2m_j} \left(   R^{-1} p_j   +  q_j (  R^{-1} A(\hat{x}_j)   +A_{\rm ext}( R^{-1}\hat{x}_j) ) \right)^2   + \mu_j   R ^{-1} \widehat{S}_j \cdot  ( R^{-1}   B(\hat{x}_j) + B_{\rm ext}(R^{-1} \hat{x}_j)) \right\} \\
& \quad  +  H_{\rm f}   + V(R^{-1}  \hat{x}_1,...,R^{-1} \hat{x}_N)  \\
& = H  ,
\end{align*}
where in the last line we used the assumed properties of $B_{\rm ext}$ and $V$.
\end{proof}

\begin{theorem}
Suppose  $  V(x_1,...,x_N)  = V(-x_1,...,-x_N)$ for all $x_1,...,x_N \in \R^3$,  $B_{\rm ext}(\cdot ) =  B_{\rm ext}( - \cdot ) $, and $\kappa(-\cdot) = \kappa(\cdot)$.
Then
\begin{align*}
 \PP H \PP^*  = H .
\end{align*}
\end{theorem}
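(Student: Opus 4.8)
The plan is to run the argument in exact parallel with the proof of Theorem~\ref{rotfullop}, specialising the rotation $R$ to $-\one_{3\times 3}$ and keeping track of signs. All the field conjugations are supplied by Proposition~\ref{trafooffieldpar} (itself resting on Lemma~\ref{lemmapartitysumfield}), whose hypothesis $\kappa(-\cdot)=\kappa(\cdot)$ is assumed. The only input not handed to us directly by Section~\ref{sec:sym} is the behaviour of the external vector potential $A_{\rm ext}$ under $x\mapsto -x$, so that is the one point requiring an actual computation.

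First I would establish that the evenness of $B_{\rm ext}$ forces $A_{\rm ext}$ to be odd, i.e. $A_{\rm ext}(-x)=-A_{\rm ext}(x)$ for a.e.\ $x$. Starting from \eqref{exprforAext}, the substitution $y\mapsto -y$, the hypothesis $B_{\rm ext}(-y)=B_{\rm ext}(y)$, and the identity $(y-x)=-(x-y)$ give
$$
A_{\rm ext}(-x) = -\int \frac{(-x-y)\times B_{\rm ext}(y)}{4\pi|{-x-y}|^{3}}\,dy
= -\int \frac{(y-x)\times B_{\rm ext}(y)}{4\pi|x-y|^{3}}\,dy
= -A_{\rm ext}(x).
$$
Since $\PP$ is unitary and $\PP\hat x_j\PP^{*}=-\hat x_j$ by Proposition~\ref{trafooffieldpar}(a), functional calculus in the multiplication operator $\hat x_j$ then yields $\PP A_{\rm ext}(\hat x_j)\PP^{*}=A_{\rm ext}(-\hat x_j)=-A_{\rm ext}(\hat x_j)$, and likewise $\PP B_{\rm ext}(\hat x_j)\PP^{*}=B_{\rm ext}(-\hat x_j)=B_{\rm ext}(\hat x_j)$ and $\PP V(\hat x_1,\dots,\hat x_N)\PP^{*}=V(-\hat x_1,\dots,-\hat x_N)=V(\hat x_1,\dots,\hat x_N)$ by the two remaining hypotheses.

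Then I would conjugate $H$ in \eqref{defofham} term by term. For the kinetic term, Proposition~\ref{trafooffieldpar}(b) and (d) together with the oddness of $A_{\rm ext}$ give $\PP\bigl(p_j+q_j(A(\hat x_j)+A_{\rm ext}(\hat x_j))\bigr)\PP^{*}=-\bigl(p_j+q_j(A(\hat x_j)+A_{\rm ext}(\hat x_j))\bigr)$, so the overall sign disappears upon squaring. For the spin term, parts (c) and (e) and the evenness of $B_{\rm ext}$ leave $\mu_j\widehat S_j\cdot(B(\hat x_j)+B_{\rm ext}(\hat x_j))$ invariant; part (g) gives $\PP H_{\rm f}\PP^{*}=H_{\rm f}$; and the potential term is invariant as just noted. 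Summing over $j$ gives $\PP H\PP^{*}=H$. I do not expect any genuine obstacle here: everything beyond the short change-of-variables identity for $A_{\rm ext}$ is mechanical, and the only mild care needed is to note that $\PP$ being unitary (not anti-unitary) it commutes through the square and through functional calculus of the real-valued functions $A_{\rm ext}$, $B_{\rm ext}$, $V$ of the $\hat x_j$ without conjugation of coefficients.
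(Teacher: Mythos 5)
Your proposal is correct and matches the paper's own proof essentially line for line: you compute the change of variables to show $A_{\rm ext}$ is odd under $x \mapsto -x$ when $B_{\rm ext}$ is even, then conjugate the Hamiltonian term by term using Proposition~\ref{trafooffieldpar} together with the assumed evenness of $B_{\rm ext}$ and $V$. The only difference is that you spell out the individual steps (functional calculus in $\hat{x}_j$, the sign cancellation in the square) that the paper leaves implicit.
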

\begin{proof}
Using  \eqref{exprforAext},  the properties of the cross product,  a change of variables, and the symmetry properties of $B_{\rm ext}$  we find 
\begin{align*}
A_{{\rm ext}}(- x)  & =  -  \int \frac{(-(x -  y)) \times   B_{\rm ext}(-y)}{4\pi |x-  y|^3} dy = - A_{\rm ext}(x) .
\end{align*}
Thus we find  from Proposition \ref{trafooffieldpar}
\begin{align*}
& \mathcal{P}  H_{} \mathcal{P}^* \\
& = \sum_{j=1}^N  \left\{ \frac{1}{2m_j} \left(   -  p_j   -   q_j    A(\hat{x}_j)   + q_j  A_{\rm ext}(  - \hat{x}_j) ) \right)^2   + \mu_j   \widehat{S}_j \cdot (   B(\hat{x}_j) + B_{\rm ext}(- \hat{x}_j)) \right\} \\
& +  H_{\rm f}   + V(- \hat{x}_1,...,- \hat{x}_N)  \\
& = H_{} ,
\end{align*}
where in the last line we used the assumed properties of $B_{\rm ext}$ and $V$.
\end{proof}
\begin{theorem}   \label{eq:thmbzerokr}
Suppose    $B_{\rm ext} =0 $ and $\overline{\kappa(\cdot)} = \kappa(-\cdot)$. Then
\begin{align*}
 \TT H \TT^*  = H .
\end{align*}

\end{theorem}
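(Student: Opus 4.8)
The plan is to argue exactly as in the proofs of the two preceding theorems, using Proposition~\ref{proptime} to transport each term of the Hamiltonian \eqref{defofham} through $\TT$, and checking that every term is left invariant. The key point that makes $B_{\rm ext}=0$ convenient is that then $A_{\rm ext}=0$ as well by \eqref{exprforAext}, so there is no external-field term to track; the only nontrivial inputs are the transformation rules $\TT p_j \TT^* = -p_j$, $\TT \hat x_j \TT^* = \hat x_j$, $\TT \widehat S_j \TT^* = -\widehat S_j$, $\TT A(\hat x_j)\TT^* = -A(\hat x_j)$, $\TT B(\hat x_j)\TT^* = -B(\hat x_j)$, $\TT H_{\rm f}\TT^* = H_{\rm f}$, together with the fact that $V$ is real-valued and $\TT$ commutes with multiplication by the real function $V(x_1,\dots,x_N)$ (anti-linearity contributes nothing here since $V$ is real), and that the masses $m_j$, charges $q_j$, moments $\mu_j$ are real.

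First I would note that the hypothesis $\overline{\kappa(\cdot)}=\kappa(-\cdot)$ is precisely the hypothesis $\overline{\kappa(\cdot)} = \kappa(-\cdot)$ needed in Lemma~\ref{lem:lemmatime} and Proposition~\ref{proptime}, so all of parts (a)--(h) of Proposition~\ref{proptime} are available. Since $B_{\rm ext}=0$, formula \eqref{exprforAext} gives $A_{\rm ext}\equiv 0$, so the Hamiltonian reduces to
\begin{align*}
H = \sum_{j=1}^N \left\{ \frac{1}{2m_j}\left( p_j + q_j A(\hat x_j)\right)^2 + \mu_j \widehat S_j \cdot B(\hat x_j) \right\} + H_{\rm f} + V(\hat x_1,\dots,\hat x_N).
\end{align*}
Then I would conjugate term by term. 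For the kinetic term, $\TT$ is anti-unitary, so $\TT (p_j + q_j A(\hat x_j))\TT^* = -p_j - q_j A(\hat x_j) = -(p_j + q_j A(\hat x_j))$ using (b) and (d) of Proposition~\ref{proptime} and reality of $q_j$; squaring kills the sign, so $\TT (p_j+q_jA(\hat x_j))^2 \TT^* = (p_j+q_jA(\hat x_j))^2$. For the spin term, $\TT \widehat S_j \cdot B(\hat x_j) \TT^* = (-\widehat S_j)\cdot(-B(\hat x_j)) = \widehat S_j\cdot B(\hat x_j)$ by (c) and (e). The field energy is invariant by (g), and $V$ is invariant because it is a real-valued multiplication operator commuting with $\TT$. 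Summing, $\TT H \TT^* = H$.

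The only thing requiring a word of care — and the closest thing to an obstacle — is the handling of cross terms and domains in the square $(p_j + q_j A(\hat x_j))^2$: one should expand it as $p_j^2 + q_j(p_j\cdot A(\hat x_j) + A(\hat x_j)\cdot p_j) + q_j^2 A(\hat x_j)^2$ and observe that each summand picks up an even number of sign changes under $\TT$, or, more cleanly, simply use that $\TT$ is anti-unitary so $\TT O^2 \TT^* = (\TT O \TT^*)^2$ for any (say essentially self-adjoint, or at least closable) operator $O$, applied to $O = p_j + q_j A(\hat x_j)$; since $\TT$ maps the core $\left(\bigotimes_{j=1}^N C_c^\infty(\R^3;\mathcal D_{s_j})\right)\otimes F_0(\gggg)$ into itself, this identity is legitimate on that core and extends to the self-adjoint operators by essential self-adjointness. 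I would spell this out in one sentence and otherwise keep the proof as short as the two preceding ones, closing with "$\TT H \TT^* = H$" exactly as above.
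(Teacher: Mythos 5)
Your proposal is correct and follows essentially the same route as the paper: the paper's proof also simply invokes Proposition~\ref{proptime} termwise, uses that $B_{\rm ext}=0$ forces $A_{\rm ext}=0$, and observes that the signs from $p_j$, $A(\hat x_j)$, $\widehat S_j$, $B(\hat x_j)$ cancel in the square and in the dot product, leaving $H$ invariant. Your extra remarks about reality of $V$ and about $\TT O^2\TT^*=(\TT O\TT^*)^2$ on the stated core are sensible elaborations of what the paper leaves implicit, not a different argument.
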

\begin{proof}
 We find  from Proposition  \ref{proptime}
\begin{align*}
& \mathcal{T}  H \mathcal{T}^*
 = \sum_{j=1}^N  \left\{\frac{1}{2m_j} \left(   -  p_j   -   q_j    A(\hat{x}_j) \right)^2   + \mu_j   \widehat{S}_j \cdot   B(\hat{x}_j)  \right\} +  H_{\rm f}   + V(\hat{x}_1,...,\hat{x}_N)   = H .
\end{align*}
\end{proof}

\begin{theorem}   \label{eq:thmbzerokr2}
If $  V(x_1,...,x_N)  = V(-x_1,...,-x_N)$ for all $x_1,...,x_N \in \R^3$,   $B_{\rm ext}(\cdot ) = - B_{\rm ext}( - \cdot) $,    and $\overline{\kappa(\cdot)} = \kappa(-\cdot) = \kappa(\cdot)$. Then
\begin{align*}
\TT  \PP  H (\TT \PP)^*  = H .
\end{align*}
\end{theorem}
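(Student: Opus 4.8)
The plan is to set $\mathcal{Q} := \TT\PP$. By Proposition~\ref{prop:comm} the operators $\TT$ and $\PP$ commute, so $\mathcal{Q} = \PP\TT$ is anti-unitary with $\mathcal{Q}^* = \PP^*\TT^* = \TT^*\PP^*$; since $H$ is self-adjoint, proving $\mathcal{Q}H\mathcal{Q}^* = H$ is exactly the assertion that $\mathcal{Q}$ is a symmetry of $H$. The argument reduces to conjugating each building block of \eqref{defofham} by $\mathcal{Q}$, which I would do by applying the parity rule of Proposition~\ref{trafooffieldpar} and then the time-reversal rule of Proposition~\ref{proptime}; conjugation by a unitary or anti-unitary preserves sums, products and squares, and since $q_j,\mu_j\in\R$, $V$ is real, and $A_{\rm ext},B_{\rm ext}$ are real-valued, the anti-linearity of $\TT$ produces no spurious complex conjugates. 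The hypothesis $\overline{\kappa}=\kappa=\kappa(-\cdot)$ supplies both $\kappa(-\cdot)=\kappa(\cdot)$ (needed for Proposition~\ref{trafooffieldpar}(d),(e)) and $\overline{\kappa(-\cdot)}=\kappa(\cdot)$ (needed for Proposition~\ref{proptime}(d),(e)). Composing the two sets of rules gives
\begin{align*}
&\mathcal{Q}\hat{x}_j\mathcal{Q}^* = -\hat{x}_j,\qquad \mathcal{Q}p_j\mathcal{Q}^* = p_j,\qquad \mathcal{Q}\widehat{S}_j\mathcal{Q}^* = -\widehat{S}_j,\\
&\mathcal{Q}A(\hat{x}_j)\mathcal{Q}^* = A(\hat{x}_j),\qquad \mathcal{Q}B(\hat{x}_j)\mathcal{Q}^* = -B(\hat{x}_j),\qquad \mathcal{Q}H_{\rm f}\mathcal{Q}^* = H_{\rm f},
\end{align*}
for instance $\PP A\PP^*=-A$ followed by $\TT A\TT^*=-A$ yields $\mathcal{Q}A\mathcal{Q}^*=A$, and $\PP\widehat{S}_j\PP^*=\widehat{S}_j$ followed by $\TT\widehat{S}_j\TT^*=-\widehat{S}_j$ yields $\mathcal{Q}\widehat{S}_j\mathcal{Q}^*=-\widehat{S}_j$.

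The step that is not purely formal is the external vector potential: $A_{\rm ext}(\hat{x}_j)$ is an $\R^3$-valued multiplication operator on the matter space and therefore, unlike the quantized field $A$, does not acquire the extra sign contributed by $\PP_{\mathfrak{v}}$. Here I would invoke $B_{\rm ext}(-\cdot)=-B_{\rm ext}(\cdot)$ together with the representation \eqref{exprforAext}: the substitution $y\mapsto -y$ in the integral, using the oddness of $B_{\rm ext}$, gives $A_{\rm ext}(-x)=A_{\rm ext}(x)$. Hence $\PP A_{\rm ext}(\hat{x}_j)\PP^* = A_{\rm ext}(-\hat{x}_j)=A_{\rm ext}(\hat{x}_j)$, and since $A_{\rm ext}$ is real and $\TT$ fixes $\hat{x}_j$, also $\mathcal{Q}A_{\rm ext}(\hat{x}_j)\mathcal{Q}^* = A_{\rm ext}(\hat{x}_j)$. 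The same computation with the odd sign of $B_{\rm ext}$ gives $\mathcal{Q}B_{\rm ext}(\hat{x}_j)\mathcal{Q}^* = -B_{\rm ext}(\hat{x}_j)$, while $\mathcal{Q}V(\hat{x}_1,\dots,\hat{x}_N)\mathcal{Q}^* = V(-\hat{x}_1,\dots,-\hat{x}_N) = V(\hat{x}_1,\dots,\hat{x}_N)$ by the parity symmetry of $V$. This is precisely where combining $\TT$ with $\PP$ repairs what $\PP$ alone cannot: $p_j$, $A(\hat{x}_j)$ and $A_{\rm ext}(\hat{x}_j)$ are all left invariant by $\mathcal{Q}$, so $\tfrac{1}{2m_j}\bigl(p_j + q_j(A(\hat{x}_j)+A_{\rm ext}(\hat{x}_j))\bigr)^2$ is $\mathcal{Q}$-invariant.

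Finally I would assemble the pieces: the spin--field term transforms as $\mu_j\,(-\widehat{S}_j)\cdot(-B(\hat{x}_j)-B_{\rm ext}(\hat{x}_j)) = \mu_j\,\widehat{S}_j\cdot(B(\hat{x}_j)+B_{\rm ext}(\hat{x}_j))$, the field energy $H_{\rm f}$ and the potential $V$ are fixed, and summing over $j=1,\dots,N$ yields $\mathcal{Q}H\mathcal{Q}^* = H$, which is the claim. The only genuine obstacle is keeping track of the interplay of the several sign conventions together with the one non-formal observation that an odd external magnetic field produces an even external vector potential; everything else is a direct concatenation of Propositions~\ref{trafooffieldpar} and \ref{proptime}, exactly as in the proofs of the preceding theorems.
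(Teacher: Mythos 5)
Your proof is correct and follows essentially the same route as the paper: conjugate each term of \eqref{defofham} by $\PP$ and then by $\TT$ using Propositions~\ref{trafooffieldpar} and~\ref{proptime}, and use the oddness of $B_{\rm ext}$ in the Biot--Savart formula \eqref{exprforAext} to deduce that $A_{\rm ext}$ is even, which is exactly the observation that makes the kinetic term invariant under $\TT\PP$. The only cosmetic difference is that you tabulate the combined action of $\mathcal{Q}=\TT\PP$ on each elementary operator before assembling, whereas the paper writes out $\PP H \PP^*$ as a whole and then applies $\TT$; the content is the same.
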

\begin{proof}
Using \eqref{exprforAext}, the    properties of the cross product,  a change of variables, and the symmetry properties of $B_{\rm ext}$  we find
\begin{align*}
A_{{\rm ext}}(- x)  & =  -  \int \frac{(-(x -  y)) \times   B_{\rm ext}(-y)}{4\pi |x-  y|^3} dy =  A_{\rm ext}(x) .
\end{align*}
 Thus we find  from Propositions \ref{trafooffieldpar} and   \ref{proptime}
\begin{align*}
& \TT  \mathcal{P}  H \mathcal{P}^* \ \TT^* \\
& = \TT \bigg(  \sum_{j=1}^N  \left\{ \frac{1}{2 m_j}  \left(   -  p_j   -   q_j    A(\hat{x}_j)   + q_j  A_{\rm ext}(  - \hat{x}_j)  \right)^2   + \mu_j   \widehat{S}_j  \cdot  (   B(\hat{x}_j) + B_{\rm ext}(- \hat{x}_j)) \right\} \\
& +  H_{\rm f}   + V(- \hat{x}_1,...,- \hat{x}_N)  \bigg) \TT^* \\
& =  \sum_{j=1}^N  \left\{\frac{1}{2 m_j}  \left(    p_j   +    q_j    A(\hat{x}_j)   + q_j  A_{\rm ext}( \hat{x}_j) ) \right)^2   + \mu_j   \widehat{S}_j (   B(\hat{x}_j) - B_{\rm ext}(-\hat{x}_j)) \right\} \\
& +  H_{\rm f}   + V(- \hat{x}_1,...,- \hat{x}_N)    \\
& = H ,
\end{align*}
where  we  used the assumed  properties of $B_{\rm ext}$ and $V$.
\end{proof}

As an application of the abstract
Kramer theorem,  we now show  the following  degeneracy result.

\begin{theorem}  \label{thm:kramerexp}
Suppose $\sum_{j=1}^N 2 s_j $ is odd, and  that at least one  of the   following two assumptions hold.
\begin{itemize}
\item[(i)] $B_{\rm ext} = 0$  and $\overline{\kappa(\cdot)} = \kappa(-\cdot)$
\item[(ii)]  $V(-x_1,...,-x_N) = V(x_1,...,x_N)$ and $B_{\rm ext}(-x) = - B_{\rm ext}(x)$,  and $\overline{\kappa(\cdot)} = \kappa(-\cdot) = \kappa(\cdot)$.
\end{itemize}
Then,  any  eigenvalue   of   $H$  is at least two fold degenerate. If the multiplicity of an eigenvalue is finite, it is    even.
\end{theorem}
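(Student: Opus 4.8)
The plan is to apply the Abstract Kramers Degeneracy theorem (Theorem~\ref{le:kram}) with the anti-unitary symmetry taken to be $\TT$ in case (i) and $\TT\PP$ in case (ii). Since $H$ is assumed essentially self-adjoint, its closure is self-adjoint, so the hypotheses of Theorem~\ref{le:kram} require only two things: that the chosen operator is an anti-unitary symmetry of $H$ in the sense of the definition (i.e.\ it commutes with $H$, since $H$ is self-adjoint so $H^*=H$), and that its square equals $-1$.

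First I would treat case (i). By Theorem~\ref{eq:thmbzerokr}, the assumptions $B_{\rm ext}=0$ and $\overline{\kappa(\cdot)}=\kappa(-\cdot)$ give $\TT H \TT^* = H$, and $\TT$ is anti-unitary by Proposition~\ref{lem:propoftat}; so $\TT$ is an anti-unitary symmetry of $H$. For the square, Proposition~\ref{lem:propoftat} gives $\TT^2 = (-1)^{\sum_{j=1}^N 2 s_j}$, which equals $-1$ precisely because $\sum_{j=1}^N 2s_j$ is assumed odd. Hence Theorem~\ref{le:kram} applies and yields the claimed degeneracy.

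Next I would treat case (ii). By Theorem~\ref{eq:thmbzerokr2}, the assumptions $V(-x)=V(x)$, $B_{\rm ext}(-x)=-B_{\rm ext}(x)$, and $\overline{\kappa(\cdot)}=\kappa(-\cdot)=\kappa(\cdot)$ give $\TT\PP H (\TT\PP)^* = H$. Since $\PP$ is unitary (Proposition~\ref{prop:comm0}) and $\TT$ is anti-unitary, the composition $\TT\PP$ is anti-unitary; thus it is an anti-unitary symmetry of $H$. It remains to compute $(\TT\PP)^2$. Using that $\TT$ and $\PP$ commute (Proposition~\ref{prop:comm}) and that $\PP^2 = 1$ together with $\TT^2 = (-1)^{\sum_j 2s_j}$, one gets $(\TT\PP)^2 = \TT\PP\TT\PP = \TT^2\PP^2 = (-1)^{\sum_j 2s_j} = -1$ by the parity hypothesis on $\sum_j 2s_j$. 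Again Theorem~\ref{le:kram} applies.

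The argument is essentially a bookkeeping assembly of already-established facts, so I do not anticipate a genuine obstacle; the one point requiring a little care is the computation of $(\TT\PP)^2$ in case (ii), where one must be careful that $\TT$ is anti-linear so that commuting it past $\PP$ is legitimate only because $\PP$ commutes with $\TT$ as operators (Proposition~\ref{prop:comm}) --- one should not naively write $(\TT\PP)^2=\TT^2\PP^2$ without invoking that commutativity. A secondary routine point is noting that essential self-adjointness of $H$ on the stated core suffices to regard $H$ as self-adjoint so that the hypothesis $\theta H = H^*\theta$ of Theorem~\ref{le:kram} coincides with the commutation relation proved in Theorems~\ref{eq:thmbzerokr} and \ref{eq:thmbzerokr2}.
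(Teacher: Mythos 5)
Your proposal is correct and follows the same route as the paper: invoke Theorem~\ref{le:kram} with $\theta=\TT$ in case (i) and $\theta=\TT\PP$ in case (ii), using Theorems~\ref{eq:thmbzerokr} and~\ref{eq:thmbzerokr2} for the symmetry relation and Proposition~\ref{prop:comm} for anti-unitarity, commutativity of $\TT$ and $\PP$, and $\TT^2=(-1)^{\sum_j 2s_j}$. You spell out the details the paper leaves implicit (notably $\PP^2=1$ and the computation $(\TT\PP)^2=\TT^2\PP^2=-1$), but the argument is identical in substance.
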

\begin{proof}
In case (i) the assertion follows from  Kramers degeneracy theorem~\ref{le:kram} for $\theta= \mathcal{T}$,   Proposition \ref{prop:comm}, and  Theorem~\ref{eq:thmbzerokr}.
In case (ii) the assertion follows  from  Kramers degeneracy theorem~\ref{le:kram} for $\theta= \mathcal{T} \mathcal{P}$,   Proposition \ref{prop:comm}, and  Theorem~\ref{eq:thmbzerokr2}.
\end{proof}

\begin{remark} {\rm \quad
\begin{enumerate}[(a)]
\item  We note that Theorem~\ref{thm:kramerexp} for the case $N=1$, $s_1=1/2$,   and  (i) with the additional assumption  $V(-x) = V(x)$
was shown in \cite{LosMiySpo09,LosMiySpo10}.
Thus  Theorem \ref{thm:kramerexp} relaxes  the unnecessary parity-symmetry assumption for the external potential $V$.
In fact, the proof given in \cite{LosMiySpo09} uses   the symmetry  $\mathcal{P} \mathcal{T}$,
while  the proof  in  \cite{LosMiySpo10} uses the symmetry    $\mathcal{T}$  in the so called Schr\"odinger representation,
cf.  Section  \ref{sec:schr} of this paper. 
\item Since  the  classical Kramer theorem  uses time inversion symmetry it cannot be applied to situations with external magnetic fields.
However if one considers  the anti-linear symmetry  $\mathcal{P} \mathcal{T}$ one can include external magnetic fields, which satisfy
a symmetry condition. We note that the result (ii)  also holds for an ordinary Schr\"odinger operator without any quantized
electromagnetic field, as the proof also applies to such a situation with a straight forward (trivial) modification of the proof.
\end{enumerate}
}
\end{remark}

Next we consider the restriction to symmetric subspaces. To this end we introduce notation satisfying the following hypothesis.

\begin{hypothesis} \label{sym} {\rm  The set  $\mathfrak{P} = \{p_1,....,p_L\}$, $L \in \N \cap \{1,...,N\}$, is a partition   of $\{1,....,N\}$ such that on each element $p \in \mathfrak{P}$
of the partition
the  numbers  $m_j$, $s_j$, $q_j$, and  $\mu_j$ are equal (cf.  \eqref{defofham}).  The function  $\tau$  maps $\mathfrak{P}$ to  $\{0,1\}$.
The potential $V$ is symmetric with respect to interchange of particle coordinates of particles which belong to the same element $p \in \mathfrak{P}$.  }
\end{hypothesis}

\begin{remark} {\rm The function $\tau$ in Hypothesis \ref{sym} is used to  specify the statistics of identical particles.
The value $0$ will  be used to describe bosons while the value 1 will be used  describe fermions. By physical laws,
 spin zero particles are bosons while
spin 1/2 particles are fermions. }
\end{remark}

For a  finite set $S$ we shall denote by $\mathfrak{S}_S$ the set of all permutations of the set $S$.
 For a subset $S \subset\{1,...,N\} $    and $\sigma \in  \mathfrak{S}_S$
we denote by    $\underline{\sigma}$ its extension to  $\{1,...,N\}$
by setting it equal to the identity  on $\{1,....,N\} \setminus S$.
Suppose  the partition $\mathfrak{P}$ satisfies Hypothesis \ref{sym}.  Then  for any $p \in \mathfrak{P}$ and $\sigma \in \mathfrak{S}_p$
it follows that $\mathfrak{U}(\underline{\sigma})$, defined in \eqref{defofsymmop},   leaves $\HH_{\rm mat}$ invariant,
and we can define the subspace
\begin{align} \label{ccc}
\HH_{{\rm mat},\mathfrak{P},\tau} = \{ \psi \in \HH_{\rm mat} :  \forall p \in \mathfrak{P} , \forall \sigma \in \mathfrak{S}_p , \mathfrak{U}(\underline{\sigma})  \psi = {\rm sgn}(\sigma)^{\tau(p)} \psi \} ,
\end{align}
where ${\rm sgn} (\sigma)$ defines the signum of the permutation  $\sigma$.
Furthermore, it follows from the definitions that $\mathfrak{U}(\underline{\sigma})$
commutes with the symmetries
$\mathcal{U}_{\rm mat}$,   $\mathcal{P}_{\rm mat}$,   $\mathcal{T}_{\rm mat}$ as well as
the Hamiltonian  $H$. In particular, $\HH_{{\rm mat},\mathfrak{P},\tau}\otimes \FF_s(\mathfrak{g}) $ is an invariant subspace of $H$.

\begin{theorem} \label{thm:kramerexp2}  Suppose that  the  partition $\mathfrak{P}$, the function $\tau$
 and the potential $V$,    satisfy Hypothesis  \ref{sym}.
Suppose $\sum_{j=1}^N 2 s_j $ is odd, and  (i) or (ii) of Theorem \ref{thm:kramerexp} holds.
Then,  any eigenvalue     of   $H |_{  \HH_{{\rm mat},\mathfrak{P},\tau} \otimes \FF_s(\mathfrak{g})}$ has even or infinite multiplicity.
\end{theorem}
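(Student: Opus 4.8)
The plan is to reduce this to the already-proved Theorem~\ref{thm:kramerexp} by checking that the relevant anti-unitary symmetry restricts to the invariant subspace $\HH_{{\rm mat},\mathfrak{P},\tau} \otimes \FF_s(\mathfrak{g})$ and still squares to $-1$ there. First I would recall that, as noted in the paragraph preceding the theorem, the permutation operators $\mathfrak{U}(\underline\sigma)$ for $\sigma \in \mathfrak{S}_p$, $p \in \mathfrak{P}$, commute with $\mathcal{T}_{\rm mat}$, with $\mathcal{P}_{\rm mat}$, and with $H$; hence they commute with $\TT = \TT_{\rm mat}\otimes\TT_{\rm f}$ and with $\PP$ (the field factors act on $\FF_s(\mathfrak{g})$ and trivially commute with $\mathfrak{U}(\underline\sigma)\otimes 1$). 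Therefore the anti-unitary operator $\theta$, taken to be $\TT$ in case (i) and $\TT\PP$ in case (ii), leaves $\HH_{{\rm mat},\mathfrak{P},\tau}\otimes\FF_s(\mathfrak{g})$ invariant: if $\mathfrak{U}(\underline\sigma)\psi = {\rm sgn}(\sigma)^{\tau(p)}\psi$ then, since ${\rm sgn}(\sigma)^{\tau(p)} \in \{\pm 1\}$ is real and $\theta$ is anti-linear commuting with $\mathfrak{U}(\underline\sigma)$, we get $\mathfrak{U}(\underline\sigma)\theta\psi = {\rm sgn}(\sigma)^{\tau(p)}\theta\psi$ as well.

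Next I would invoke the symmetry-of-$H$ statements already established: in case (i), $\TT H \TT^* = H$ by Theorem~\ref{eq:thmbzerokr}; in case (ii), $\TT\PP H (\TT\PP)^* = H$ by Theorem~\ref{eq:thmbzerokr2}. Since $H$ leaves $\HH_{{\rm mat},\mathfrak{P},\tau}\otimes\FF_s(\mathfrak{g})$ invariant and is essentially self-adjoint, its restriction $H|_{\HH_{{\rm mat},\mathfrak{P},\tau}\otimes\FF_s(\mathfrak{g})}$ is self-adjoint on that subspace, and $\theta|_{\HH_{{\rm mat},\mathfrak{P},\tau}\otimes\FF_s(\mathfrak{g})}$ is an anti-unitary symmetry of it in the sense of the Definition preceding Theorem~\ref{le:kram}. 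Moreover $\theta^2 = -1$: in case (i), $\TT^2 = (-1)^{\sum_j 2s_j} = -1$ because $\sum_j 2s_j$ is odd, by Proposition~\ref{prop:comm}; in case (ii), since $\TT$ and $\PP$ commute (Proposition~\ref{prop:comm}) and $\PP^2 = 1$ (it is a product of the involutions $\PP_{{\rm p},s_j}$ and $\Gamma(\PP_{\mathfrak{g}})$ with $\PP_{\mathfrak{g}}^2 = 1$), we have $(\TT\PP)^2 = \TT^2\PP^2 = -1$. These facts transfer verbatim to the restriction.

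Finally I would apply Theorem~\ref{le:kram} (Abstract Kramers Degeneracy) to the self-adjoint operator $H|_{\HH_{{\rm mat},\mathfrak{P},\tau}\otimes\FF_s(\mathfrak{g})}$ with the anti-unitary symmetry $\theta$ satisfying $\theta^2 = -1$: every eigenvalue is at least doubly degenerate, and every eigenvalue of finite multiplicity has even multiplicity, which is precisely the claimed conclusion that every eigenvalue has even or infinite multiplicity.

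I do not expect a serious obstacle here; the content is entirely bookkeeping. The one point that needs a line of care is verifying that $\theta$ genuinely preserves the symmetric subspace — this rests on the stated commutation of $\mathfrak{U}(\underline\sigma)$ with $\mathcal{T}_{\rm mat}$ and $\mathcal{P}_{\rm mat}$ together with the reality of the eigenvalue ${\rm sgn}(\sigma)^{\tau(p)}$ — and that self-adjointness of $H$ descends to the restriction on the invariant reducing subspace, so that Theorem~\ref{le:kram} is applicable. Everything else is a direct citation of the results collected above.
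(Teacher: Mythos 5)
Your proposal is correct and follows essentially the same route as the paper's proof: observe that $\mathcal{T}$ and $\mathcal{P}$ commute with the permutation operators $\mathfrak{U}(\underline\sigma)$, so the anti-unitary symmetry leaves the symmetric subspace invariant, and then apply the abstract Kramers theorem to the restriction exactly as in Theorem~\ref{thm:kramerexp}. The paper leaves the bookkeeping (reality of ${\rm sgn}(\sigma)^{\tau(p)}$, self-adjointness of the restriction, $\theta^2=-1$) implicit; your write-up merely spells it out.
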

\begin{proof} Follows from the same proof as  Theorem \ref{thm:kramerexp},  by observing in addition that $\mathcal{T}$ and $\mathcal{P}$ commute with $\mathfrak{U}(\underline{\sigma})$
for any $\sigma \in \mathfrak{S}_p$ and $p \in \mathfrak{P}$, and thus leave  $\HH_{{\rm mat},\mathfrak{P},\tau}$ invariant.
\end{proof}

\begin{remark} {\rm We note that Theorem~\ref{thm:kramerexp2} for the special case $\mathfrak{P} = \{ p \}$ with $p= \{1,....,N\}$, $s_j =1/2$ for all $j \in p$, and  $\tau(p) = 1$, and    with the additional assumption that $V$ is given by the Coulomb potential of $N$ electrons in the  presence
of the electric field of a  nucleus
was shown in \cite{LosMiySpo09}. }
\end{remark}

\section{Translationally invariant  Hamiltonians}  \label{sec:trainv}
We write  the Hamiltonian \eqref{defofham}
acting in the Hilbert space
$
\HH_{\rm mat}   \otimes \FF_s(\mathfrak{g})
$
in the following notation
$$
H =
\sum_{j=1}^N T_j
+ H_{\rm f}  + V(\widehat{x}_1,...,\widehat{x}_N)  , \quad   T_j := \frac{1}{2 m_j} (p_j + q_j A(\widehat{x}_j))^2 + \mu_j \widehat{S}_j \cdot B(\widehat{x}_j) ,
$$
and  we assume that there is no external magnetic field.
Furthermore, we assume that  the potential $V$   in the definition of the Hamiltonian \eqref{defofham}  is  translationally  invariant, i.e., that for all $a \in \R^3$
\begin{align} \label{translinv}
V(x_1 + a,..., x_N + a) =   V(x_1,..., x_N) .
\end{align}
Using the unitary transformation
$$
 U = \exp( i {x}_N \cdot  (P_{\rm f}  + \sum_{j=1}^{N-1} p_j ) )
$$
and a Fourier transform in the variable $x_N$ we can write
$$
H  = \int^\oplus_{\R^3} H(\xi) d \xi ,
$$
where
$$
H(\xi) := \frac{1}{2m_N}  (\xi  - \sum_{j=1}^{N-1} p_j  - P_{\rm f}  +  q_N A(0))^2 + \mu_N  \widehat{S}_N \cdot B( 0 ) +
\sum_{j=1}^{N-1} T_j
+ H_{\rm f}  + V(\widehat{x}_1,...,\widehat{x}_{N-1},0)
$$
acts in
\begin{align} \label{defofHprime}
 \HH_{\rm mat}'  \otimes  \mathcal{D}_{s_N} \otimes \FF_s(\mathfrak{g})   ,
\end{align}
where
$$\HH_{\rm mat}'  :=  \bigotimes _{j=1}^{N-1}
L^2(\R^3 ; \mathcal{D}_{s_j} ) ,
$$
cf. \cite{LosMiySpo07,HasHer08-1}. We define  $\mathcal{U}_{\rm  mat}'$, $\mathcal{P}_{\rm mat}'$, and $\mathcal{T}_{\rm mat}'$  on $ \HH_{\rm mat}'$
as in Section  \ref{sec:sym}. On \eqref{defofHprime} we define the symmetries
\begin{align*}
\mathcal{U}'(U)  & := \mathcal{U}_{\rm  mat}'(U) \otimes  D_{s_N}(U) \otimes  \mathcal{U}_{\rm f}(\pi(U)) , \quad U \in SU(2) \\
\mathcal{P}'  & := \mathcal{P}_{\rm  mat}' \otimes  \one_{\mathcal{D}_{s_N}}  \otimes \mathcal{P}_{\rm f} \\
\mathcal{T}'  & := \mathcal{T}_{\rm  mat}' \otimes  \mathcal{T}'_{{\rm p},s}   \otimes \mathcal{T}_{\rm f} ,
\end{align*}
where we defined
\begin{align*}
&
   \mathcal{T}'_{{\rm p},s} := \left\{ \begin{array}{ll}  K_s     & , \quad \text{ if } \quad  s = 0  , \\
  (K_s   \sigma_2 )   &  , \quad \text{ if }  \quad s = 1/2   \end{array} \right.
 \end{align*}
where $K_{s}$ denotes complex conjugation on $\mathcal{D}_s = \C^{2s+1}$.

\begin{lemma} \label{lem:invt} Suppose $V$ is translationally invariant, cf.  \eqref{translinv}.
\begin{enumerate}[(a)]
\item  Let  $ U \in SU(2) $, $R = \pi(U)$,   $V(Rx_1,...,Rx_N,0)  = V(x_1,...,x_N,0) $ for all $x_j \in \R^3$,  and
 $\kappa(\cdot)  = \kappa( R \cdot) $.   Then   for all $\xi \in \R^3$
$$ \mathcal{  U }'(U)  H(\xi)  \mathcal{  U }'(U)^{*} = H( R \xi )  .$$
\item Let  $V(x_1,...,x_{N-1},0)= V(-x_1,...,-x_{N-1},0) $  for all $x_j \in \R^3$   and  $\kappa(\cdot)  = \kappa(-\cdot)$.  Then for all $\xi \in \R^3$  $$ \mathcal{  P }'  H(\xi)  {\mathcal{  P }'}^{*} = H(-\xi ) . $$
\item  If $ \overline{\kappa(\cdot)}  = \kappa(-\cdot) $, then for all $\xi \in \R^3$
$$ \mathcal{  T }' H(\xi) {\mathcal{T }'}^{*} = H(-\xi)  .$$
\end{enumerate}
\end{lemma}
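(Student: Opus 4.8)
The plan is to verify the three identities by transforming the fiber Hamiltonian $H(\xi)$ summand by summand, reading off the behaviour of each term from the results of Section~\ref{sec:sym}. The key observation is that the primed symmetries $\mathcal{U}'(U)$, $\mathcal{P}'$, $\mathcal{T}'$ differ from $\mathcal{U}(U)$, $\mathcal{P}$, $\mathcal{T}$ only in that the $N$-th particle's spatial factor $L^2(\R^3;\mathcal{D}_{s_N})$ is replaced by the bare factor $\mathcal{D}_{s_N}$, on which $D_{s_N}(U)$, $\one$, respectively $\mathcal{T}'_{{\rm p},s_N}$, act; the Fock factor $\FF_s(\gggg)$ and the matter factors with index $j \le N-1$ are untouched. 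Hence the transformation rules for $A(0)$, $B(0)$, $H_{\rm f}$, $P_{\rm f}$ from Propositions~\ref{trafooffieldrot}, \ref{trafooffieldpar}, \ref{proptime} and the rules for $p_j$, $\widehat{x}_j$ with $j \le N-1$ apply verbatim, while for $\widehat{S}_N$ the same computation as in Propositions~\ref{trafooffieldrot}(c), \ref{trafooffieldpar}(c), \ref{proptime}(c) gives $\mathcal{U}'(U)\widehat{S}_N\mathcal{U}'(U)^*=R^{-1}\widehat{S}_N$, $\mathcal{P}'\widehat{S}_N{\mathcal{P}'}^*=\widehat{S}_N$, and $\mathcal{T}'\widehat{S}_N{\mathcal{T}'}^*=-\widehat{S}_N$.

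For part (a) I would treat the squared first term first. Under $\mathcal{U}'(U)$ one has $p_j\mapsto R^{-1}p_j$ (Proposition~\ref{trafooffieldrot}(b)), $P_{\rm f}\mapsto R^{-1}P_{\rm f}$ (Proposition~\ref{trafooffieldrot}(h)), and $A(0)\mapsto R^{-1}A(0)$ (Lemma~\ref{lem:rotinfock}(a), using $\kappa(R\cdot)=\kappa(\cdot)$ and that $0$ is a fixed point of $R$), while the real scalar $\xi$ is unchanged; thus $\xi-\sum_{j<N}p_j-P_{\rm f}+q_N A(0)$ becomes $R^{-1}\bigl(R\xi-\sum_{j<N}p_j-P_{\rm f}+q_N A(0)\bigr)$, and since $R^{-1}\in SO(3)$ preserves the Euclidean inner product on the three components, its square becomes $\bigl(R\xi-\sum_{j<N}p_j-P_{\rm f}+q_N A(0)\bigr)^2$. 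The term $\mu_N\widehat{S}_N\cdot B(0)$ is invariant, since $\widehat{S}_N$ and $B(0)$ both acquire the factor $R^{-1}$ and the dot product is rotation invariant; $\sum_{j<N}T_j+H_{\rm f}$ is invariant by the computation of Theorem~\ref{rotfullop} (specialized to $B_{\rm ext}=0$) together with Proposition~\ref{trafooffieldrot}(g); and $V(\widehat{x}_1,\dots,\widehat{x}_{N-1},0)$ is invariant by the rotation hypothesis on $V$ (using $R^{-1}0=0$). Collecting, one obtains $H(R\xi)$.

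Parts (b) and (c) follow the same scheme. Under $\mathcal{P}'$ each of $p_j$, $P_{\rm f}$, $A(0)$ changes sign (Proposition~\ref{trafooffieldpar}(b),(h), Lemma~\ref{lemmapartitysumfield}(a), using $\kappa(-\cdot)=\kappa(\cdot)$) while $\widehat{S}_N$ and $B(0)$ are fixed, so the squared term becomes $\bigl(-\xi-\sum_{j<N}p_j-P_{\rm f}+q_N A(0)\bigr)^2$; with $\mu_N\widehat{S}_N\cdot B(0)$, $\sum_{j<N}T_j+H_{\rm f}$ and $V(\widehat{x}_1,\dots,\widehat{x}_{N-1},0)$ all invariant (the last by the parity hypothesis on $V$), this gives $H(-\xi)$. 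Under the anti-unitary $\mathcal{T}'$ the real scalar $\xi$ and the real potential $V$ are left fixed, $p_j$ and $P_{\rm f}$ change sign (from $p_j=-i\nabla$ and Proposition~\ref{proptime}(b),(h)), $A(0)$ changes sign (Lemma~\ref{lem:lemmatime}, using $\overline{\kappa(\cdot)}=\kappa(-\cdot)$), and $B(0)$ and $\widehat{S}_N$ both change sign so that $\mu_N\widehat{S}_N\cdot B(0)$ is invariant; again the squared term becomes $\bigl(-\xi-\sum_{j<N}p_j-P_{\rm f}+q_N A(0)\bigr)^2$, and with $\sum_{j<N}T_j+H_{\rm f}$ invariant (as in Theorem~\ref{eq:thmbzerokr}) one arrives at $H(-\xi)$.

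The only step needing genuine care is the squared first term: one must first recognise the transformed linear combination as $R^{-1}$ (respectively $-1$) applied to a linear combination in which $\xi$ has been replaced by $R\xi$ (respectively $-\xi$), and only afterwards invoke orthogonality of $R$ (respectively $(-1)^2=1$) to conclude that the square transforms correctly; in the time-reversal case one must additionally keep track of the anti-linearity, which is precisely what leaves $\xi$ and $V$ invariant while flipping the signs of $p_j$, $P_{\rm f}$ and $A(0)$. Everything else is routine term-by-term bookkeeping with the formulas of Section~\ref{sec:sym}.
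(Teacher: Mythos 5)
Your proof is correct and is essentially the same as the paper's: the paper disposes of this lemma in one sentence by citing Lemmas~\ref{lem:rotinfock}, \ref{lemmapartitysumfield}, \ref{lem:lemmatime} and Propositions~\ref{trafooffieldrot}, \ref{trafooffieldpar}, \ref{lem:statet} "and their trivial adaption," and your write-up is exactly that adaption carried out term by term, including the only slightly delicate point of pulling out $R^{-1}$ (respectively $-1$) from the linear combination inside the square before invoking orthogonality of $R$.
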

\begin{proof}
The Lemma follows as a consequence of Lemmas~\ref{lem:rotinfock},   \ref{lemmapartitysumfield}, and  \ref{lem:lemmatime}   and Propositions~\ref{trafooffieldrot}, \ref{trafooffieldpar}, and \ref{lem:statet},   respectively, and their trivial  adaption to \eqref{defofHprime}.
\end{proof}

\begin{theorem}\label{thm:transinv1} Suppose  $V$ is translationally invariant and   $\sum_{j=1}^N 2 s_j$ is odd. If $\overline{\kappa(\cdot)} = \kappa(-\cdot)$ each eigenvalue of $H(0)$
has even or infinite multiplicity. If in addition $V(x_1,...,x_{N-1},0)= V(-x_1,...,-x_{N-1},0) $ for all $x_j \in \R^3$   and $\kappa(-\cdot) = \kappa(\cdot)$,  then for all $\xi \in \R^3$ each eigenvalue of $H(\xi)$ has even or infinite multiplicity.
\end{theorem}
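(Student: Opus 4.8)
The plan is to apply the abstract Kramers degeneracy theorem (Theorem~\ref{le:kram}) to the fiber Hamiltonian $H(\xi)$, using the anti-unitary symmetries furnished by Lemma~\ref{lem:invt}. For the first claim I would take $\theta := \mathcal{T}'$. By Lemma~\ref{lem:invt}(c), under the assumption $\overline{\kappa(\cdot)} = \kappa(-\cdot)$ one has $\mathcal{T}' H(\xi) (\mathcal{T}')^{*} = H(-\xi)$; in particular $\mathcal{T}' H(0) (\mathcal{T}')^{*} = H(0)$, and since $H(0)$ is self-adjoint (part of the standing assumptions of this section) this says exactly that $\mathcal{T}'$ is an anti-unitary symmetry of $H(0)$ in the sense of the definition above. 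It then remains to compute $(\mathcal{T}')^{2}$. From $\mathcal{T}' = \mathcal{T}_{\rm mat}' \otimes \mathcal{T}'_{{\rm p},s_N} \otimes \mathcal{T}_{\rm f}$ and Lemma~\ref{lem:antiline}, together with the evident adaptation of Proposition~\ref{prop:comm} to the fiber Hilbert space \eqref{defofHprime} — which gives $\mathcal{T}_{\rm f}^{2} = 1$, $(\mathcal{T}_{\rm mat}')^{2} = (-1)^{\sum_{j=1}^{N-1} 2 s_j}$, and $(\mathcal{T}'_{{\rm p},s_N})^{2} = (-1)^{2 s_N}$ (the last because $(K_s \sigma_2)^{2} = -1$ for $s = 1/2$) — I obtain $(\mathcal{T}')^{2} = (-1)^{\sum_{j=1}^{N} 2 s_j} = -1$ since the sum is odd. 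Theorem~\ref{le:kram} then yields that each eigenvalue of $H(0)$ is at least doubly degenerate and that any eigenvalue of finite multiplicity has even multiplicity, i.e.\ even or infinite multiplicity.

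For the second claim, under the additional hypotheses I would take $\theta := \mathcal{T}' \mathcal{P}'$, which is anti-unitary since $\mathcal{P}'$ is unitary and $\mathcal{T}'$ anti-unitary. Both sets of hypotheses of Lemma~\ref{lem:invt}(b) and (c) are now in force, because $\overline{\kappa(\cdot)} = \kappa(-\cdot) = \kappa(\cdot)$ and the parity symmetry of $V(\cdot,\dots,\cdot,0)$ is assumed. Hence, using $\theta^{-1} = (\mathcal{P}')^{-1} (\mathcal{T}')^{-1}$ together with Lemma~\ref{lem:invt}(b) and (c), I get $\theta H(\xi) \theta^{-1} = \mathcal{T}' \bigl(\mathcal{P}' H(\xi) (\mathcal{P}')^{*}\bigr) (\mathcal{T}')^{*} = \mathcal{T}' H(-\xi) (\mathcal{T}')^{*} = H(\xi)$ for every $\xi \in \R^{3}$, so $\theta$ is an anti-unitary symmetry of each $H(\xi)$. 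Since $\mathcal{P}'$ and $\mathcal{T}'$ commute (again by the fiber analog of Proposition~\ref{prop:comm}: they act as the identity and as $\mathcal{T}'_{{\rm p},s_N}$ on the $N$-th spin factor, and reduce to the known case on the remaining factors) and $(\mathcal{P}')^{2} = 1$, it follows that $\theta^{2} = (\mathcal{T}')^{2}(\mathcal{P}')^{2} = -1$. Applying Theorem~\ref{le:kram} to $\theta$ and $H(\xi)$ then gives the claim for all $\xi$.

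The change-of-variables identities are entirely packaged into Lemma~\ref{lem:invt}, so nothing analytic is at stake. The only point requiring genuine care is the bookkeeping of the square of the anti-unitary operator on the fiber space \eqref{defofHprime}: one must check that the split-off $N$-th spin factor $\mathcal{T}'_{{\rm p},s_N}$ contributes exactly the sign $(-1)^{2 s_N}$, so that $(\mathcal{T}')^{2}$ equals the full $(-1)^{\sum_{j=1}^{N} 2 s_j}$ rather than merely $(-1)^{\sum_{j=1}^{N-1} 2 s_j}$; this, and the (routine but necessary) verification that $\mathcal{P}'$ and $\mathcal{T}'$ commute on \eqref{defofHprime}, are the steps I expect to be the main obstacle, and both are straightforward adaptations of Proposition~\ref{prop:comm}.
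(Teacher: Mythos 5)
Your proposal is correct and follows essentially the same route as the paper: invoke Lemma~\ref{lem:invt}(c) (resp.\ (b) and (c)) to obtain the anti-unitary symmetry $\mathcal{T}'$ (resp.\ $\mathcal{P}'\mathcal{T}'$) of $H(0)$ (resp.\ $H(\xi)$), check that it squares to $-1$ via the fiber adaptation of Proposition~\ref{prop:comm}, and apply the abstract Kramers theorem. Your bookkeeping of the sign $(-1)^{2 s_N}$ from the split-off spin factor and the commutativity of $\mathcal{P}'$ with $\mathcal{T}'$ are exactly the points the paper compresses into ``its trivial adaption to \eqref{defofHprime}.''
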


\begin{proof} The theorem follows as a consequence of  Parts (c) and (b) of  Lemma~\ref{lem:invt}, Theorem~\ref{le:kram}.
The first statement follows using the anti-linear symmetry $\mathcal{T}'$. The second statement follows using the anti-linear symmetry $\mathcal{P}' \mathcal{T}'$ and their commutativity
property, cf.  Proposition~\ref{prop:comm} and its trivial adaption to \eqref{defofHprime}.
\end{proof}

Next we consider quantum systems with identical particles. For notational simplicity, we shall assume that there is a single  particle which
is distinguishable from the rest. This is  satisfied for atoms, ions and many molecules. Otherwise,
a  further restriction  to subspaces would be necessary.

\begin{theorem}\label{thm:transinv2}   Suppose  $V$ is translationally invariant and   $\sum_{j=1}^N 2 s_j$ is odd.
Suppose that  the  partition $\mathfrak{P}$, the function $\tau$
 and the potential $V$,    satisfy Hypothesis  \ref{sym}. Furthermore, assume  $\{N \} \in \mathfrak{P}$ and let $\mathfrak{P}' =\mathfrak{P} \setminus \{ \{ N \} \}   $
and $\tau' = \tau|_{\mathfrak{P}'}$.
If $\overline{\kappa(\cdot)} = \kappa(-\cdot)$ each eigenvalue of $H(0)$ when restricted to  $  \HH'_{{\rm mat},\mathfrak{P}',\tau'} \otimes \mathcal{D}_{s_N}  \otimes \FF_s(\mathfrak{g}) $
has even or infinite multiplicity. If in addition  $V(x_1,...,x_{N-1},0)= V(-x_1,...,-x_{N-1},0) $ for all $x_j \in \R^3$ and $\kappa(-\cdot) = \kappa(\cdot)$,  then each eigenvalue of $H(\xi)$
when restricted to  $  \HH'_{{\rm mat},\mathfrak{P}',\tau'} \otimes  \mathcal{D}_{s_N} \otimes \FF_s(\mathfrak{g}) $
has even or infinite multiplicity.
\end{theorem}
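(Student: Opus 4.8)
The plan is to run the argument of Theorem~\ref{thm:transinv1} on the fiber Hamiltonian while additionally tracking the symmetric subspace, exactly as in the passage from Theorem~\ref{thm:kramerexp} to Theorem~\ref{thm:kramerexp2}. First I would use Lemma~\ref{lem:invt}(c) to record that, under $\overline{\kappa(\cdot)}=\kappa(-\cdot)$, the anti-unitary operator $\mathcal{T}'$ satisfies $\mathcal{T}' H(0){\mathcal{T}'}^{*}=H(0)$, that is, it is a symmetry of the self-adjoint operator $H(0)$; and, under the additional hypotheses, I would combine Lemma~\ref{lem:invt}(b) and~(c) with the commutativity of $\mathcal{P}'$ and $\mathcal{T}'$ (the trivial adaptation of Proposition~\ref{prop:comm} to \eqref{defofHprime}) to get $(\mathcal{P}'\mathcal{T}') H(\xi)(\mathcal{P}'\mathcal{T}')^{*}=\mathcal{P}' H(-\xi){\mathcal{P}'}^{*}=H(\xi)$ for every $\xi\in\R^3$, so that $\mathcal{P}'\mathcal{T}'$ is an anti-unitary symmetry of $H(\xi)$.

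Next I would check the square condition needed for the abstract Kramers theorem. From the factorization $\mathcal{T}'=\mathcal{T}_{\rm mat}'\otimes\mathcal{T}'_{{\rm p},s_N}\otimes\mathcal{T}_{\rm f}$ together with $\mathcal{T}_{\rm f}^2=1$, $(\mathcal{T}_{\rm mat}')^2=(-1)^{\sum_{j=1}^{N-1}2s_j}$ and $(\mathcal{T}'_{{\rm p},s_N})^2=(-1)^{2s_N}$, one gets $(\mathcal{T}')^2=(-1)^{\sum_{j=1}^{N}2s_j}=-1$ because the exponent $\sum_{j=1}^N 2s_j$ is odd by hypothesis; since $(\mathcal{P}')^2=1$ and $\mathcal{P}'$ commutes with $\mathcal{T}'$, also $(\mathcal{P}'\mathcal{T}')^2=(\mathcal{P}')^2(\mathcal{T}')^2=-1$.

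The only genuinely new point beyond Theorem~\ref{thm:transinv1} is that these two anti-unitary symmetries preserve the symmetric subspace $\HH'_{{\rm mat},\mathfrak{P}',\tau'}\otimes\mathcal{D}_{s_N}\otimes\FF_s(\mathfrak{g})$. Since $\{N\}\in\mathfrak{P}$, each $\sigma\in\mathfrak{S}_p$ with $p\in\mathfrak{P}'$ extends to a permutation $\underline{\sigma}$ acting only on the first $N-1$ matter factors, and the corresponding operator $\mathfrak{U}(\underline{\sigma})$ (tensored with the identity on the remaining factors) commutes both with $\mathcal{P}'_{\rm mat}$ and $\mathcal{T}'_{\rm mat}$ (checked on one-particle spaces, as in the discussion preceding Theorem~\ref{thm:kramerexp2}) and with $H(\xi)$ (the particles in $p$ share all parameters and $V(\cdot,0)$ is permutation symmetric by Hypothesis~\ref{sym}, while $\sum_{j=1}^{N-1}p_j$, $\sum_{j=1}^{N-1}T_j$ and $H_{\rm f}$ are permutation symmetric as well). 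Hence $\HH'_{{\rm mat},\mathfrak{P}',\tau'}$, being a joint eigenspace of the commuting unitaries $\mathfrak{U}(\underline{\sigma})$, is reducing for $H(\xi)$ and invariant under $\mathcal{T}'$ and $\mathcal{P}'$; therefore $\mathcal{T}'$ restricts to an anti-unitary symmetry with square $-1$ of the self-adjoint operator $H(0)|_{\HH'_{{\rm mat},\mathfrak{P}',\tau'}\otimes\mathcal{D}_{s_N}\otimes\FF_s(\mathfrak{g})}$, and likewise $\mathcal{P}'\mathcal{T}'$ for $H(\xi)|_{\HH'_{{\rm mat},\mathfrak{P}',\tau'}\otimes\mathcal{D}_{s_N}\otimes\FF_s(\mathfrak{g})}$ under the additional hypotheses. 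Theorem~\ref{le:kram} then yields even or infinite multiplicity for every eigenvalue, which is the assertion.

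I do not anticipate a substantive obstacle: once Lemma~\ref{lem:invt} and the commutation relations of Proposition~\ref{prop:comm} are in hand, the proof is purely structural. The one point requiring a line of care is the bookkeeping that all commutation and square identities, as well as the reduction by the symmetric subspace, genuinely descend from $\HH_{\rm mat}\otimes\FF_s(\mathfrak{g})$ to the fibered space \eqref{defofHprime} --- i.e.\ that the ``trivial adaptation'' invoked in Lemma~\ref{lem:invt} and Theorem~\ref{thm:transinv1} really is verbatim here, which it is because $\mathcal{D}_{s_N}$ enters $\mathcal{T}'$, $\mathcal{P}'$ and $\mathcal{U}'$ only through the elementary one-particle operators.
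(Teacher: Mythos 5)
Your proposal is correct and follows essentially the same route as the paper: the paper's proof simply says it follows from the proof of Theorem~\ref{thm:transinv1} together with the additional observation that $\mathcal{T}'$ and $\mathcal{P}'$ commute with $\mathfrak{U}(\underline{\sigma})$ for $\sigma\in\mathfrak{S}_p$, $p\in\mathfrak{P}'$, which is exactly the invariance of $\HH'_{{\rm mat},\mathfrak{P}',\tau'}\otimes\mathcal{D}_{s_N}\otimes\FF_s(\mathfrak{g})$ that you verify. Your explicit check that $(\mathcal{T}')^2=(\mathcal{P}'\mathcal{T}')^2=-1$ and the appeal to Lemma~\ref{lem:invt} and Theorem~\ref{le:kram} match what the paper implicitly carries over from Theorem~\ref{thm:transinv1}.
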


\begin{proof} Follows from the same proof as  Theorem \ref{thm:transinv1},  by observing in addition that
$\mathcal{T}'$ and $\mathcal{P}'$ commute with
$\mathfrak{U}(\underline{\sigma})$
for any $\sigma \in \mathfrak{S}_p$ and $p \in \mathfrak{P}'$.
\end{proof}

\begin{remark}{\rm We note that the statement  of Theorem 6.2 was proven for
the special case where $N=1$ and $V=0$ for small coupling in \cite{HiroshimaSpohn.2001}
and for general coupling in \cite{Hiroshima.2007}.  Clearly, Theorem \ref{thm:transinv2}
covers the special case of $N-1$ electrons with spin 1/2 and a spinless
nucleus with  pairwise Coulomb interactions  ($\mathcal{P} = \{\{1,....,N-1\}, \{N\} \}  $),  cf. Remark 5.2 in   \cite{LosMiySpo09}.
We note that  whereas ground states of fiber Hamiltonians
describing electrons do not exist for nonzero momentum \cite{HasHer08-1}, they are shown to   exist
for  atoms and small absolute values of the  momentum \cite{LosMiySpo07}.}
\end{remark}

\section{Schr\"odinger Representation}  \label{sec:schr}

In this section we define rotation, parity and time reversal symmetry in the so called Schr\"odinger representation
of non-relativistic qed. To this end,
we recall the Schwartz space of smooth functions of rapid decrease  $\mathcal{S}(\R^d;\mathbb{F})$, with $\mathbb{F} = \R$ or  $\mathbb{F} = \C$, which is the set of infinitely
differentiable $\mathbb{F}$-valued functions $f(x)$ on $\R^d$ for which
\begin{equation} \label{eq:seminormschwartz}
\| f \|_{\alpha,\beta} = \sup_{x \in \R^d} |x^\alpha \partial^\beta f(x) | < \infty
\end{equation}
for all $\alpha, \beta \in \N_0^d$. Let $\underline{\mathcal{S}} = \mathcal{S}(\R^3;\R)^3$ equipped
with  the product topology.
The topological dual space $ \underline{\mathcal{S}}'$ can be identified with the set of all $T \in \mathcal{S}'(\R^3;\R)^3$, with $T(f) = T_1(f_1) + T_2(f_2) + T_3(f_3)$.

On $ \underline{\mathcal{S}} $ we  define the  symmetric positive semi-definite form
\begin{align} \label{defofB}
B( v , w )  = \sum_{i,j} \int  \frac{1}{|k|} \overline{\hat{v}_i(k) } P_{i,j}(k)    \hat{w}_j(k) d^3 k ,
\end{align}
where we recall
\begin{align}   \label{defofP2}
P(k)_{a,b} := \delta_{ab} - \frac{k_a k_b}{|k|^2}  , \quad a , b = 1,2,3 , \quad k \neq 0 .
\end{align}
Let
$$
c(f) = e^{-\frac{1}{4} B(f,f)}
$$
for $f \in \underline{\mathcal{S}}$.

By definition a {\it cylinder set} in $ \underline{\mathcal{S}}' $ is a set
$$
\{ T \in   \underline{\mathcal{S}}' : (T(f_1),....,T(f_n)) \in \Omega  \} ,
$$
where $f_1,...,f_n $ are $n$ fixed elements in $ \underline{\mathcal{S}}$ and $\Omega$ is a fixed Borel set in $\R^n$. A {\it cylinder set measure} on $  \underline{\mathcal{S}}' $  is a measure, $\mu$,
on the $\sigma$-algebra, generated by the cylinder sets, with  $\mu(  \underline{\mathcal{S}}'  ) = 1$. By construction, each $f \in  \underline{\mathcal{S}}$ defines a measurable function $\varphi(f)$ on
$  \underline{\mathcal{S}}'$ by
\begin{align} \label{varphidefofT}
\varphi(f)(T) = T(f).
\end{align}
In particular it follows that  for all $\alpha , \beta \in \R$ and $f, g \in  \underline{\mathcal{S}} $
\begin{align}
\varphi(\alpha f + \beta g) = \alpha \varphi(f) + \beta \varphi(g) . \label{eq:lin}
\end{align}
We shall use the following theorem, see \cite{GliJaf87,GliJaf85-1,GliJaf85-2,Fro77,FefFroGra97}.

\begin{theorem} \label{thm:mainschr1}  There exists a unique cylinder set measure $\nu$ on $ \underline{\mathcal{S}}' $ such that  for all $f \in  \underline{\mathcal{S}}  $
\begin{equation} \label{eq:fourierminlos}
\exp(- \frac{1}{4} B(f,f) ) = \int \exp ( i \varphi(f) ) d\nu
\end{equation}
Furthermore, $\nu$ has the following properties.
\begin{enumerate}[(a)]
\item For each $f \in    \underline{\mathcal{S}}  $ the function $\varphi(f)$ is a Gaussian random variable with mean zero and variance $\frac{1}{2} B(f,f)$.
\item   For  $f_1,...f_n \in  \underline{\mathcal{S}}   $ the random variables $\varphi(f_1),...,\varphi(f_n)$ are jointly Gaussian random variables.
\item  Let $\mathcal{U} = \{ F(\varphi(f_1),...,\varphi(f_n) ) : F \in \mathcal{S}(\R^n; \C) , f_1,...,f_n \in    \underline{\mathcal{S}}   \}$. Then $\mathcal{U}$ is dense in $L^2( \underline{\mathcal{S}}'  ,d\nu)$.
\item \label{supportofmeas}   If $f \in \underline{\mathcal{S}}$ and
$P \widehat{f} = 0$, then $\varphi(f) = 0$ almost surely, cf. \eqref{defofP2}.
 In particular, for almost all $T  = (T_1,T_2,T_3)  \in \underline{\mathcal{S}}'$ we have $\nabla \cdot T = 0$.
\end{enumerate}
\end{theorem}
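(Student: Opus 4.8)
The plan is to obtain $\nu$ from the Bochner--Minlos theorem (see e.g.\ \cite{GliJaf87,Fro77}), applied to the nuclear space $\underline{\mathcal{S}}$ and the functional $c$. Three facts must be checked: $c(0)=1$, which is clear since $B(0,0)=0$; continuity of $c$ on $\underline{\mathcal{S}}$; and positive definiteness of $c$. For continuity it suffices that the bilinear form $B$ be continuous, and since $P(k)$ is an orthogonal projection, $B(f,f) = \int_{\R^3} |k|^{-1} |P(k)\widehat{f}(k)|^2\, d^3k \le \int_{\R^3} |k|^{-1} |\widehat{f}(k)|^2\, d^3k$. The weight $|k|^{-1}$ is integrable over $\{|k|\le 1\}$ in three dimensions, so the corresponding part of the integral is bounded by $\sup_k|\widehat{f}(k)|^2$; over $\{|k|\ge 1\}$ one uses $|\widehat{f}(k)|^2 \le C(1+|k|)^{-5}$ with $C$ a finite combination of the seminorms \eqref{eq:seminormschwartz} of $f$ (the Fourier transform being continuous on $\mathcal{S}$). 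Hence $B(f,f) \le C' \| f \|^2$ for a continuous seminorm $\|\cdot\|$, and $c$ is continuous.

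For positive definiteness I reduce to finite dimensions. Given $f_1,\dots,f_n \in \underline{\mathcal{S}}$, the matrix $C_{jk} := \tfrac12 B(f_j,f_k)$ is real symmetric and positive semi-definite, hence is the covariance of a centred Gaussian vector on $\R^n$ whose characteristic function is $t \mapsto \exp\bigl(-\tfrac12\sum_{j,k} t_j C_{jk} t_k\bigr) = \exp\bigl(-\tfrac14 B(\sum_j t_j f_j, \sum_k t_k f_k)\bigr)$. As a characteristic function it is positive definite (Bochner), and evaluating at the differences $t_j-t_l$ yields positive definiteness of $c$. The Bochner--Minlos theorem then produces a probability measure $\nu$ on the $\sigma$-algebra generated by the cylinder sets with $\int \exp(i\varphi(f))\,d\nu = c(f)$, which is \eqref{eq:fourierminlos}. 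For uniqueness note that \eqref{eq:fourierminlos} and the linearity \eqref{eq:lin} determine, for every finite family $g_1,\dots,g_n$, the characteristic function of the push-forward of $\nu$ under $T \mapsto (T(g_1),\dots,T(g_n))$, hence that push-forward, hence the $\nu$-measure of every cylinder set; since the cylinder sets form a $\pi$-system generating the $\sigma$-algebra, two probability measures with this property coincide.

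Properties (a) and (b) are read off the characteristic functional: by \eqref{eq:lin}, for $f_1,\dots,f_n \in \underline{\mathcal{S}}$ and $t\in\R^n$,
$$
\int \exp\Bigl(i\sum_j t_j \varphi(f_j)\Bigr)\,d\nu = c\Bigl(\sum_j t_j f_j\Bigr) = \exp\Bigl(-\tfrac14 B\bigl(\sum_j t_j f_j,\ \sum_k t_k f_k\bigr)\Bigr) ,
$$
which is the characteristic function of a centred jointly Gaussian vector with covariance $\bigl(\tfrac12 B(f_j,f_k)\bigr)_{j,k}$; the case $n=1$ gives (a) with variance $\tfrac12 B(f,f)$. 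For (c) one shows by a standard approximation argument that the linear span of indicators of cylinder sets is dense in $L^2(\nu)$, and then approximates such an indicator $\mathbf{1}_\Omega(\varphi(f_1),\dots,\varphi(f_n))$ in $L^2(\nu)$ by $F(\varphi(f_1),\dots,\varphi(f_n))$ with $F\in\mathcal{S}(\R^n;\C)$, using that $\mathcal{S}(\R^n;\C)$ is dense in $L^2$ of the finite Borel (Gaussian) push-forward of $\nu$ under $T\mapsto(T(f_1),\dots,T(f_n))$.

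Finally, (d): if $P\widehat{f}=0$ then $\sum_j P_{ij}(k)\widehat{f}_j(k)=0$ a.e., so $B(f,f)=0$ by \eqref{defofB}, and by (a) the Gaussian $\varphi(f)$ has variance $0$, hence $\varphi(f)=0$ almost surely. Applying this to $f=\nabla g$ for $g\in\mathcal{S}(\R^3;\R)$ (so $\widehat{f}_j(k)=i k_j\widehat{g}(k)$ and $P(k)\widehat{f}(k)=0$) gives $(\nabla\cdot T)(g) = -\sum_j T_j(\partial_j g) = -\varphi(\nabla g)(T)=0$ for $\nu$-a.e.\ $T$. Since $\mathcal{S}(\R^3;\R)$ is separable, fixing a countable dense subset and discarding the countable union of the corresponding null sets shows that on a set of full $\nu$-measure the continuous functional $g\mapsto(\nabla\cdot T)(g)$ vanishes on a dense set, hence identically; that is, $\nabla\cdot T=0$ $\nu$-almost surely. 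I expect the main obstacle to be the measure-theoretic bookkeeping here — upgrading ``for each fixed $g$, almost surely'' to ``almost surely, for all $g$'' via separability of $\mathcal{S}$ — together with verifying the continuity hypothesis of Bochner--Minlos in the presence of the integrable singularity $|k|^{-1}$.
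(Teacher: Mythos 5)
Your proof is correct and follows essentially the same route as the paper: verify the hypotheses of Bochner--Minlos (normalization, continuity, positive definiteness of $c$), obtain $\nu$, read (a)--(b) off the characteristic functional via the linearity \eqref{eq:lin}, prove (c) by approximating indicators of cylinder sets by Schwartz functions of finitely many $\varphi(f_j)$'s, and prove (d) from the vanishing variance together with a separability argument. The only cosmetic differences are that you spell out the continuity estimate for $B$ and a standalone $\pi$-system uniqueness argument (which the paper subsumes in the citation of Minlos' theorem), and your positive-definiteness argument works directly with the covariance matrix $\bigl(\tfrac12 B(f_j,f_k)\bigr)_{j,k}$ rather than first diagonalizing $B$ on the span of the $f_j$; both reduce to Bochner's theorem for a finite-dimensional Gaussian.
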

A proof of  Theorem \ref{thm:mainschr1} will be given in Appendix~\ref{app:minlos}.
Henceforth, we shall denote by $\nu$ the unique measure on $\underline{\mathcal{S}}'$ satisfying  \eqref{eq:fourierminlos}.

\begin{remark} {\rm We note that part~\ref{supportofmeas} of Theorem   \ref{thm:mainschr1} will not be needed.
Nevertheless it is interesting in its own. }
\end{remark}

To formulate the next theorem we  define
$$
\underline{\mathcal{S}}_0 :=\{ g \in  \underline{\mathcal{S}} : \nabla \cdot g = 0 \}  .
$$
By $\overline{( \ \cdot \ )}^{\rm cl}$ we shall denote the operator closure.

\begin{theorem} \label{thm:unitransformexp}  There exists a unique  unitary transformation $V_\vv : \FF_s(\vv) \to L^2( \underline{\mathcal{S}}'   ,d\nu)$ with
the  following properties
\begin{enumerate}[(i)]
\item $ V_\vv \Omega = 1 $,
\item $V_\vv \overline{( a^*(i_\omega f) + a(i_\omega f ) )}^{\rm cl}  V_\vv^{-1} = \varphi(f) $,   for all $ f \in  \underline{\mathcal{S}}_0 $,
\end{enumerate}
where $i_\omega  f =  (\omega^{-1/2}  \hat{f})^{\vee} $ and $\varphi(f)$ is understood as a multiplication operator.
Moreover,  we  have  $V_\vv \Gamma( \mathcal{K}_{\vv}) = J  V_\vv$, where $J$ denotes complex conjugation in  $ L^2(  \underline{\mathcal{S}}'  ,d\nu)$. 
\end{theorem}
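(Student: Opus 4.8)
The plan is to construct $V_\vv$ explicitly on the dense set of finite-particle vectors by prescribing where it sends the "coherent vectors'' (or, equivalently, Wick-ordered exponentials of creation operators), and then to verify that the prescription extends to an isometry with dense range by comparing inner products on both sides. First I would note that the family $\{f\in\underline{\mathcal S}_0\}$ is dense in $\vv$ after the identification $f\mapsto i_\omega f$ (since $\omega^{-1/2}\widehat{(\,\cdot\,)}$ has dense range and divergence-free Schwartz fields are dense among divergence-free $L^2$ fields), so by the usual Fock-space theory the Weyl-type vectors $\exp(a^*(i_\omega f))\Omega$, $f\in\underline{\mathcal S}_0$, have dense linear span in $\FF_s(\vv)$. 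On the $L^2$ side, Theorem~\ref{thm:mainschr1}(c) tells us that the polynomials-times-Gaussian functions $F(\varphi(f_1),\dots,\varphi(f_n))$ are dense in $L^2(\underline{\mathcal S}',d\nu)$, and since the $\varphi(f_j)$ are jointly Gaussian with covariance $\tfrac12 B(f_i,f_j)$, their exponentials $\exp(i\varphi(f))$ likewise have dense span. The key computation is that
\[
\inn{\,\exp(i\varphi(f))\,,\,\exp(i\varphi(g))\,}_{L^2(d\nu)}
= \int \exp(-i\varphi(f))\exp(i\varphi(g))\,d\nu
= \exp\!\Big(-\tfrac14 B(f-g,f-g)\Big),
\]
using \eqref{eq:fourierminlos} and \eqref{eq:lin}; and that $B(f,g)$ equals $2\,\Re\inn{i_\omega f, i_\omega g}_\vv$, indeed $B(f,g)=\inn{i_\omega f,i_\omega g}_\vv+\overline{\inn{i_\omega f,i_\omega g}_\vv}$ by the definition \eqref{defofB} of $B$, the definition of $i_\omega$, the divergence-free condition, and the identity $\sum_{\lambda}\varepsilon_a\varepsilon_b=P_{a,b}$ from \eqref{projeident}. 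Since $f,g\in\underline{\mathcal S}_0$ are real-valued, $\inn{i_\omega f,i_\omega g}_\vv$ is already real, so $B(f,g)=2\inn{i_\omega f,i_\omega g}_\vv$, which means the right-hand side above matches the known Fock-space inner product $\inn{\exp(a^*(i_\omega f))\Omega,\exp(a^*(i_\omega g))\Omega}=\exp(\inn{i_\omega f,i_\omega g}_\vv)$ up to the standard coherent-vector normalization. I would therefore define $V_\vv$ by sending a suitably normalized coherent vector to $\exp(i\varphi(f))$; matching inner products on both total sets then forces $V_\vv$ to be a well-defined isometry with dense range, hence unitary, and uniqueness is automatic because the two defining properties (i), (ii) determine $V_\vv$ on a dense set (property (ii) fixes the image of each $\varphi(f)$ as a self-adjoint operator, and together with $V_\vv\Omega=1$ this pins down $V_\vv$ on all Wick exponentials via the functional calculus / Weyl relations).

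Property (ii) itself I would extract by differentiating the coherent-vector identity: the operator $\overline{(a^*(i_\omega f)+a(i_\omega f))}^{\rm cl}$ is the Segal field operator $\Phi(i_\omega f)$, and conjugating $e^{it\Phi(i_\omega f)}$ by $V_\vv$ produces multiplication by $e^{it\varphi(f)}$ (this is exactly the statement that the Gaussian measure $\nu$ is the "$Q$-space'' realization of Fock space with the covariance dictated by $\omega^{-1}$ on divergence-free fields); differentiating in $t$ at $t=0$ on the common core of analytic vectors yields $V_\vv\Phi(i_\omega f)V_\vv^{-1}=\varphi(f)$. Property (i) holds by construction.

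For the final assertion $V_\vv\Gamma(\mathcal K_\vv)=J V_\vv$, I would check the intertwining on the same dense set. Since $\mathcal K_\vv$ is complex conjugation on $\vv$ and $f$ is real-valued with $i_\omega f=(\omega^{-1/2}\widehat f)^\vee$, one has $\mathcal K_\vv(i_\omega f)=i_\omega f$ (here I use that $\omega$ is even, $\omega(-k)=\omega(k)$, together with $\overline{\widehat f(k)}=\widehat f(-k)$ for real $f$, so $\omega^{-1/2}\widehat f$ has the right reality and its inverse Fourier transform is real); consequently $\Gamma(\mathcal K_\vv)$ fixes $\Omega$ and satisfies $\Gamma(\mathcal K_\vv)\,a^*(i_\omega f)\,\Gamma(\mathcal K_\vv)^*=a^*(\mathcal K_\vv i_\omega f)=a^*(i_\omega f)$ by \eqref{eq:GammaasharpGamma}, and similarly for $a$. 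Because $\Gamma(\mathcal K_\vv)$ is anti-unitary it sends the Wick exponential $\exp(a^*(i_\omega f))\Omega$ to itself, up to conjugating the complex scalar normalization; on the other side $J$ fixes $1$ and sends $\exp(i\varphi(f))$ to $\exp(-i\varphi(f))=\overline{\exp(i\varphi(f))}$, i.e. it conjugates scalars in the same way. Matching these under $V_\vv$ on the dense set, and using that both $V_\vv\Gamma(\mathcal K_\vv)$ and $JV_\vv$ are anti-unitary, gives the claimed identity. The main obstacle I anticipate is purely bookkeeping: carefully verifying the identity $B(f,g)=2\inn{i_\omega f,i_\omega g}_\vv$ with all Fourier-transform conventions, the factor $\tfrac14$ versus $\tfrac12$, and the projector identity lined up, and handling the domain/closure issues so that property (ii) is an equality of self-adjoint operators (not merely on a core); everything else is the standard $Q$-space construction applied to this particular covariance.
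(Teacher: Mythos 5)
Your plan is a genuinely different route from the paper's.  The paper does not compare coherent/Weyl vectors; instead it verifies that $i_\omega$ is an isometry of $\big(\underline{\mathcal{S}}_0,B\big)$ onto a dense subset of the real subspace $\{v\in\vv:\mathrm{Im}\,v=0\}$, uses Lemma~\ref{lem:grp22} to recognize $\varphi\circ i_\omega^{-1}$ as a Gaussian random process indexed by that real Hilbert space, and then invokes the abstract Theorem~\ref{thm:mainindex} (existence, uniqueness, and the $\Gamma(J)$-intertwining), whose proof proceeds via Wick monomials $:\phi(f_1)\cdots\phi(f_n):$ and the orthogonality Lemma~\ref{lem:orthogrel}.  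Your coherent-vector / Weyl-exponential matching is a classical alternative $Q$-space argument; what the paper's route buys is that all the Fock-space combinatorics is packaged once in Theorem~\ref{thm:mainindex}, so the present theorem (and its $\gggg$-corollary) drops out as a one-line application, and the anti-unitary intertwining comes for free from part (b) of that theorem rather than needing a separate check on coherent vectors as in your plan.

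There is, however, a concrete computational slip in your key covariance identity. You assert $B(f,g)=\langle i_\omega f,i_\omega g\rangle_\vv+\overline{\langle i_\omega f,i_\omega g\rangle_\vv}=2\langle i_\omega f,i_\omega g\rangle_\vv$, but a direct Plancherel computation gives, for $f,g\in\underline{\mathcal S}_0$ (so that $P\widehat f=\widehat f$) and $\omega(k)=|k|$,
\begin{align*}
\langle i_\omega f,i_\omega g\rangle_\vv
=\int_{\R^3}\omega(k)^{-1}\,\overline{\widehat f(k)}\cdot\widehat g(k)\,dk
=\int_{\R^3}\frac{1}{|k|}\,\overline{\widehat f(k)}\cdot P(k)\widehat g(k)\,dk
=B(f,g),
\end{align*}
with no factor of $2$; there is no second, conjugated term.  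You flag ``the factor $\tfrac14$ versus $\tfrac12$'' as a bookkeeping worry, and rightly so: the interplay between the $\tfrac12$ in Definition~\ref{def:gaussrandproc}(c), the $\tfrac14$ in \eqref{eq:fourierminlos}, the normalization of $a^*+a$ in Theorem~\ref{thm:mainindex}(ii), and the complexification $\rr_\C$ is exactly where the proof can silently go wrong, so the derivation of this identity should be spelled out rather than asserted via an incorrect ``$+\,\overline{(\cdots)}$'' step.  A second, smaller gap: you state that divergence-free Schwartz fields are dense in real divergence-free $L^2$ fields after applying $i_\omega$, but this requires a genuine argument — the paper cuts off near $k=0$ (multiplying by $(1-\chi_n)$ with $\chi_n$ shrinking) precisely because $\omega^{1/2}\widehat f$ need not be Schwartz and $\omega^{-1/2}$ is singular at the origin; your appeal to ``$\omega^{-1/2}\widehat{(\,\cdot\,)}$ has dense range'' does not by itself produce a Schwartz, divergence-free preimage.
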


The proof of  Theorem \ref{thm:unitransformexp} will be given in Appendix  \ref{app:minlos}.  Using  Lemma \ref{lem:trafohtov} we obtain
immediately  the following corollary.

\begin{corollary}   \label{thm:unitransformexp2} Let the notation be as in  in  Theorem \ref{thm:unitransformexp}.
 There exists a unique  unitary transformation $V_{\gggg} : \FF_s(\gggg) \to L^2( \underline{\mathcal{S}}'  ,d\nu)$ with
the  following properties
\begin{enumerate}[(i)]
\item $  V_{\gggg} \Omega = 1 $,
\item $  V_{\gggg}  \overline{( a^*( \tau_\epsilon^{-1} i_\omega f) + a( \tau_\epsilon^{-1} i_\omega f ) )}^{\rm cl}  V_{\gggg}^{-1} = \varphi(f) $,   for all $ f \in \underline{\mathcal{S}}_0 $.
\end{enumerate}
Moreover,  we  have  $V_{\gggg} \Gamma(  \mathcal{K}_{\gggg} ) = J  V_{\gggg}$, where $J$ denotes complex conjugation in  $ L^2( \underline{\mathcal{S}}'   ,d\nu)$.
\end{corollary}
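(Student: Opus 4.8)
The plan is to obtain $V_\gggg$ by composing $V_\vv$ from Theorem~\ref{thm:unitransformexp} with the second quantization of the identification $\tau_\varepsilon$ of Lemma~\ref{lem:trafohtov}. Since $\tau_\varepsilon : \gggg \to \vv$ is unitary, it induces a unitary $\Gamma(\tau_\varepsilon) : \FF_s(\gggg) \to \FF_s(\vv)$ acting as $\bigotimes_{j=1}^n \tau_\varepsilon$ on the $n$-particle sector (the obvious extension of the notation $\Gamma(\cdot)$ from Section~\ref{sec:model} to a unitary between two Hilbert spaces), and it satisfies the intertwining relation $\Gamma(\tau_\varepsilon)\, a^\#(h)\, \Gamma(\tau_\varepsilon)^{-1} = a^\#(\tau_\varepsilon h)$ for $h \in \gggg$, exactly as in \eqref{eq:GammaasharpGamma}. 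I then set $V_\gggg := V_\vv \, \Gamma(\tau_\varepsilon)$, which is unitary as a composition of unitaries.

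Next I would verify the three asserted properties. Property~(i) is immediate: $\Gamma(\tau_\varepsilon)\Omega = \Omega$, so $V_\gggg \Omega = V_\vv \Omega = 1$. For property~(ii), fix $f \in \underline{\mathcal{S}}_0$. Using that conjugation by the unitary $\Gamma(\tau_\varepsilon)$ commutes with taking operator closures, together with the intertwining relation applied to $h = \tau_\varepsilon^{-1} i_\omega f \in \vv$, one gets
\[
\Gamma(\tau_\varepsilon)\,\overline{\bigl(a^*(\tau_\varepsilon^{-1} i_\omega f) + a(\tau_\varepsilon^{-1} i_\omega f)\bigr)}^{\rm cl}\, \Gamma(\tau_\varepsilon)^{-1} = \overline{\bigl(a^*(i_\omega f) + a(i_\omega f)\bigr)}^{\rm cl} ,
\]
and applying $V_\vv(\cdot)V_\vv^{-1}$ and invoking property~(ii) of Theorem~\ref{thm:unitransformexp} yields $V_\gggg\, \overline{(a^*(\tau_\varepsilon^{-1} i_\omega f) + a(\tau_\varepsilon^{-1} i_\omega f))}^{\rm cl}\, V_\gggg^{-1} = \varphi(f)$. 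For the conjugation statement, I would combine $\mathcal{K}_\gggg = \tau_\varepsilon^{-1} \mathcal{K}_\vv \tau_\varepsilon$ from \eqref{eq:defofkh} with multiplicativity of $\Gamma$ to write $\Gamma(\mathcal{K}_\gggg) = \Gamma(\tau_\varepsilon)^{-1} \Gamma(\mathcal{K}_\vv) \Gamma(\tau_\varepsilon)$; then $V_\gggg \Gamma(\mathcal{K}_\gggg) = V_\vv \Gamma(\mathcal{K}_\vv) \Gamma(\tau_\varepsilon) = J V_\vv \Gamma(\tau_\varepsilon) = J V_\gggg$ by the corresponding statement for $V_\vv$.

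Uniqueness would follow by transporting a hypothetical competitor back through $\Gamma(\tau_\varepsilon)$: if $V_\gggg'$ satisfies (i) and (ii), then $W := V_\gggg' \Gamma(\tau_\varepsilon)^{-1} : \FF_s(\vv) \to L^2(\underline{\mathcal{S}}', d\nu)$ satisfies $W\Omega = 1$ and, by the intertwining relation with $T = \tau_\varepsilon^{-1}$, $W\, \overline{(a^*(i_\omega f) + a(i_\omega f))}^{\rm cl}\, W^{-1} = \varphi(f)$ for all $f \in \underline{\mathcal{S}}_0$; hence $W = V_\vv$ by the uniqueness part of Theorem~\ref{thm:unitransformexp}, so $V_\gggg' = V_\vv \Gamma(\tau_\varepsilon) = V_\gggg$. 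I do not expect a serious obstacle: the only points needing (mild) care are that conjugation by a unitary commutes with operator closures, so the closures in the statement are respected, and that $\tau_\varepsilon^{-1} i_\omega f$ is indeed the element of $\gggg$ corresponding under $\tau_\varepsilon$ to the vector $i_\omega f \in \vv$ in Theorem~\ref{thm:unitransformexp}; the bookkeeping of $\tau_\varepsilon$ versus $\tau_\varepsilon^{-1}$ in each intertwining identity is the easiest place to slip.
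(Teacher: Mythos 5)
Your proposal is correct and matches the paper's (terse) argument: the paper states the corollary follows ``immediately'' from Lemma~\ref{lem:trafohtov} and Theorem~\ref{thm:unitransformexp}, and what you have done is simply spell out that transport, setting $V_\gggg = V_\vv\,\Gamma(\tau_\varepsilon)$ and pushing existence, the two listed properties, the conjugation identity, and uniqueness through the unitary $\Gamma(\tau_\varepsilon)$ via \eqref{eq:GammaasharpGamma} and \eqref{eq:defofkh}. No gap; the bookkeeping of $\tau_\varepsilon$ versus $\tau_\varepsilon^{-1}$ is handled correctly.
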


Next we  define  symmetries in Schr\"odinger representation.
We will show in Theorem~\ref{thm:symfockschr}, below,   that they  agree
by the unitary transformations  of  Theorem~\ref{thm:unitransformexp}  and Corollary~\ref{thm:unitransformexp2} with the  definitions in Fock space representation.
We define for  $U \in SU(2)$ on $ \underline{\mathcal{S}}$ the representation
$$
(\UU_{   \underline{\mathcal{S}}    }(U) f)(x) = R f(R^{-1} x )  ,  \quad f \in   \underline{\mathcal{S}}  , \ x \in \R^3 ,
$$
where $R = \pi(U)$.
We define for $f \in    \underline{\mathcal{S}}$
$$
(\mathcal{P}_{  \underline{\mathcal{S}}   } f )(x) = - f(-x) .
$$
As a consequence of the definition  $ \mathcal{P}_{  \underline{\mathcal{S}}}^{-1} = \mathcal{P}_{  \underline{\mathcal{S}} } $.
Then this defines by duality  a transformation on $  \underline{\mathcal{S}}'   $ by
$$
(\UU_{   \underline{\mathcal{S}}' }(U) T ) (f)  =  T  (\UU_{   \underline{\mathcal{S}}    }(U)^{-1} f  )
$$
and
$$
(\mathcal{P}_{  \underline{\mathcal{S}}'   }  T )(f) =  T(\mathcal{P}_{   \underline{\mathcal{S}}   }^{-1} f ) ,
$$
for all $T \in  \underline{\mathcal{S}}'   $ and $f \in   \underline{\mathcal{S}}$.
On $L^2( \underline{\mathcal{S}}'   ,d\nu)$  we define for any $F \in L^2( \underline{\mathcal{S}}'    ,d\nu)$
\begin{align*}
(\UU_{ \rm Sch }(U) F )(T)  &  =   F( \UU_{\underline{\mathcal{S}}'}(U)^{-1}  T)  , \quad U \in SU(2) ,   \\
(\mathcal{P}_{  \rm Sch  } F)(T)  & = F(\mathcal{P}_{\underline{\mathcal{S}}'}^{-1}  T)  ,  \\
(\mathcal{K}_{  \rm Sch  } F )(T) & = \overline{F(T)} \\
 (\Theta_{\rm Sch} F)(T)  & = F(-T)
\end{align*}
for all $T \in   \underline{\mathcal{S}}'$.

\begin{lemma} Let $U \in SU(2)$.  The measure $\nu$ is  invariant with respect to  $\UU_{   \underline{\mathcal{S}}'   }(U)$ and  $\mathcal{P}_{   \underline{\mathcal{S}}'  }$. The  transformations $\UU_{  \rm Sch }(U)$,  $\mathcal{P}_{   \rm Sch  }$  are unitary
transformations on  $L^2(  \underline{\mathcal{S}}'   ,d\nu)$. The transformation $\mathcal{K}_{\rm Sch  }$ is an anti-unitary transformation  on $L^2(  \underline{\mathcal{S}}'   ,d\nu)$, which squares to one.
The measure $\nu$ is invariant with respect to $-1_{\underline{\mathcal{S}}'}$, and $\Theta_{\rm Sch}$ is a unitary
transformation on $L^2(\mathcal{S}',d\nu)$, which squares to one.
\end{lemma}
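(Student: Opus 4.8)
The plan is to deduce everything from the uniqueness part of Theorem~\ref{thm:mainschr1}: $\nu$ is the only cylinder set measure on $\underline{\mathcal{S}}'$ whose characteristic functional is $f\mapsto\exp(-\tfrac14 B(f,f))$. First I would note that each of $\UU_{\underline{\mathcal{S}}}(U)$ (with $R=\pi(U)\in SO(3)$), $\mathcal{P}_{\underline{\mathcal{S}}}$ and the map $-\one_{\underline{\mathcal{S}}}\colon f\mapsto -f$ is a linear homeomorphism of $\underline{\mathcal{S}}$, so the induced maps on the dual $\underline{\mathcal{S}}'$ are linear bijections that carry cylinder sets to cylinder sets; hence the push-forwards $(\UU_{\underline{\mathcal{S}}'}(U))_*\nu$, $(\mathcal{P}_{\underline{\mathcal{S}}'})_*\nu$ and $(-1_{\underline{\mathcal{S}}'})_*\nu$ are again cylinder set measures of total mass one. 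Using $\varphi(f)\circ\UU_{\underline{\mathcal{S}}'}(U)=\varphi(\UU_{\underline{\mathcal{S}}}(U)^{-1}f)$, and the corresponding identities for $\mathcal{P}_{\underline{\mathcal{S}}'}$ and $-1_{\underline{\mathcal{S}}'}$, the characteristic functional of each push-forward is $f\mapsto\exp(-\tfrac14 B(\Psi^{-1}f,\Psi^{-1}f))$ where $\Psi$ is the corresponding transformation of $\underline{\mathcal{S}}$; so invariance of $\nu$ will follow, by uniqueness, once one checks that $B$ is invariant under $\UU_{\underline{\mathcal{S}}}(U)$, $\mathcal{P}_{\underline{\mathcal{S}}}$ and $f\mapsto -f$.

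The invariance of $B$ is the only real computation. For $f\mapsto -f$ it is trivial since $B$ is quadratic. For the other two, write $S=\pi(U)$ or $S=-\one_{3\times 3}$, both in $O(3)$; by the identity \eqref{eq:derink0} the Fourier transform of $Sg(S^{-1}\cdot)$ is $S\widehat g(S^{-1}\cdot)$ and that of $-g(-\cdot)$ is $-\widehat g(-\cdot)$, by Lemma~\ref{lem:reltinvpol}(a) one has $P(S^{-1}k)=S^{-1}P(k)S$, while $|S^{-1}k|=|k|$ and the substitution $k\mapsto Sk$ has unit Jacobian; substituting into \eqref{defofB} and changing variables gives $B(\UU_{\underline{\mathcal{S}}}(U)v,\UU_{\underline{\mathcal{S}}}(U)w)=B(v,w)$ and $B(\mathcal{P}_{\underline{\mathcal{S}}}v,\mathcal{P}_{\underline{\mathcal{S}}}w)=B(v,w)$, the two sign changes in the parity case cancelling. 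This establishes invariance of $\nu$ under $\UU_{\underline{\mathcal{S}}'}(U)$, $\mathcal{P}_{\underline{\mathcal{S}}'}$ and $-1_{\underline{\mathcal{S}}'}$.

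For the $L^2$-statements I would use the elementary fact that if $\Phi$ is a $\nu$-preserving measurable bijection of $\underline{\mathcal{S}}'$ then $F\mapsto F\circ\Phi$ is unitary on $L^2(\underline{\mathcal{S}}',d\nu)$: linearity is clear, the change-of-variables formula gives $\langle F\circ\Phi,G\circ\Phi\rangle=\langle F,G\rangle$, and the inverse is $F\mapsto F\circ\Phi^{-1}$. Taking $\Phi=\UU_{\underline{\mathcal{S}}'}(U)^{-1}$, $\Phi=\mathcal{P}_{\underline{\mathcal{S}}'}^{-1}=\mathcal{P}_{\underline{\mathcal{S}}'}$ (using $\mathcal{P}_{\underline{\mathcal{S}}}^{-1}=\mathcal{P}_{\underline{\mathcal{S}}}$), and $\Phi=-1_{\underline{\mathcal{S}}'}$ shows $\UU_{\rm Sch}(U)$, $\mathcal{P}_{\rm Sch}$ and $\Theta_{\rm Sch}$ are unitary, and $(-1_{\underline{\mathcal{S}}'})^2=\mathrm{id}$ gives $\Theta_{\rm Sch}^2=1$. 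For $\mathcal{K}_{\rm Sch}\colon F\mapsto\overline F$, anti-linearity is immediate, and since $\nu$ is a positive measure $\langle\mathcal{K}_{\rm Sch}F,\mathcal{K}_{\rm Sch}G\rangle=\int\overline{\overline F}\,\overline G\,d\nu=\overline{\langle F,G\rangle}=\langle G,F\rangle$, so $\mathcal{K}_{\rm Sch}$ is anti-unitary; $\mathcal{K}_{\rm Sch}^2=1$ is clear.

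The point to phrase with care is the reduction step: checking that each push-forward of $\nu$ is still a genuine cylinder set measure so that the uniqueness in Theorem~\ref{thm:mainschr1} applies, and keeping track of inverses when passing from transformations of $\underline{\mathcal{S}}$ to transformations of $\underline{\mathcal{S}}'$ and then to composition operators on $L^2$. Beyond the $O(3)$-invariance of $B$, which follows directly from \eqref{eq:derink0} and Lemma~\ref{lem:reltinvpol}, the argument is entirely formal.
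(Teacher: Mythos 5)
Your proof is correct and follows essentially the same route as the paper: push $\nu$ forward under the dual transformation, compute that the characteristic functional is unchanged because $B$ is invariant under $\UU_{\underline{\mathcal{S}}}(U)$, $\mathcal{P}_{\underline{\mathcal{S}}}$, and $-\one_{\underline{\mathcal{S}}}$, and invoke the uniqueness part of Theorem~\ref{thm:mainschr1}; the unitarity of the induced composition operators and the (anti-)unitarity of $\mathcal{K}_{\rm Sch}$ then follow from measure-preservation. The paper merely asserts the invariance of $B$ without computation, whereas you carry it out via \eqref{eq:derink0} and Lemma~\ref{lem:reltinvpol}(a), so the only difference is the level of detail, not the method.
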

\begin{proof} Let $G$ stand for $\UU_{  \underline{\mathcal{S}}'   }(U)$ and
$\mathcal{P}_{   \underline{\mathcal{S}}' }$ and $g$ for
$\UU_{  \underline{\mathcal{S}}   }(U)$ and $\mathcal{P}_{   \underline{\mathcal{S}} }$,
respectively.
Then $G$   leaves the set of cylinder sets invariant, and hence the $\sigma$-algebra
generated by the cylinder sets. Since the form $B$ is invariant with
respect to  $G$, so is the measure $\nu$. To see this define $\nu_G(A) = \nu(G(A))$ for any measurable set $A$.
 Then for any $f \in    \underline{\mathcal{S}} $ we find from the definition of the integral
\begin{align*}
\exp(-\frac{1}{4}B(  f,  f)) & = \exp(-\frac{1}{4}B( g f,  g  f) ) =  \int \exp(i \varphi( g f) ) d \nu =   \int \exp(i  (G^{-1}  T)(  f) ) d \nu(T)   \\
&  =  \int \exp(i   T(  f) ) d \nu_G(T) =  \int \exp(i   \varphi(  f) ) d \nu_G .
\end{align*}
Thus it  follows $\nu = \nu_G$ from the uniqueness property  in  Theorem  \ref{thm:mainschr1}. Thus the  unitarity properties of $\mathcal{U}_{\rm Sch}(U)$
and $\mathcal{P}_{\rm Sch}$  on $L^2(  \underline{\mathcal{S}}'  ,d\nu)$
now follow  by the definition of the integral as a limit of simple functions.  The anti-unitarity of  $\mathcal{K}_{ \rm Sch   }$ is obvious.
The last statement about $\Theta_{\rm Sch}$ follows analogously  as above with $G =
- 1_{\underline{\mathcal{S}}'}$
and $g = - 1_{\underline{\mathcal{S}}}$.
\end{proof}

The following theorem relates the symmetries  in  the  Fock representation to
the symmetries in the Schr\"odinger representation.

\begin{theorem} \label{thm:symfockschr} Let  $V_{\vv}$ and $V_{\gggg}$ be the unique unitary  transformations  satisfying  (i) and  (ii) of Theorem  \ref{thm:unitransformexp} and Corollary  \ref{thm:unitransformexp2}, respectively.
Then the following identities hold.
\begin{enumerate}[(a)]
\item   $V_{\gggg}  \UU_{\rm f}(U) V_{\gggg}^{-1} = \UU_{ \rm Sch  }(U)$ and  $V_{\vv}  \Gamma(\UU_\vv(\pi(U)) V_{\vv}^{-1} = \UU_{\rm Sch}(U)$, for $U \in SU(2)$,
\item $V_{\gggg}  \PP_{\rm f} V_{\gggg}^{-1} = \PP_{  \rm Sch  }$ and  $V_{\vv}  \Gamma(\mathcal{P}_\vv)  V_{\vv}^{-1} = \PP_{\rm Sch}$,
\item   $V_{\gggg}  \Gamma(\mathcal{K}_\gggg ) V_{\gggg}^{-1} = \mathcal{K}_{  \rm Sch }$ and $V_{\vv}  \Gamma( \mathcal{K}_{\vv}) V_\vv^{-1} =  \mathcal{K}_{ \rm Sch    } $.
\item $V_{\gggg}  \Gamma( - 1_\gggg ) V_{\gggg}^{-1} = \Theta_{  \rm Sch }$ and
$V_{\vv}  \Gamma( - 1_\vv ) V_\vv^{-1} =  \Theta_{ \rm Sch    } $.
\item $V_{\gggg}  \TT_{\rm f} V_{\gggg}^{-1} =  \Theta_{ \rm Sch    } \mathcal{K}_{ \rm Sch    }$.
\end{enumerate}
\end{theorem}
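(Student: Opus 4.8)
The plan is to reduce the whole theorem to its ``$V_{\vv}$-version'' and then to prove the latter by invoking the uniqueness assertion of Theorem~\ref{thm:unitransformexp}. First I would observe that $V_{\gggg}=V_{\vv}\,\Gamma(\tau_\varepsilon)$: since $\tau_\varepsilon:\gggg\to\vv$ is unitary, \eqref{eq:GammaasharpGamma} gives $\Gamma(\tau_\varepsilon)\,a^\#(\tau_\varepsilon^{-1}h)\,\Gamma(\tau_\varepsilon)^{-1}=a^\#(h)$, so $V_{\vv}\Gamma(\tau_\varepsilon)$ satisfies (i) and (ii) of Corollary~\ref{thm:unitransformexp2} and hence equals $V_{\gggg}$ by uniqueness. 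Recalling $\UU_{\rm f}(U)=\Gamma(\tau_\varepsilon^{-1}\UU_\vv(\pi(U))\tau_\varepsilon)$, $\mathcal{P}_{\rm f}=\Gamma(\tau_\varepsilon^{-1}\mathcal{P}_\vv\tau_\varepsilon)$, $\Gamma(\mathcal{K}_\gggg)=\Gamma(\tau_\varepsilon^{-1}\mathcal{K}_\vv\tau_\varepsilon)$, $\Gamma(-1_\gggg)=\Gamma(\tau_\varepsilon^{-1}(-1_\vv)\tau_\varepsilon)$, together with the functoriality $\Gamma(AB)=\Gamma(A)\Gamma(B)$ (immediate from $\Gamma(T)|_{\hh^{\otimes_s n}}=\bigotimes^n T$), conjugation by $V_{\gggg}=V_{\vv}\Gamma(\tau_\varepsilon)$ cancels the $\Gamma(\tau_\varepsilon)$ factors, so the two identities asserted in each of (a)--(d) literally coincide. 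Moreover (c) is nothing but the last line of Theorem~\ref{thm:unitransformexp} (resp.\ Corollary~\ref{thm:unitransformexp2}), since $\mathcal{K}_{\rm Sch}F=\overline F=JF$. And (e) then follows at once from (c) and (d): $\TT_{\rm f}=\Gamma(-\mathcal{K}_\gggg)=\Gamma(-1_\gggg)\Gamma(\mathcal{K}_\gggg)$, hence $V_{\gggg}\TT_{\rm f}V_{\gggg}^{-1}=(V_{\gggg}\Gamma(-1_\gggg)V_{\gggg}^{-1})(V_{\gggg}\Gamma(\mathcal{K}_\gggg)V_{\gggg}^{-1})=\Theta_{\rm Sch}\mathcal{K}_{\rm Sch}$. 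So what remains is to prove the $V_{\vv}$-forms of (a), (b), (d).

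For those I would argue uniformly. Let $(g,\tilde g,G)$ run through the triples $(\UU_\vv(\pi(U)),\UU_{\underline{\mathcal{S}}}(U),\UU_{\rm Sch}(U))$, $(\mathcal{P}_\vv,\mathcal{P}_{\underline{\mathcal{S}}},\mathcal{P}_{\rm Sch})$, $(-1_\vv,-1_{\underline{\mathcal{S}}},\Theta_{\rm Sch})$, where $g$ acts on $\vv$, $\tilde g$ on $\underline{\mathcal{S}}$, and $G$ on $L^2(\underline{\mathcal{S}}',d\nu)$. I would put $\widetilde V:=G\,V_{\vv}\,\Gamma(g)^{-1}$, again a unitary $\FF_s(\vv)\to L^2(\underline{\mathcal{S}}',d\nu)$, and check that $\widetilde V$ satisfies (i) and (ii) of Theorem~\ref{thm:unitransformexp}; uniqueness then forces $\widetilde V=V_{\vv}$, i.e.\ $G=V_{\vv}\Gamma(g)V_{\vv}^{-1}$, which is the claim. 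Property (i) is immediate: $\widetilde V\Omega=G\,V_{\vv}\Omega=G\cdot 1=1$, since by definition $G$ acts on $L^2(\underline{\mathcal{S}}',d\nu)$ as a pull-back along a $\nu$-preserving point transformation of $\underline{\mathcal{S}}'$ and hence maps the constant function $1$ to itself.

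For property (ii) I would use two elementary facts. The first is the intertwining identity $g\,i_\omega=i_\omega\,\tilde g$ on $\underline{\mathcal{S}}_0$ (noting also that $\tilde g$ preserves $\underline{\mathcal{S}}_0$ and that $i_\omega$ maps $\underline{\mathcal{S}}_0$ injectively into $\vv$). The second is the conjugation formula $G\,\varphi(h)\,G^{-1}=\varphi(\tilde g\,h)$, which is read off directly from the definitions of $G$ and of $\varphi$ in \eqref{varphidefofT} via the duality $(G_{\underline{\mathcal{S}}'}T)(h)=T(\tilde g^{-1}h)$. Granting these, \eqref{eq:GammaasharpGamma} yields $\Gamma(g)^{-1}(a^*(i_\omega f)+a(i_\omega f))\Gamma(g)=a^*(i_\omega\tilde g^{-1}f)+a(i_\omega\tilde g^{-1}f)$ for $f\in\underline{\mathcal{S}}_0$, and therefore
\begin{align*}
\widetilde V\,\overline{(a^*(i_\omega f)+a(i_\omega f))}^{\rm cl}\,\widetilde V^{-1}
&= G\,V_{\vv}\,\overline{(a^*(i_\omega\tilde g^{-1}f)+a(i_\omega\tilde g^{-1}f))}^{\rm cl}\,V_{\vv}^{-1}\,G^{-1}\\
&= G\,\varphi(\tilde g^{-1}f)\,G^{-1}=\varphi(f),
\end{align*}
where the middle equality is property (ii) of $V_{\vv}$ (here $\tilde g^{-1}f\in\underline{\mathcal{S}}_0$) and the last is the conjugation formula. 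This is precisely property (ii) for $\widetilde V$, and the proof of the $V_{\vv}$-forms of (a), (b), (d) is complete.

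The only genuine computation --- and the step I expect to require the most care --- is the intertwining identity $g\,i_\omega=i_\omega\,\tilde g$, essentially because of the duality bookkeeping (and the minus sign hidden in $\mathcal{P}_{\underline{\mathcal{S}}}$). On the Fourier side $i_\omega$ is multiplication by the scalar $\omega^{-1/2}$, while $g$ and $\tilde g$ act there through $\widehat f\mapsto R\,\widehat f(R^{-1}\cdot)$, through $\widehat f\mapsto-\widehat f(-\cdot)$, or through $-\mathrm{id}$; the identity then follows from Lemma~\ref{lem:reltinvpol00} (commutation of the Fourier transform with $T_R$), the invariances $\omega(Rk)=\omega(k)=\omega(-k)$, and the fact that $R\in SO(3)$ commutes with multiplication by a scalar. (The scalar multiplier preserves the divergence-free condition, so $i_\omega$ indeed lands in $\vv$, and it is injective since $\omega>0$ a.e.) I would also note in passing that the antiunitary bookkeeping behind (c) fits the same scheme --- $f$ being real-valued forces $i_\omega f$ to be real-valued, hence $\mathcal{K}_\vv i_\omega f=i_\omega f$ and $\mathcal{K}_{\rm Sch}\varphi(f)\mathcal{K}_{\rm Sch}^{-1}=\varphi(f)$ --- although, as already observed, (c) is directly contained in Theorem~\ref{thm:unitransformexp}.
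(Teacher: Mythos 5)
Your proposal is correct and takes essentially the same route as the paper: reduce to the $\vv$-case via $V_{\gggg}=V_{\vv}\Gamma(\tau_\varepsilon)$, then for each symmetry form $\widetilde V=G\,V_{\vv}\,\Gamma(g)^{-1}$, verify properties (i) and (ii), and conclude $\widetilde V=V_{\vv}$ by the uniqueness assertion of Theorem~\ref{thm:unitransformexp}. Your write-up is merely a little tidier than the paper's in that it treats (a), (b), (d) by one uniform scheme built on the intertwining $g\,i_\omega=i_\omega\,\tilde g$ and the pull-back identity $G\varphi(h)G^{-1}=\varphi(\tilde g h)$, and it observes that (c) is already contained verbatim in the last sentence of Theorem~\ref{thm:unitransformexp}, whereas the paper runs the uniqueness argument once more for (c).
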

\begin{proof} We only discuss  the case for $\vv$, the case for $\gggg$ then follows using Lemma \ref{lem:trafohtov}. \\
 (a) Let $W = \UU_{  \rm Sch }(U)  V_{\vv} \Gamma( \UU_\vv(\pi(U))^{-1})$.
Then it follows from the definitions that   $W \Omega = 1$. Furthermore, it follows  for all $f \in   \underline{\mathcal{S}}_0$ using \eqref{eq:GammaasharpGamma},
the invariance of $\omega$ and  Theorem   \ref{thm:unitransformexp} (ii)
\begin{align}
W (\overline{a^*(i_\omega f) + a(i_\omega f)})^{\rm cl} W^{-1} & = \UU_{ \rm Sch   }(U) V_\vv  (\overline{a^*(i_\omega \UU_\vv(U)^{-1} f) + a(i_\omega \UU_\vv(U)^{-1}  f)})^{\rm cl}  V_\vv^{-1} \UU_{  \rm Sch }(U)^{-1} \nonumber  \\
& =\UU_{\rm Sch }(U)\varphi(\UU_\vv(U)^{-1} f) \UU_{ \rm Sch  }(U)^{-1} = \varphi(f),  \label{didentW}
\end{align}
where the last equality can be seen as follows.  For any $F \in L^2(\underline{\mathcal{S}}', d\nu)$ we find with $F' :=  \UU_{ \rm Sch  }(U)^{-1} F$ using
 $\UU_\vv(U) f = \UU_{  \underline{\mathcal{S}}   }(U) f$ and  inserting into the definitions, e.g. \eqref{varphidefofT},  that
\begin{align*}
& (\UU_{\rm Sch }(U) ( \varphi(\UU_\vv(U)^{-1} f)  F' ))(T)  =  (\varphi(\UU_{\underline{\mathcal{S}}}(U)^{-1} f)  F')( \mathcal{U}_{\underline{\mathcal{S}}'}(U)^{-1}T) \\
& =  ( \mathcal{U}_{\underline{\mathcal{S}}'}(U)^{-1}T) (\UU_{\underline{\mathcal{S}}}(U)^{-1} f) F'( \mathcal{U}_{\underline{\mathcal{S}}'}(U)^{-1}T) \\
&=  T(f) F'(  \mathcal{U}_{\underline{\mathcal{S}}'}(U)^{-1}T)  = \varphi(f) F( T) .
\end{align*}
This show the last equality in \eqref{didentW}.
It now follows from   \eqref{didentW}  that  $W = V_\vv$ by the uniqueness statement  of   Theorem  \ref{thm:unitransformexp}. This shows (a). Now
(b) is  shown similarly as~{(a).}

\noindent
(c) Let $W = \KK_{  \rm Sch }  V_{\vv} \Gamma(  \KK_\vv)$.
Then it follows from the definitions that   $W \Omega = 1$. Furthermore, it follows  for all $f \in   \underline{\mathcal{S}}_0$ using \eqref{eq:GammaasharpGamma},
the reality and symmetry assumptions of $\omega$, and  Theorem   \ref{thm:unitransformexp} (ii) that
\begin{align*}
W (\overline{a^*(i_\omega f) + a(i_\omega f)})^{\rm cl} W^{-1} & = \KK_{ \rm Sch   } V_\vv  (\overline{a^*( \KK_\vv i_\omega  f) + a( \KK_\vv i_\omega   f)})^{\rm cl}  V_\vv^{-1} \KK_{  \rm Sch }^{-1}   \\
& = \KK_{ \rm Sch   } V_\vv  (\overline{a^*( i_\omega  \KK_\vv  f) + a( i_\omega \KK_\vv  f)})^{\rm cl}  V_\vv^{-1} \KK_{  \rm Sch }^{-1}  \\
& =\KK_{\rm Sch }\varphi(  \KK_\vv f) \KK_{ \rm Sch  }^{-1} = \varphi(f) . 
\end{align*}
As in (a) it now follows   that  $W = V_\vv$ by the uniqueness statement  of   Theorem  \ref{thm:unitransformexp}. This shows (c), since $\mathcal{K}_\vv^{-1} = \mathcal{K}_\vv$. Now (d) follows analogously to (c) by  considering
$W = \Theta_{  \rm Sch }  V_{\vv} \Gamma(  - 1_\vv)$ and observing that
\begin{align*}
W (\overline{a^*(i_\omega f) + a(i_\omega f)})^{\rm cl} W^{-1} & = \Theta_{ \rm Sch   } V_\vv  (\overline{a^*( -  i_\omega  f) + a( -  i_\omega   f)})^{\rm cl}  V_\vv^{-1} \Theta_{  \rm Sch }^{-1}  \\
& = \Theta_{ \rm Sch   } V_\vv  (\overline{a^*( i_\omega   (- f )) + a( i_\omega  (- f)})^{\rm cl}  V_\vv^{-1} \Theta_{  \rm Sch }^{-1}   \\
& =\Theta_{\rm Sch }\varphi(  -  f) \Theta_{ \rm Sch  }^{-1} = \varphi(f) .  
\end{align*}
Again by uniqueness $W = V_\vv$. This shows (d).
Finally, (e) follows from (c) and (d).
\end{proof}

\begin{remark} {\rm  We see from Subsection \ref{subsecteim}   and Theorem \ref{thm:symfockschr}
that $\UU_{\rm Sch}$, $\PP_{\rm Sch}$ and $\TT_{\rm Sch } :=  \Theta_{\rm Sch}    \mathcal{K}_{\rm Sch} $ correspond to  the rotation, parity and time reversal symmetries in
 the Schr\"odinger representation.
Alternatively, one could redefine the field operators in the Hamiltonian so that $\mathcal{K}_{\rm Sch}$ has the  property of a   time reversal symmetry, cf.  \cite{LosMiySpo10}.
}
\end{remark}

\section*{Acknowledgements}

\noindent Both authors acknowledge financial support by the Research Training Group (1523/2) “Quantum and Gravitational Fields” when this project was initiated.
D. Hasler wants to thank I. Herbst for valuable discussions on the subject.
M. Lange  also acknowledges financial support from the European Research Council (ERC) under the European Union’s Horizon 2020 research and innovation programme (ERC StG MaMBoQ, grant agreement n.802901).

\appendix

\section{Gaussian Random Processes}\label{app:GaussianRandom}

In this appendix we review notations and results  about so called
Gaussian random processes. We follow \cite{sim74}. The main result is Theorem~\ref{thm:mainindex}, which will be used in the proof of Theorem~\ref{thm:unitransformexp} in Appendix~\ref{app:minlos}.
First we introduce the following definitions.

\begin{definition}  Let $(M,\mu)$ be a  probability measure space.  Let $V$ be a real vector space.
A {\bf random process indexed by} $V$ is a map $\phi$ from $V$ to the random variables on $M$, so that
almost everywhere
\begin{align*}
\phi(v + w) &= \phi(v) + \phi(w)  \quad \forall v , w \in V  \\
\phi(\alpha v ) &= \alpha \phi(v) \quad \forall \alpha\in \R , \forall v \in V .
\end{align*}
\end{definition}

For a random variable $Y$ on  probability measure space $(M,\mu)$ we will use the notation
 $$\langle Y \rangle :=  \int Y d \mu. $$

\begin{definition} \label{def:gaussrandproc}
Let $\rr$ be a real Hilbert space with inner product $\langle \cdot , \cdot \rangle_\rr$. A {\bf Gaussian random process indexed by} $\rr$
is a random process $\phi$  indexed by $\rr$   so that the following holds.
\begin{enumerate}[(a)]
\item  The set $\{ F(\phi(v_1),...,\phi(v_n) ) : v_1,...,v_n \in \rr , F \in \mathcal{S}(\R^n) \}$ is dense
in $L^2(M, d\mu)$, where $(M,\mu)$ is the probability  measure space of the random process $\phi$.
\item Each $\phi(v)$ is a Gaussian random variable.
\item $ \inn{\phi(v) \phi(w) } = \frac{1}{2} \inn{ v , w }_\rr$.
\end{enumerate}
\end{definition}

\begin{remark} \quad {\rm
\begin{enumerate}[(a)]
\item We note that in (a)  of Definition \ref{def:gaussrandproc}, we use
a different assumption than  in the definition of a Gaussian random process indexed by a Hilbert space in \cite{sim74}. However, in view of \cite[Lemma~I.5]{sim74}
this is equivalent.
\item  One can show that two Gaussian random processes indexed by the same real
Hilbert space  are  unique up to isomorphisms of probability measure spaces, see for example \cite[Theorem~I.6]{sim74}.
\item For any real Hilbert space $\rr$, a  Gaussian process indexed by $\rr$ exists. For a proof see Theorem~I.9 in \cite{sim74}.
\end{enumerate}}
\end{remark}

Let  $\rr $ be the complexification of $\rr$, i.e., $\rr_\C = \rr \oplus \rr$ as a real Hilbert space with a complex
structure given by $i(u,v) = (- v, u)$. We define \begin{align} \label{deofofJ}
J : \rr_\C \to \rr_\C , \qquad J (u,v) = (u,-v). \end{align}  Then $J$ is anti-linear and satisfies $J^2=1$.
Without mention we shall imbed $\rr$ in $\rr_\C$ by the map $\iota :  u  \mapsto (u,0)$.  For the operator introduced in  \eqref{defastar}
we shall write for notational convenience
$a^\#(f) = a^\#(\iota f )$ for $f \in \rr$.

Next we introduce the notion of Wick powers and Wick product of random variables.
To this end we introduce the following  multi-index notation. For $k \in \N$, $\underline{n} \in \N_0^k$ and $\alpha , \beta \in \C^k$
we define
$$
\alpha^{\underline{n}} = \prod_{j=1}^k \alpha^{n_j} , \quad \alpha \beta = \sum_{j=1}^k \alpha_j \beta_j  , \quad |\underline{n}| = \sum_{j=1}^k n_j  , \quad \underline{n}! = \prod_{j=1}^k n_j! \quad .
$$
Given  a formal power series in random variables $f_1, ...,f_k$  with finite moments on a measure space $(M,\mu)$,
which  we denote by $\sum_{\underline{n} \in \N_0^k } a_{\underline{n} }f^{\underline{n}}, $ where $a_{\underline{n}}  \in \C$
and  $$f^{\underline{n}} := \prod_{j=1}^k f_j^{n_j} , $$
we  define
the formal derivative
\begin{align*}
\frac{\partial}{\partial f_i}   \sum_{\underline{n} \in \N_0^k } a_{\underline{n} }f^{\underline{n}}  =
\sum_{\underline{n} \in \N_0^k } a_{\underline{n}} n_i f^{\underline{n}-\underline{e}_i }  .
\end{align*}
where $\underline{e}_i \in \N^k$ is defined such  that all components vanish except the $i$-th, which equals 1.

\begin{remark}{\rm  As in \cite{sim74}   we don't identify two series which are identical by virtue of substituting in
specific  arguments
(e.g. $f$ and $f^2$ are distinct as formal power series even if $f=1$). }
\end{remark}

\begin{definition} \label{def:wickorder}
Let $f_1, ...,f_k$ be random variables with finite moments on a measure space $(M,\mu)$. The {\bf Wick product}
$: f^{\underline{n}} :$ is defined inductively in $n =|\underline{n}| $ by
\begin{enumerate}[(i)]
\item $:f^{\underline{0}}: \ = 1 $, where $\underline{0} = (0,....,0)$,
\item $ \langle :f^{\underline{n} } : \rangle = 0  $ if $n\neq 0$,  
\item  $\frac{\partial}{\partial f_i} :  f^{\underline{n} }: \ = n_i : f^{\underline{n}- \underline{e}_i } :$ \quad .
\end{enumerate}
\end{definition}

The following theorem is the main theorem of this section.

\begin{theorem} \label{thm:mainindex}  Let $\phi$ be a Gaussian random process indexed by a separable real Hilbert space  $\rr$ on
the probability measure space $(M,\mu)$, and let
 $D$ be  a dense subset of $\rr$.  Then  there exists a unique unitary transformation $V : \FF_s(\rr_\C) \to L^2(M,d\mu)$ satisfying
\begin{enumerate}[(i)]
\item $V \Omega = 1 $
\item $V (\overline{a^*(f) + a(f)})^{\rm cl} V^{-1} = \phi(f) $ for all $f \in D$.
\end{enumerate}
Moreover, the following holds. We have
\begin{enumerate}[(a)]
\item  $V (\overline{a^*(f) + a(f)}) ^{\rm cl} V^{-1} = \phi(f) $ holds for all   $f \in \rr$.
\item  $\mathcal{J} V = V \Gamma(J)$, where $J$ is defined in   \eqref{deofofJ}     and $\mathcal{J}$ denotes ordinary complex  conjugation in $L^2(M,d\mu)$.
\item For all $f_j \in \rr$  we have \begin{equation} \label{constrofD}
V a^*(f_1) \cdots a^*(f_n) \Omega = \quad  :\phi(f_1) \cdots \phi(f_n): . 
\end{equation} 
\end{enumerate}
\end{theorem}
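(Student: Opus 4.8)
The plan is to follow the classical construction of the ``$Q$-space'' isomorphism of \cite{sim74}. Write $\Phi(f):=(\overline{a^*(f)+a(f)})^{\rm cl}$, which for $f\in\rr$ is self-adjoint, and $W(f):=e^{i\Phi(f)}$ for the associated Weyl operator. First I would \emph{define} $V$ on the (complex-linear) algebraic span $\mathcal D$ of the vectors $a^*(f_1)\cdots a^*(f_n)\Omega$ with $f_j\in\rr$, $n\ge 0$, by imposing \eqref{constrofD}, i.e.\ $V\,a^*(f_1)\cdots a^*(f_n)\Omega:=\;:\phi(f_1)\cdots\phi(f_n):$. Since $a^*(f_1)\cdots a^*(f_n)\Omega\in\rr_\C^{\otimes_s n}$ and the vectors obtained for fixed $n$ span a dense subspace of $\rr_\C^{\otimes_s n}$, the span $\mathcal D$ is dense in $\FF_s(\rr_\C)$. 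To see that $V$ is well defined and isometric on $\mathcal D$ it suffices to compare the two Gram matrices. On the Fock side, commuting all annihilators past all creators by the canonical commutation relations gives the ``permanent'' formula: $\langle a^*(f_1)\cdots a^*(f_m)\Omega,\,a^*(g_1)\cdots a^*(g_n)\Omega\rangle$ vanishes unless $m=n$, in which case it equals $\sum_{\pi}\prod_j\langle\iota f_j,\iota g_{\pi(j)}\rangle$, the sum over permutations $\pi$. On the $L^2$ side, Wick's theorem for Gaussian moments together with the absence of self-contractions inside a single Wick monomial (which is exactly what Definition~\ref{def:wickorder} enforces) gives $\langle:\phi(f_1)\cdots\phi(f_m):\,,\,:\phi(g_1)\cdots\phi(g_n):\rangle=\delta_{mn}\sum_\pi\prod_j\langle\phi(f_j)\phi(g_{\pi(j)})\rangle$. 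By Definition~\ref{def:gaussrandproc}(c) and the normalization of the inner product on $\rr_\C$ one has $\langle\phi(f)\phi(g)\rangle=\langle\iota f,\iota g\rangle$, so the two Gram matrices coincide and $V$ is a densely defined isometry. Its range contains all Wick monomials, hence, by the triangularity of Wick ordering, all polynomials in the variables $\phi(v)$, which are dense in $L^2(M,d\mu)$ by Definition~\ref{def:gaussrandproc}(a) (cf.\ \cite[Lemma~I.5]{sim74}). Thus $V$ extends to a unitary; (c) then holds by construction and (i) is the case $n=0$.

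Next I would establish (a), which contains (ii) because $D\subseteq\rr$: for $f\in\rr$, $V\Phi(f)V^{-1}$ equals multiplication by $\phi(f)$. Both $\Phi(f)$ and multiplication by $\phi(f)$ leave $\mathcal D$ invariant, and $V$ intertwines them on $\mathcal D$ term by term, since the canonical commutation relations give $a(f)\,a^*(f_1)\cdots a^*(f_n)\Omega=\sum_j\langle\iota f,\iota f_j\rangle\,a^*(f_1)\cdots\widehat{a^*(f_j)}\cdots a^*(f_n)\Omega$, while the defining relations of the Wick product (Definition~\ref{def:wickorder}, i.e.\ Gaussian integration by parts) give $\phi(f)\,:\phi(f_1)\cdots\phi(f_n):\;=\;:\phi(f)\phi(f_1)\cdots\phi(f_n):\,+\,\sum_j\langle\phi(f)\phi(f_j)\rangle\,:\phi(f_1)\cdots\widehat{\phi(f_j)}\cdots\phi(f_n):$, and these two recursions correspond under $V$. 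Because $\mathcal D$ is an invariant dense domain of analytic vectors for both operators --- the $n$-particle vectors for $\Phi(f)$, and polynomials in jointly Gaussian variables for multiplication by $\phi(f)$, by standard Gaussian moment bounds --- it is a core for each by Nelson's analytic vector theorem, so the two self-adjoint operators $V\Phi(f)V^{-1}$ and $\phi(f)$ agree.

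For uniqueness, if $\widetilde V$ also satisfies (i) and (ii), then (ii) forces $\widetilde V$ to conjugate $\Phi(f)$ to multiplication by $\phi(f)$, hence $W(f)$ to multiplication by $e^{i\phi(f)}$, for every $f\in D$; applying this to $\Omega$ and using (i) gives $\widetilde V W(f)\Omega=e^{i\phi(f)}=VW(f)\Omega$. Since $D$ is dense, $f\mapsto W(f)\Omega$ is norm continuous on $\rr$, and $\{W(f)\Omega:f\in\rr\}$ is total in $\FF_s(\rr_\C)$, the vectors $\{W(f)\Omega:f\in D\}$ have dense span, so $\widetilde V=V$. Finally, for (b): by \eqref{deofofJ}, $J$ restricts to the identity on $\iota\rr$, and \eqref{eq:GammaasharpGamma} together with $\Gamma(J)\Omega=\Omega$ gives $\Gamma(J)\,a^*(f_1)\cdots a^*(f_n)\Omega=a^*(f_1)\cdots a^*(f_n)\Omega$ for $f_j\in\rr$; hence $V\Gamma(J)\,a^*(f_1)\cdots a^*(f_n)\Omega=\;:\phi(f_1)\cdots\phi(f_n):$, which is a real-valued function on $M$ and so is fixed by complex conjugation $\mathcal J$. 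Thus $\mathcal J V$ and $V\Gamma(J)$, both bounded and anti-linear, agree on the dense set $\mathcal D$, hence everywhere.

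I expect the main obstacle to lie in the first step: verifying that the combinatorial ``permanent'' formula for Fock inner products coincides with the Gaussian Wick-pairing formula in $L^2(M,d\mu)$, and establishing density of the range --- the latter being the only point where the structure of $\phi$ as a Gaussian process indexed by $\rr$ is genuinely used, and where the cited lemma of \cite{sim74} carries the weight. Once these are in place, parts (a), (b) and uniqueness reduce, via Nelson's theorem and continuity, to identities already verified on $\mathcal D$.
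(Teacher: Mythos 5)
Your proof is correct and follows the same overall strategy (the $Q$-space isomorphism from Simon's book), but it differs from the paper's proof at a few non-trivial points. For existence and well-definedness, you compare Gram matrices directly via the ``permanent'' formula on the Fock side against the Wick pairing formula on the $L^2$ side; the paper instead fixes an orthonormal basis $\mathcal B$ of $\rr$ and invokes Lemma \ref{lem:orthogrel}, which establishes orthogonality relations for Wick powers of $B$-orthonormal Gaussians, so that well-definedness and isometry are checked on basis monomials only. These two computations are essentially equivalent, but your version requires the full Wick pairing identity $\langle :\phi(f_1)\cdots\phi(f_n):,\,:\phi(g_1)\cdots\phi(g_n):\rangle=\sum_\pi\prod_j\langle\phi(f_j)\phi(g_{\pi(j)})\rangle$, which is a slightly stronger statement than what Lemma \ref{lem:orthogrel} records (though provable by the same generating-function technique). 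For part (a), your invocation of Nelson's analytic vector theorem to pass from equality on the invariant dense domain to equality of the self-adjoint operators is actually a more careful justification than the paper's terse ``by linearity and continuity,'' so this is a point in your favour. The most genuine divergence is in the uniqueness argument: you pass to Weyl operators $W(f)=e^{i\Phi(f)}$ and rely on totality of coherent states $\{W(f)\Omega:f\in\rr\}$, plus norm-continuity in $f$ and density of $D$; the paper instead shows by induction on the particle-number truncation level that the vectors $\phi_\FF(f_1)\cdots\phi_\FF(f_n)\Omega$ with $f_j\in D$ span a dense subset of $\bigoplus_{n\le m}\rr_\C^{\otimes_s n}$, so that (i) and (ii) fix $V$ on a dense set directly, avoiding any spectral calculus or coherent-state input. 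Both routes are valid; the paper's is more elementary and self-contained, while yours trades the induction for a well-known but not-entirely-trivial fact about exponential vectors over a totally real subspace. Your proof of (b) matches the paper's.
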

A proof of Theorem \ref{thm:mainindex} can be found  in  Theorems I.6 and  I.11 in  \cite{sim74}. For the convenience
of the reader,  we shall outline a proof below.
First,  we need a few lemmas.
For  random variables $f_1,\cdots ,f_k$ with finite moments we define the formal power series for $\alpha  \in \C^k$ by
\begin{equation} \label{eq:defofexp}
: \exp( \alpha f ) : \ = \sum_{\underline{n} \in \N_0^k }^\infty \frac{  \alpha^{\underline{n}}   :f^{\underline{n}}:}{\underline{n}!} .
\end{equation} 

\begin{lemma} \label{lem:fundrelwick}  Let  $f_1,...,f_k$ be  random variables with finite moments on
a probability measure space $(M,\mu)$. Then
for all $\alpha \in \C^k$ the following holds
\begin{enumerate}[(a)]
\item $\langle : \exp(\alpha f ) : \rangle = 1 $
\item $: \exp(\alpha f ) : = \exp(\alpha f ) \langle \exp(\alpha f ) \rangle^{-1}$
\item  If $f$ is a Gaussian random variable, then  \eqref{eq:defofexp} converges in $L^1(M,d\mu)$
and  \label{lem:fundrelwick1}   $$: \exp( \alpha f ): = \exp( \alpha f ) \exp\left(-\frac{1}{2} \sum_{i,j} \alpha_i \alpha_j   \langle f_i f_j  \rangle \right) .  $$
\end{enumerate}
\end{lemma}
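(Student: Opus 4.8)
The plan is to treat (a), (b), and the closed-form identity in (c) purely within the ring $\C[[\alpha]]$ of formal power series in $\alpha=(\alpha_1,\dots,\alpha_k)$ whose coefficients are polynomials in the formal indeterminates $f_1,\dots,f_k$, and to reserve genuine analysis for the $L^1$-convergence assertion in (c) alone.

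First I would record the identity $\frac{\partial}{\partial f_i} :\exp(\alpha f): \ = \alpha_i :\exp(\alpha f):$, obtained by applying Definition~\ref{def:wickorder}(iii) term by term to \eqref{eq:defofexp} and shifting the index $\underline{n}\mapsto\underline{n}+\underline{e}_i$. The same shift gives $\frac{\partial}{\partial f_i}\exp(\pm\alpha f)=\pm\alpha_i\exp(\pm\alpha f)$, and, writing $\exp(\alpha f)=\prod_i\exp(\alpha_i f_i)$, one also has $\exp(-\alpha f)\exp(\alpha f)=1$ as formal power series. Part (a) then follows at once: applying the coefficientwise expectation $\langle\cdot\rangle$ to \eqref{eq:defofexp} kills every term with $\underline{n}\neq\underline{0}$ by Definition~\ref{def:wickorder}(ii), while the $\underline{n}=\underline{0}$ term contributes $\langle 1\rangle=1$ by Definition~\ref{def:wickorder}(i).

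For (b), put $c(\alpha):=\exp(-\alpha f)\,:\exp(\alpha f):$. The product rule for formal derivatives together with the two identities above yields $\frac{\partial}{\partial f_i}c(\alpha)=0$ for all $i$; hence each $\alpha$-coefficient of $c(\alpha)$ is a polynomial in $f$ with vanishing gradient, so a constant, i.e. $c(\alpha)\in\C[[\alpha]]$. Multiplying $c(\alpha)$ by $\exp(\alpha f)$ gives $:\exp(\alpha f):\ =c(\alpha)\exp(\alpha f)$; applying $\langle\cdot\rangle$ and invoking (a) gives $1=c(\alpha)\langle\exp(\alpha f)\rangle$. Since $\langle\exp(\alpha f)\rangle$ has constant term $\langle 1\rangle=1$ it is a unit in $\C[[\alpha]]$, so $c(\alpha)=\langle\exp(\alpha f)\rangle^{-1}$, which is exactly (b).

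For (c), assume $f_1,\dots,f_k$ are jointly Gaussian of mean zero. Wick's theorem (the Gaussian moment generating function) gives $\langle\exp(\alpha f)\rangle=\exp\bigl(\tfrac12\sum_{i,j}\alpha_i\alpha_j\langle f_if_j\rangle\bigr)$, both as an entire function of $\alpha\in\C^k$ and, on expanding, as an element of $\C[[\alpha]]$; substituting into (b) produces the asserted closed form as an identity in $\C[[\alpha]]$, equivalently (comparing $\alpha^{\underline{n}}$-coefficients) as an identity of the polynomials $:f^{\underline{n}}:$. To promote this to $L^1$-convergence of \eqref{eq:defofexp}, I would argue that $\alpha\mapsto\exp(\alpha f)$ is an entire $L^p(M,d\mu)$-valued map for every $p<\infty$: since $\exp(\alpha f)=\prod_i\exp(\alpha_i f_i)$ and the Gaussian bound $\langle|f_i|^m\rangle^{1/m}=O(\sqrt m)$ makes $\sum_n|z|^n\|f_i^{\,n}\|_p/n!$ converge, each factor is $L^p$-entire, hence so is the product, and multiplication by the scalar entire function $\exp(-\tfrac12\sum_{i,j}\alpha_i\alpha_j\langle f_if_j\rangle)$ preserves this. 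By uniqueness of Taylor coefficients, the $L^p$-convergent Taylor expansion of this $L^p$-valued entire function at $\alpha=0$ coincides coefficient by coefficient with the formal identity just established, i.e. it is $\sum_{\underline{n}}\alpha^{\underline{n}}:f^{\underline{n}}:/\underline{n}!$; taking $p=1$ gives the stated $L^1$-convergence, with limit the closed-form expression. The one genuinely non-formal step, and the main point to get right, is this last one — converting the formal identity into honest $L^1$-convergence via $L^p$-valued analyticity of $\alpha\mapsto\exp(\alpha f)$, which rests on the Gaussian moment estimates; everything else is formal-power-series bookkeeping.
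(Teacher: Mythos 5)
Your proof is correct, and for parts (a) and (b) it follows the same route as the paper: apply the coefficientwise expectation and Definition~\ref{def:wickorder}(i),(ii) for (a), and for (b) observe $\frac{\partial}{\partial f_i}\bigl(\exp(-\alpha f):\exp(\alpha f):\bigr)=0$ so that $:\exp(\alpha f):\exp(-\alpha f)$ is a constant-in-$f$ formal series, which (a) pins down. Your extra remark that $\langle\exp(\alpha f)\rangle$ is a unit in $\C[[\alpha]]$ because its constant term is $1$ is a nice bit of bookkeeping the paper leaves implicit.

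For (c), you and the paper both obtain the closed form from the Gaussian moment generating function $\langle\exp(\alpha f)\rangle=\exp(\frac12\sum_{i,j}\alpha_i\alpha_j\langle f_if_j\rangle)$ combined with (b), but the $L^1$-convergence step is handled differently. The paper asserts it ``follows from dominated convergence'' with no further detail; you instead show $\alpha\mapsto\exp(\alpha f)$ is an $L^p$-valued entire function for every $p<\infty$ via the Gaussian moment bound $\langle|f_i|^m\rangle^{1/m}=O(\sqrt m)$ and H\"older across the factors, multiply by the scalar entire function, and identify the $L^p$-convergent Taylor expansion with the formal series by coefficient uniqueness. That is a more explicit and more robust justification: a direct dominated-convergence argument on the partial sums $\sum_{|\underline n|\le N}\alpha^{\underline n}:f^{\underline n}:/\underline n!$ is not immediate, because the Wick polynomials $:f^{\underline n}:$ have coefficients of both signs, so the partial sums are not obviously majorized by a single integrable function without some further estimate. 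Your $L^p$-analyticity route fills that gap cleanly, and as a bonus gives $L^p$-convergence for all $p<\infty$, not just $L^1$. The argument as you state it is sound; the only cosmetic caveat is that, as in the paper, you should be assuming the Gaussian variables are centered for the stated moment-generating-function identity (in the paper's applications this holds, cf.\ Theorem~\ref{thm:mainschr1}(a)).
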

\begin{proof}
(a) This follows from (i) and  (ii) of Definition \ref{def:wickorder}.   (b) By  (iii) of Definition \ref{def:wickorder},
we find $\frac{\partial}{\partial f_j } : \exp(\alpha f ):  =\alpha_j : \exp( \alpha f ) : $. Thus
$\frac{\partial}{\partial f_j }  :\exp(\alpha f) : \exp(-\alpha f) = 0$ and so  $ :\exp(\alpha f) :  \exp(-\alpha f) =C$
for some constant  $C$. Thus from (a) it follows that $C = \langle \exp(\alpha f ) \rangle^{-1}$.  (c) The $L^1$ convergence follows from dominated convergence. Using that $f$
is    Gaussian  one finds   $\langle \exp(\alpha f ) \rangle = \exp(\frac{1}{2} \sum_{i,j}  \alpha_i \alpha_j  \langle f_i f_j  \rangle)$ (e.g. by calculating the Fourier transform for $\alpha = i t$, with $t \in \R^k$,  and then using analytic continuation).   Thus (c) follows from (b).
\end{proof}

The following Lemma is from \cite[Theorem I.3, Corollary I.4]{sim74}.

\begin{lemma} \label{lem:orthogrel}
The following holds.
\begin{enumerate}[(a)]
\item  If  $f$ and $g$ are  Gaussian random variables,  then for $m,n \in \N_0$
\begin{align*}
\langle : f^n :  : g^m : \rangle =  \delta_{{n}, {m} } {n}! \langle f g \rangle^n  .
\end{align*}
\item  \label{lem:orthogrel1} If $f_1,...,f_n$ and $g_1,...,g_m$ are Gaussian random variables and $n \neq m$, then
\begin{align*}
\langle : f_1 \cdots f_n :  : g_1  \cdots g_m : \rangle =  0 .
\end{align*}
\item  \label{lem:orthogrel2}   If $f_1,...,f_k$ are Gaussian random variables with $\langle f_i f_j \rangle = \delta_{i,j}$, then for $\underline{n}, \underline{m} \in \N_0^k$
\begin{align*}
\langle : f^{\underline{n}}  :  :   f^{\underline{m}}  : \rangle =  \delta_{\underline{n}, \underline{m} } \underline{n}!  \ .
\end{align*}
\end{enumerate}
\end{lemma}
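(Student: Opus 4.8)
The plan is to derive all three parts from a single generating-function identity. Let $f=(f_1,\dots,f_n)$ and $g=(g_1,\dots,g_m)$ be families of mean-zero, jointly Gaussian random variables on the underlying probability space, and let $\alpha\in\C^{n}$, $\beta\in\C^{m}$. First I would apply Lemma~\ref{lem:fundrelwick}(c) to $f$ and to $g$ separately, writing $\,{:}\exp(\alpha f){:}\, = \exp(\alpha f)\exp\big(-\tfrac12\sum_{i,i'}\alpha_i\alpha_{i'}\langle f_if_{i'}\rangle\big)$ and similarly for $g$; since $\alpha f+\beta g$ is again a mean-zero Gaussian variable, $\langle\exp(\alpha f+\beta g)\rangle=\exp\big(\tfrac12\langle(\alpha f+\beta g)^2\rangle\big)$, and the two Wick pre-factors cancel exactly the $\langle f_if_{i'}\rangle$- and $\langle g_jg_{j'}\rangle$-terms in that exponent, leaving
\begin{equation} \label{eq:genfunorthog}
\big\langle\, {:}\exp(\alpha f){:}\; {:}\exp(\beta g){:} \,\big\rangle \;=\; \exp\Big(\sum_{i,j}\alpha_i\beta_j\,\langle f_ig_j\rangle\Big).
\end{equation}
By Lemma~\ref{lem:fundrelwick}(b) each Wick exponential equals $\exp(\alpha f)\langle\exp(\alpha f)\rangle^{-1}$, resp.\ $\exp(\beta g)\langle\exp(\beta g)\rangle^{-1}$, so the product on the left of \eqref{eq:genfunorthog} is a fixed $L^1$ function (Gaussian variables have exponential moments of all orders), the series \eqref{eq:defofexp} may be integrated term by term, and comparing coefficients of $\alpha^{\underline n}\beta^{\underline m}$ on the two sides of \eqref{eq:genfunorthog} is legitimate.

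For part (a) I would specialize to $n=m=1$, writing $f_1=f$, $g_1=g$ and using scalar variables $s,t$: the coefficient of $s^{p}t^{q}$ on the left of \eqref{eq:genfunorthog} is $\tfrac{1}{p!\,q!}\langle\,{:}f^{p}{:}\; {:}g^{q}{:}\,\rangle$ by \eqref{eq:defofexp}, while on the right $\exp(st\langle fg\rangle)$ has coefficient $\delta_{p,q}\langle fg\rangle^{p}/p!$, which yields $\langle\,{:}f^{p}{:}\; {:}g^{q}{:}\,\rangle=\delta_{p,q}\,p!\,\langle fg\rangle^{p}$. For part (c) I would take $g_i=f_i$ for all $i$ with the normalization $\langle f_if_j\rangle=\delta_{i,j}$, so that the right-hand side of \eqref{eq:genfunorthog} factorizes as $\prod_{i=1}^{k}\exp(\alpha_i\beta_i)$; reading off the coefficient of $\alpha^{\underline n}\beta^{\underline m}$ then gives $\tfrac{1}{\underline n!\,\underline m!}\langle\,{:}f^{\underline n}{:}\; {:}f^{\underline m}{:}\,\rangle=\delta_{\underline n,\underline m}/\underline n!$, i.e.\ $\langle\,{:}f^{\underline n}{:}\; {:}f^{\underline m}{:}\,\rangle=\delta_{\underline n,\underline m}\,\underline n!$.

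For part (b) I would observe from \eqref{eq:defofexp} that $\,{:}f_1\cdots f_n{:}\,$ is exactly the coefficient of the multilinear monomial $\alpha_1\alpha_2\cdots\alpha_n$ in $\,{:}\exp(\alpha f){:}\,$, and likewise $\,{:}g_1\cdots g_m{:}\,$ is the coefficient of $\beta_1\cdots\beta_m$; hence $\langle\,{:}f_1\cdots f_n{:}\; {:}g_1\cdots g_m{:}\,\rangle$ is the coefficient of $\alpha_1\cdots\alpha_n\,\beta_1\cdots\beta_m$ in the right-hand side of \eqref{eq:genfunorthog}. Expanding that exponential as $\sum_{r\ge0}\tfrac1{r!}\big(\sum_{i,j}\alpha_i\beta_j\langle f_ig_j\rangle\big)^{r}$, every monomial that occurs has the same total degree $r$ in the $\alpha$'s as in the $\beta$'s, so a monomial of $\alpha$-degree $n$ and $\beta$-degree $m$ can appear only if $n=m$; thus for $n\neq m$ the coefficient, and with it the expectation, vanishes. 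The one genuinely delicate point throughout is the legitimacy of expanding and integrating the Wick exponentials term by term and of comparing power-series coefficients; this is precisely where Lemma~\ref{lem:fundrelwick}(c) does the work, since it presents $\,{:}\exp(\alpha f){:}\,$ in the closed form $\exp(\alpha f)\langle\exp(\alpha f)\rangle^{-1}$, which is jointly analytic in $\alpha$ as a map into every $L^{p}(M,\mu)$, so that all the coefficient extractions above hold as identities in $L^{1}$.
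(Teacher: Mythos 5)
Your proposal is correct and follows essentially the paper's argument: both compute the generating function $\langle\, {:}\exp(\alpha f){:}\; {:}\exp(\beta g){:}\,\rangle$ from Lemma~\ref{lem:fundrelwick} and then extract coefficients. The only organizational difference is that you derive the multivariate identity $\exp\big(\sum_{i,j}\alpha_i\beta_j\langle f_ig_j\rangle\big)$ directly and read off all three parts from it, whereas the paper first proves the scalar case (a) and then invokes the multinomial theorem to deduce (b) and (c).
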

\begin{proof} (a).  By  \ref{lem:fundrelwick1}  of Lemma  \ref{lem:fundrelwick}  we find
\begin{align*}
: \exp( \alpha f ) : : \exp( \beta g ) :  & = \exp(\alpha f + \beta g )
 \exp \left( - \frac{1}{2} \left[ \alpha^2 \langle f^2 \rangle + \beta^2  \langle g^2 \rangle \right] \right)  \\
 & = : \exp(\alpha f + \beta g ) :
 \exp \left( \alpha \beta \langle f g \rangle  \right)  .
\end{align*}
Thus by (a)   of Lemma  \ref{lem:fundrelwick}
\begin{align*}
 & \langle : \exp( \alpha f ) : : \exp( \beta g ) : \rangle  =
 \exp \left(   \alpha \beta \langle f g \rangle  \right)  .
\end{align*}
Thus  (a) now  follows  by expanding exponentials  and equating coefficients.  (b,c) follow from the multinomial theorem and (a).
\end{proof}

\begin{lemma} \label{lemcompleteranspace}  Let $\phi$ be a Gaussian random process indexed by the real Hilbert space $\rr$. Let $$\Gamma_n(\rr) =
 \overline{ {\rm lin}_\C  \{:\phi(f_1) \cdots \phi(f_n): \quad  |  \quad f_1,...,f_n \in \rr  \} }^{\rm cl}, \quad n \in \N$$ and $\Gamma_0(\rr) = \C$.
Then  the following holds.
\begin{enumerate}[(a)]
\item $\Gamma_n(\rr) \perp \Gamma_m(\rr)$  for $n \neq m$.
\item
$L^2(M,d\mu) = \bigoplus_{n=0}^\infty \Gamma_n(\rr)  . $
\end{enumerate}
\end{lemma}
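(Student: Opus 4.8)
The plan is to read part (a) off from Lemma~\ref{lem:orthogrel}, and to derive part (b) from the density assumption in Definition~\ref{def:gaussrandproc}(a) combined with the classical fact that polynomials are dense in the $L^2$-space of a finite-dimensional Gaussian measure. For (a), fix $f_1,\dots,f_n,g_1,\dots,g_m\in\rr$ with $n\neq m$. Since $\phi$ is almost everywhere linear, every real linear combination of $\phi(f_1),\dots,\phi(f_n),\phi(g_1),\dots,\phi(g_m)$ is of the form $\phi(v)$ for some $v\in\rr$ and hence Gaussian, so this finite family is jointly Gaussian; Lemma~\ref{lem:orthogrel}(b) then gives $\langle :\phi(f_1)\cdots\phi(f_n):\ :\phi(g_1)\cdots\phi(g_m):\rangle=0$, and by continuity of the inner product this extends to $\Gamma_n(\rr)\perp\Gamma_m(\rr)$. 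Orthogonality of $\Gamma_n(\rr)$ to $\Gamma_0(\rr)=\C$ for $n\ge1$ is the case $m=0$, equivalently property (ii) of Definition~\ref{def:wickorder}.

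For (b), part (a) shows the right-hand side is a genuine orthogonal direct sum inside $L^2(M,d\mu)$, so it suffices to show $\bigoplus_{n\ge0}\Gamma_n(\rr)$ is dense, and by Definition~\ref{def:gaussrandproc}(a) it is enough to approximate a fixed $F(\phi(v_1),\dots,\phi(v_k))$ with $F\in\mathcal{S}(\R^k)$. Replacing $v_1,\dots,v_k$ by an orthonormal basis $e_1,\dots,e_r$ of their linear span and absorbing the resulting injective linear substitution into $F$ — which keeps $F$ in the Schwartz class, since an injective linear map $L$ satisfies $|Lx|\ge c|x|$ and hence turns the rapid decay of $F$ into rapid decay of $F\circ L$ — one may assume the arguments are orthonormal; then, by Definition~\ref{def:gaussrandproc}(c), the vector $(\phi(e_1),\dots,\phi(e_r))$ is centred Gaussian with law $\gamma$ equal to the $r$-fold product of the Gaussian of variance $\tfrac12$. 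Next I would observe that each monomial $\phi(e_1)^{n_1}\cdots\phi(e_r)^{n_r}$ lies in $\bigoplus_{j\le n_1+\cdots+n_r}\Gamma_j(\rr)$: by Definition~\ref{def:wickorder} (equivalently Lemma~\ref{lem:fundrelwick}) it is a finite linear combination of Wick monomials $:\phi(e_1)^{m_1}\cdots\phi(e_r)^{m_r}:$ with $m_1+\cdots+m_r\le n_1+\cdots+n_r$, and each such Wick monomial has the form $:\phi(h_1)\cdots\phi(h_{m_1+\cdots+m_r}):$ for suitable $h_i\in\{e_1,\dots,e_r\}$, hence lies in the corresponding $\Gamma_j(\rr)$. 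Consequently every polynomial in $\phi(e_1),\dots,\phi(e_r)$ belongs to $\bigoplus_{n\ge0}\Gamma_n(\rr)$, and it remains only to show that polynomials are dense in $L^2(\R^r,\gamma)$.

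For this last step I would run the standard argument. Suppose $g\in L^2(\R^r,\gamma)$ is orthogonal to all polynomials. Since $\gamma$ has finite moments of every order, Cauchy--Schwarz shows that $z\mapsto\int g(x)\,e^{iz\cdot x}\,d\gamma(x)$ is well defined for every $z\in\C^r$ and, by differentiation under the integral sign (dominated, again by Cauchy--Schwarz, on compact subsets of $\C^r$), defines an entire function; its Taylor coefficients at the origin are, up to constant factors, the moments $\int g(x)\,x_1^{n_1}\cdots x_r^{n_r}\,d\gamma(x)=0$, so the function vanishes identically. In particular the Fourier transform of the $L^1$-function $g\cdot(d\gamma/dx)$ is zero, forcing $g=0$ almost everywhere; hence polynomials are dense in $L^2(\R^r,\gamma)$, so $\bigoplus_{n}\Gamma_n(\rr)$ is dense in $L^2(M,d\mu)$, which together with (a) proves (b). The only genuinely delicate point is this density of polynomials — justifying the entire extension and the implication ``all moments against $g$ vanish $\Rightarrow g=0$''; everything else is bookkeeping with the definitions of the Wick product and of a Gaussian random process.
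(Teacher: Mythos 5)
Your proof is correct, and for part (a) it coincides with the paper's, which simply cites Lemma~\ref{lem:orthogrel}\,(\ref{lem:orthogrel1}); your remark that joint Gaussianity of $\phi(f_1),\dots,\phi(f_n),\phi(g_1),\dots,\phi(g_m)$ comes for free because real linear combinations are again values of $\phi$ is a useful clarification of what that lemma tacitly needs. For part (b), however, you take a genuinely different route. The paper does not invoke polynomial density at all: it first checks by a direct norm computation (using the orthogonality relations of Lemma~\ref{lem:orthogrel}) that the series $:e^{i\phi(f)}:$ converges in $L^2(M,d\mu)$, concludes via Lemma~\ref{lem:fundrelwick}\,(b) that $e^{i\phi(f)}$ itself lies in $\bigoplus_n\Gamma_n(\rr)$, and then uses the Fourier inversion identity \eqref{eq:defofF} together with $\R$-linearity of $\phi$ to put $F(\phi(f_1),\dots,\phi(f_n))$ into $\bigoplus_n\Gamma_n(\rr)$ as an $L^2$-valued integral. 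You instead reduce to an orthonormal tuple $e_1,\dots,e_r$, observe that monomials in $\phi(e_1),\dots,\phi(e_r)$ are finite linear combinations of Wick monomials and hence lie in $\bigoplus_n\Gamma_n(\rr)$, and prove the classical fact that polynomials are dense in $L^2$ of a finite-dimensional Gaussian by the moment/entire-characteristic-function argument. The paper's proof is shorter because it reuses the Wick-exponential machinery already assembled in Lemma~\ref{lem:fundrelwick} and needs no reduction to an orthonormal basis or any analytic-continuation step; yours is more self-contained and spells out the polynomial density that the exponential trick finesses, at the cost of the (correctly handled) bookkeeping about the injective substitution preserving the Schwartz class and the inversion of the Wick-ordering relation. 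Both are valid proofs.
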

\begin{proof}
(a) This follows from \ref{lem:orthogrel1}  of Lemma~\ref{lem:orthogrel}.  (b) For any $f \in \rr$,
a direct computation shows that the formal power series $: e^{ i \phi(f) }:$ converges in $L^2(M,d\mu)$.
We shall denote the limit by the same symbol. Thus by definition $\bigoplus_{n=0}^\infty \Gamma_n(\rr)$
contains  $: e^{ i \phi(f)}: $ and so $ e^{ i \phi(f)} $  in view of  \ref{lem:fundrelwick1} of
Lemma  \ref{lem:fundrelwick}. In particular, for any $F \in \mathcal{S}(\R^n)$ and $f_1,...,f_n \in \rr$ we find that
\begin{equation}  \label{eq:defofF}
F(\phi(f_1),\cdots,\phi(f_n) ) = (2\pi)^{-n/2} \int \widehat{F}(t) \exp(\sum_{j=1}^n t_j \phi(f_j) ) d^n t
\end{equation}
is in $\bigoplus_{n=0}^\infty \Gamma_n(\rr)$. But  the set of  random variables of the form as  on the
left hand side of   \eqref{eq:defofF}   are dense in $L^2(M,d\mu)$ by  the  assumptions of an indexed  Gaussian random process.
Thus (b) follows.
\end{proof}

\begin{proof}[Proof of Theorem \ref{thm:mainindex}]
First we show uniqueness. To this end  we define for $f \in \mathfrak{r}_\C$  the operator $\phi_\FF(f)$ in $\FF_s(\rr_\C) $    by
\begin{align} \label{defofphiFF} \phi_\FF(f) =\overline{ a^*(f) + a(f)}^{\rm cl} .  \end{align}
We claim that  for any $m \in \N_0$ the set
$$
\{ \phi_\FF(f_1) \cdots \phi_\FF(f_n) \Omega : f_i \in D , n=0,1,... , m\}
$$ 
is dense in $\bigoplus_{n=0}^m {S}_n(  \rr_\C^{\otimes n} )$.
To show this,  we use induction in $m$.  The claim clearly holds for $m=0$.
Suppose it holds for $m$. Then multiplying out,  we find $$\phi_\FF(f_1) \cdots \phi_\FF(f_{m+1})\Omega = a^*(f_1) \cdots a^*(f_{m+1}) \Omega
+ h,$$ where $h \in \bigoplus_{n=0}^m {S}_n(  \rr_\C^{\otimes n} )$. Since  the linear span of  $a^*(f_1) \cdots a^*(f_{m+1}) \Omega $
is dense in  $ {S}_{m+1}(  \rr_\C^{\otimes n} )$ the claim follows for $m+1$.  Since
$$
V \phi_\FF(f_1) \cdots \phi_\FF(f_n) \Omega = ( V  \phi_\FF(f_1) V^{-1} )   \cdots  ( V \phi_\FF(f_n) V^{-1})  V \Omega , 
$$  
properties (i) and (ii) determine the action of $V$ uniquely on a dense set.\\
Let us now show existence. First  choose an o.n.b. $\mathcal{B}$ of $\rr$.
Define  $V$ by $V \Omega =1$ and
\begin{align*}
V a^*(f_1) \cdots a^*(f_n) \Omega = \quad  :\phi(f_1) \cdots \phi(f_n):  ,
\end{align*}
where $f_j \in \mathcal{B}$
(this is well defined by the symmetry property of the Wick product) and extend it  by linearity.
It is straight forward to see that the
map  $V$ is an isometry using on the one hand side the canonical commutation relations for creation and annihilation operators in Fock space
  and on the other hand   Lemma  \ref{lem:orthogrel}. Surjectivity, and hence unitarity,  follows from  Lemma \ref{lemcompleteranspace}. 
Obviously,  $V$ satisfies (i)   by construction. Let us now  show, that it satisfies (a) and hence  (ii).
 Using the definition,    \eqref{defofphiFF},
and the canonical commutation relations we find  for  $f_j \in \mathcal{B}$
\begin{align} \label{eq:phirel1}
& \phi_\FF(f_1) a^*(f_1)^{n_1}  \cdots a^*(f_k)^{n_k}  \Omega\\
& \qquad    = a^*(f_1)^{n_1+1}  \cdots a^*(f_k)^{n_k}  \Omega + n_1 a^*(f_1)^{n_1-1}  \cdots a^*(f_k)^{n_k}  \Omega .  \nonumber
\end{align}
On the other hand  we will show  that
\begin{align} \label{eq:phirel2}
& \phi(f_1)  : \phi(f_1)^{n_1}  \cdots \phi(f_k)^{n_k}  : \\
& \qquad    = \ : \phi(f_1)^{n_1+1}  \cdots \phi(f_k)^{n_k} :    + n_1  : \phi(f_1)^{n_1-1}  \cdots \phi(f_k)^{n_k} : . \nonumber
\end{align}
To see  \eqref{eq:phirel2}, we  first note that using  \ref{lem:fundrelwick1}  of Lemma  \ref{lem:fundrelwick} we obtain
\begin{align} \label{eq:phirel3}
& \phi(f_1)  : \exp(\sum_{j=1}^n \alpha_j \phi(f_j) )  :  \  = \left( \frac{\partial}{\partial \alpha_1} + \alpha_1 \right)  : \exp(\sum_{j=1}^n \alpha_j \phi(f_j) )  :  .
\end{align}
Now expanding   \eqref{eq:phirel3}  in a power series, calculating the derivative, and   equating coefficients,
we obtain   \eqref{eq:phirel2}.   Thus  it follows    in view of \eqref{eq:phirel1},   \eqref{eq:phirel2}, and  from the definition
of $V$  that for all $f \in \mathcal{B} $
\begin{align}
V \overline{( a^*(f) + a^*(f))}^{\rm cl} V^{-1} = \phi(f) .
\end{align}
This implies (a) (and hence (ii))  by linearity and continuity.
Clearly, (c) follows from uniqueness of   the above construction and multi-linearity.
To show  (b) observe that from  \eqref{constrofD} we find for any $f_j \in \rr$ that
 \begin{align*}
&\mathcal{J}  V  \Gamma(J)     a^*(f_1) \cdots a^*(f_n)  \Omega  = \mathcal{J}   V     a^*( J f_1) \cdots a^*( J f_n) ) \Omega  \\ & = \mathcal{J}  V     a^*(  f_1) \cdots a^*(  f_n) ) \Omega  = \mathcal{J}  :\phi(f_1) \cdots \phi(f_n): \\
& = \  \ :\phi(f_1) \cdots \phi(f_n): \  \ = V   a^*(f_1) \cdots a^*(f_n)  \Omega.
\end{align*}
Thus by density and $\C$-linearity it follows that $\mathcal{J} V \Gamma(J) = V$. Thus (b) follows, since $J^{-1} = J$.
\end{proof}

\section{An Application of Minlos' theorem}
\label{app:minlos}

In this appendix we will prove Theorems \ref{thm:mainschr1}
and  \ref{thm:unitransformexp}.
For this we shall introduce  the following definitions from \cite{sim74}.
Let us first recall the definition
$$
c(f) = e^{-\frac{1}{4} B(f,f)}
$$
for $f \in \underline{\mathcal{S}}$ with $B$ defined in  \eqref{defofB}.

\begin{lemma} \label{lem:positivity}  The following holds.
\begin{itemize}
\item[(i)]  $c(0)=1$.
\item[(ii)] $f  \mapsto  c(f)$ is continuous.
\item[(iii)] For any $f_1,....,f_n \in   \underline{\mathcal{S}} $ and $z_1,...,z_n \in \C$ we have
$$
\sum_{i,j=1}^n z_i \overline{z}_j  c(f_i-f_j)  \geq 0  \quad .
$$
\end{itemize}
\end{lemma}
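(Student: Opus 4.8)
The plan is to check the three properties in turn; (i) and (ii) are short, and the content lies in (iii).

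Property (i) is immediate, since $B(0,0)=0$ gives $c(0)=e^{0}=1$.

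For (ii) I would first note that $v\mapsto B(v,v)$ is a continuous quadratic form on $\underline{\mathcal{S}}$ with the product of the Schwartz topologies. Since $P(k)$ is an orthogonal projection in $\C^3$ for every $k\ne0$, one has $0\le\sum_{a,b}\overline{\hat v_a(k)}P(k)_{a,b}\hat v_b(k)\le|\hat v(k)|^2$, hence
\[
0\le B(v,v)\le\int_{\R^3}\frac{|\hat v(k)|^2}{|k|}\,d^3k .
\]
Splitting this integral at $|k|=1$ and using that $|k|^{-1}$ is locally integrable in three dimensions ($\int_{|k|\le1}|k|^{-1}\,d^3k=2\pi$), one bounds the right-hand side by $2\pi\,\sup_k|\hat v(k)|^2+\|\hat v\|_{L^2}^2$, and both terms are controlled by finitely many of the seminorms \eqref{eq:seminormschwartz} of $v$ (via $\|\hat v\|_\infty\le(2\pi)^{-3/2}\|v\|_{L^1}$, $\|\hat v\|_{L^2}=\|v\|_{L^2}$, and the standard estimates of $\|\cdot\|_{L^1}$, $\|\cdot\|_{L^2}$ by Schwartz seminorms). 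Continuity of $B$ then follows from the Cauchy--Schwarz inequality $|B(v,w)|\le\sqrt{B(v,v)}\sqrt{B(w,w)}$ for the positive semi-definite symmetric form $B$: if $v_n\to v$ in $\underline{\mathcal{S}}$, then $B(v_n,v_n)-B(v,v)=B(v_n-v,v_n)+B(v,v_n-v)$, and both terms tend to $0$ because $B(v_n-v,v_n-v)\to0$ by the estimate above while $B(v_n,v_n)$ stays bounded. Composing with the continuous map $t\mapsto e^{-t/4}$ yields continuity of $c$.

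Property (iii) is the main point. Using that $B$ is bilinear and symmetric,
\[
B(f_i-f_j,f_i-f_j)=B(f_i,f_i)-2B(f_i,f_j)+B(f_j,f_j),
\]
so $c(f_i-f_j)=e^{-\frac14 B(f_i,f_i)}\,e^{-\frac14 B(f_j,f_j)}\,e^{\frac12 B(f_i,f_j)}$. Setting $w_i:=z_i\,e^{-\frac14 B(f_i,f_i)}$, the sum in (iii) equals $\sum_{i,j}w_i\overline{w_j}\,M_{ij}$ with $M_{ij}:=e^{\frac12 B(f_i,f_j)}$, so it suffices to show that the real symmetric matrix $M$ is positive semi-definite. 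The matrix $G=(B(f_i,f_j))_{i,j}$ is a Gram matrix of the positive semi-definite form $B$ and is therefore positive semi-definite, as is $\tfrac12 G$. Expanding $M=\exp(\tfrac12 G)$ as the series $\sum_{n\ge0}\frac1{n!}\bigl(\tfrac12 G\bigr)^{\circ n}$ in entrywise (Hadamard) powers, each term is positive semi-definite by the Schur product theorem, hence so is every partial sum, and positive semi-definiteness is preserved in the limit. Thus $M\succeq0$, which gives (iii).

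The only genuine obstacle is (iii): the trick is to absorb the diagonal factors $e^{-\frac14 B(f_i,f_i)}$ into the coefficients $z_i$, reducing the claim to the positivity of the entrywise exponential of a Gram matrix, which is exactly the Schur product theorem. Part (ii) is routine once one notes the local integrability of $|k|^{-1}$ on $\R^3$ together with the bound $0\le P(k)\le\one_{3\times3}$.
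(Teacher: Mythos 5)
Your proof is correct, and part (iii) takes a genuinely different route from the paper's. The paper restricts $B$ to the finite-dimensional subspace $V={\rm lin}_\R\{f_1,\ldots,f_n\}$, diagonalizes the positive semi-definite form there, writes $c|_V$ explicitly as the Fourier transform of a Gaussian measure on $\R^p$, and then invokes Bochner's theorem (the easy direction: the Fourier transform of a finite positive measure is a positive-definite function). You instead absorb the diagonal factors $e^{-\frac14 B(f_i,f_i)}$ into the coefficients and reduce the claim to showing that the entrywise exponential $M_{ij}=e^{\frac12 B(f_i,f_j)}$ of the (positive semi-definite) Gram matrix is positive semi-definite, which follows from the Schur product theorem applied term-by-term to the Hadamard power series $\sum_n\frac{1}{n!}\bigl(\tfrac12 G\bigr)^{\circ n}$, with the $n=0$ term being the rank-one all-ones matrix. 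Both are classical devices for establishing positive definiteness of Gaussian-type kernels. The paper's route fits the surrounding context well, since Minlos' theorem (used immediately afterward) is itself a Bochner-type statement and the Gaussian/Fourier picture is anyway in play; your route is more algebraic and self-contained in that it sidesteps Fourier analysis and measure theory entirely, relying only on the Schur product theorem, which has a short elementary proof. For (ii), the paper simply asserts continuity as ``straightforward''; your explicit bound $B(v,v)\le 2\pi\|\hat v\|_\infty^2+\|\hat v\|_{L^2}^2$ together with the Cauchy--Schwarz polarization argument fills in the details correctly. Part (i) is the same in both.
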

\begin{proof} (i) This follows from $B(0,0) = 0$.
(ii)   It is straight forward to see that $f \mapsto B(f,f)$ is continuous on $    \underline{\mathcal{S}} $,
and hence also   the function $c : f  \mapsto  \exp(-\frac{1}{4} B(f,f))$. (iii)
Let $V = {\rm lin}_\R \{ f_1,...,f_n\}$. Then there exists a basis $(e_j)_{j=1,...,m}$
of $V$, with dual basis $(b_j)_{j=1}^m$, such that
$B(e_i,e_j) = \lambda_i \delta_{i,j}$ with $\lambda_1=\cdots = \lambda_p=1$ and $\lambda_{p+1}=\cdots \lambda_m=0$ for some $1 \leq p \leq m$.  Using that the Fourier transform
of a Gaussian is a Gaussian we find  for any $f \in V$  with $ f_j = b_j(f)$
$$
c(f)  = e^{-\frac{1}{4} B(f,f)} = e^{-\frac{1}{4} \sum_{j=1}^p f_j^2 }
= (\pi)^{-p/2}\int e^{ - i \sum_{j=}^p  y_j b_j(f)  } e^{- \sum_{j=1}^p y_j^2} d^py .
$$
So positivity of $c(f)$ now follows from Bochner's theorem \cite[Theorem IX.9]{ReeSim2}.
\end{proof}

\begin{proof}[Proof of Theorem  \ref{thm:mainschr1}]  The existence and uniqueness of the measure $\nu$   follows
in view of Lemma \ref{lem:positivity}
from  Minlos  theorem \cite[Theorem 3.4.2]{GliJaf87} see also \cite{Min63,Ume65,Bou69,Bog07,BogSmo17}.  To this end, we  extend the seminorms \eqref{eq:seminormschwartz} to $  \underline{\mathcal{S}}  $ as follows.  For $f = (f_1,f_2,f_3) \in  \underline{\mathcal{S}}$ we define $\| f \|_{\alpha,\beta} := \| f_1 \|_{\alpha,\beta}  + \| f_2 \|_{\alpha,\beta} + \| f_3 \|_{\alpha,\beta}$.
Then it  is straight forward to see that $ \underline{\mathcal{S}}$  with these seminorms
is a nuclear space.  \\
 \noindent (a) This follows since for $f \in \underline{\mathcal{S}}$ and each $t \in \R$ we have by  \eqref{eq:fourierminlos}
$$
\int \exp( i t \varphi(f)) d\nu = \exp(-\frac{1}{4} t^2 B(f,f) ) ,
$$
and so $\varphi(f)$  is a Gaussian random variable with mean zero, see \cite{sim74}.
\noindent
(b) This follows from (a) and linearity  \eqref{eq:lin}, see \cite{sim74}.
\noindent
(c)  We argue similarly as in  \cite{Ume65}.
 First observe that for all  measurable sets $E$ we have
\begin{equation} \label{densofset}
\forall \epsilon > 0 , \exists C  \text{ a cylinder set} , \quad  \nu(C \triangle E) < \epsilon .
\end{equation}
 Here, $\triangle$ stands for the symmetric
difference.  To this end,  let $\mathcal{E}$ be the set of all measurable  $E$  which
satisfy  \eqref{densofset}.  It is straight forward to verify that  $\mathcal{E}$ is a $\sigma$-algebra containing
all cylinder sets. Hence $\mathcal{E}$ equals the set of all measurable sets.  It follows by definition of the integral  that $ \{ 1_\Omega(\varphi(f_1),...,\varphi(f_n) ) : \Omega \subset \R^n \text{ Borel measurable}, f_1,...,f_n \in  \underline{\mathcal{S}} \}$ is dense in $L^2( \underline{\mathcal{S}}',d\nu)$. Now  it is well known that  $\mathcal{S}(\R^n ; \C)$
is dense in $L^1(\R^n , d\mu_C)$, where $\mu_C$ denotes Gaussian measure with covariance $C$ (with  possibly matrix elements  which are infinite).  This shows the density.
\noindent
(d)  If $f \in  \underline{\mathcal{S}}$ with $P \widehat{f} =0$, then $B(f,f) = 0$, so $\varphi(f)$ is by (a) a Gaussian random variable with variance zero.
Thus for all $f \in  \underline{\mathcal{S}}$ with $P \widehat{ f} = 0$ it follows that $\varphi(f) = 0$ almost everywhere. Now let  $h \in \mathcal{S}(\R^3;\R)$.
Then  $\nabla h \in \underline{\mathcal{S}}$ and for all $T \in  \underline{\mathcal{S}}'$ we have
\begin{align*}
 \varphi(\nabla h )(T) = 0 \Leftrightarrow  T( \nabla  h ) = 0  \Leftrightarrow
(\nabla \cdot  T)(h) = 0  .
\end{align*}
Since  $P \widehat{ \nabla h} = 0$,   we find   $(\nabla \cdot T )(h) = 0$ for  almost  all $T \in  \underline{\mathcal{S}}'$.
Since $\mathcal{S}(\R^3;\R)$ is separable, there exists a countable dense subset $\mathcal{Q}$.
It follows that for almost all $T \in \underline{\mathcal{S}}'$ we have $(\nabla \cdot  T)(h)=0$   for all $h \in \mathcal{Q}$.
Since  $\nabla \cdot T$ is continuous it follows that for almost all  $T \in  \underline{\mathcal{S}}'$ we have $(\nabla \cdot T )(h) = 0$
for all $h \in \mathcal{S}(\R^3;\R)$. This shows the claim.
\end{proof}

 As an immediate consequence of Theorem \ref{thm:mainschr1} we obtain the following lemma, which we shall use for the proof of  Theorem   \ref{thm:unitransformexp}.

\begin{lemma}  \label{lem:grp22} 
 Let  $\hh_B$ denote  the  real Hilbert space obtained by the completion of the inner product space  $( \underline{\mathcal{S}}_0 , B(\cdot, \cdot) )  $ 
with the imbedding $\iota :  \underline{\mathcal{S}}_0 \to \hh_B$ having dense range. Let  $v \in \hh_B$, and let $(v_n)_{n \in \N}$ be a    Cauchy sequence in $ \underline{\mathcal{S}}_0$ such that $\iota(v_n) \to v$.
Then  the following limit exists in  $L^2( \underline{\mathcal{S}}',d\nu)$
$$
\varphi(v) := \lim_{n \to \infty} \varphi(v_n) ,
$$
is independent of the Cauchy sequence. Furthermore,
 $\varphi(v)$ is a Gaussian random process indexed by $\hh_B$
with  $(\underline{\mathcal{S}}' ,\nu)$   the probability  measure space of the random process.
\end{lemma}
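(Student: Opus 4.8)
The plan is to observe that, by part~(a) of Theorem~\ref{thm:mainschr1}, every $\varphi(f)$ with $f\in\underline{\mathcal{S}}_0$ has mean zero and variance $\tfrac12 B(f,f)$, so that $\|\varphi(f)\|_{L^2(\nu)}^2=\tfrac12 B(f,f)$; together with the linearity \eqref{eq:lin} this says that $\sqrt2\,\varphi$ is a linear isometry from $(\underline{\mathcal{S}}_0,B(\cdot,\cdot))$ into $L^2(\underline{\mathcal{S}}',d\nu)$. First I would use this to settle the first two assertions. If $(v_n)$ is Cauchy in $\underline{\mathcal{S}}_0$ with $\iota(v_n)\to v$, then $\|\varphi(v_n)-\varphi(v_m)\|_{L^2}^2=\tfrac12 B(v_n-v_m,v_n-v_m)\to0$, so $(\varphi(v_n))$ converges in $L^2(\underline{\mathcal{S}}',d\nu)$; and if $(w_n)$ is a second such sequence then $\|\varphi(v_n)-\varphi(w_n)\|_{L^2}^2=\tfrac12 B(v_n-w_n,v_n-w_n)\to0$, so the limit is independent of the approximating sequence. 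Equivalently, $v\mapsto\varphi(v)$ is the unique bounded extension of the isometry above. A.e.\ additivity and $\R$-homogeneity of $v\mapsto\varphi(v)$ on $\hh_B$ are then inherited from \eqref{eq:lin} by passing to $L^2$-limits, so $\varphi$ is a random process indexed by $\hh_B$.

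Next I would verify conditions (b) and (c) of Definition~\ref{def:gaussrandproc}. For (b), each $\varphi(v)$ is an $L^2$-limit of the Gaussians $\varphi(v_n)$, which have vanishing mean and variance $\tfrac12 B(v_n,v_n)\to\tfrac12\|v\|_{\hh_B}^2$; since $L^2$-convergence forces convergence of characteristic functions, $\int e^{it\varphi(v)}\,d\nu=\lim_n e^{-\frac{t^2}{4}B(v_n,v_n)}=e^{-\frac{t^2}{4}\|v\|_{\hh_B}^2}$, so $\varphi(v)$ is Gaussian with mean zero and variance $\tfrac12\|v\|_{\hh_B}^2$. For (c), polarizing the variance identity via \eqref{eq:lin} gives $\langle\varphi(f)\varphi(g)\rangle=\tfrac12 B(f,g)$ for $f,g\in\underline{\mathcal{S}}_0$, and then continuity of the $L^2$ inner product and of $B$ on $\hh_B$ yields $\langle\varphi(v)\varphi(w)\rangle=\lim_n\tfrac12 B(v_n,w_n)=\tfrac12\langle v,w\rangle_{\hh_B}$.

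The only step requiring genuine work is condition (a), namely that $\{F(\varphi(v_1),\dots,\varphi(v_n)):v_i\in\hh_B,\,F\in\mathcal{S}(\R^n)\}$ is dense in $L^2(\underline{\mathcal{S}}',d\nu)$. My plan is to start from Theorem~\ref{thm:mainschr1}(c), which gives density of the corresponding set with the $v_i$ replaced by arbitrary $f_i\in\underline{\mathcal{S}}$, and to show that every $\varphi(f)$ with $f\in\underline{\mathcal{S}}$ lies in the $L^2$-closure of $\{\varphi(v):v\in\hh_B\}$. Concretely, given $f\in\underline{\mathcal{S}}$, choose radial cut-offs $\psi_n$ with $0\le\psi_n\le1$, $\psi_n\equiv1$ on $\{1/n\le|k|\le n\}$ and $\supp\psi_n\subset\{1/(2n)\le|k|\le 2n\}$, and set $g_n:=\big(\psi_n(|\cdot|)\,P(\cdot)\widehat{f}\big)^{\vee}$. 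Then $g_n\in\underline{\mathcal{S}}_0$: it is Schwartz because $\psi_n(|\cdot|)P(\cdot)\widehat{f}$ is smooth with compact support (the cut-off removes the singularity of $P$ at the origin), it is real-valued because $P(-k)=P(k)$ and $\widehat{f}(-k)=\overline{\widehat{f}(k)}$, and it is divergence free because $k\cdot P(k)=0$. Since $P\widehat{g_n}=\widehat{g_n}$, one has $\|\varphi(f)-\varphi(g_n)\|_{L^2}^2=\tfrac12 B(f-g_n,f-g_n)=\tfrac12\int|k|^{-1}(1-\psi_n(|k|))^2|P(k)\widehat{f}(k)|^2\,dk\to0$ by dominated convergence, because $|k|^{-1}|P(k)\widehat{f}(k)|^2\le|k|^{-1}|\widehat{f}(k)|^2$ is integrable on $\R^3$. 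Finally, since every $F\in\mathcal{S}(\R^n)$ is bounded and Lipschitz, $\|F(\varphi(g^{(m)}_1),\dots,\varphi(g^{(m)}_n))-F(\varphi(f_1),\dots,\varphi(f_n))\|_{L^2}$ is controlled by a constant times $\sum_i\|\varphi(g^{(m)}_i)-\varphi(f_i)\|_{L^2}$, so the density statement transfers from $\underline{\mathcal{S}}$ to $\hh_B$, and collecting the pieces proves the lemma. I expect this last approximation — replacing the transverse part $P\widehat{f}$ of a general Schwartz vector field by genuine divergence-free Schwartz fields in the $B$-metric — to be the main obstacle; everything else is a routine limiting argument built on the isometry $\|\varphi(f)\|_{L^2}^2=\tfrac12 B(f,f)$.
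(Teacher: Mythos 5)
Your argument is correct and follows the same route as the paper's (very terse) proof: the isometry $\|\varphi(f)\|_{L^2}^2=\tfrac12 B(f,f)$ gives existence of the $L^2$-limit and its independence of the approximating sequence, and conditions (b) and (c) of Definition~\ref{def:gaussrandproc} are obtained by passing to $L^2$-limits of Gaussians, exactly as the paper indicates. The paper dismisses the remaining density condition (a) as "straightforward"; your cutoff construction $g_n=\bigl(\psi_n(|\cdot|)\,P(\cdot)\widehat f\,\bigr)^\vee\in\underline{\mathcal{S}}_0$, showing that each $\varphi(f)$ with $f\in\underline{\mathcal{S}}$ coincides in $L^2$ (hence a.e.) with $\varphi(v)$ for some $v\in\hh_B$ via dominated convergence of $\tfrac12\int|k|^{-1}(1-\psi_n)^2|P\widehat f|^2\,dk$, is precisely the step needed to transfer Theorem~\ref{thm:mainschr1}(c) from the index set $\underline{\mathcal{S}}$ to $\hh_B$, and it is correct.
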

\begin{proof}
First observe that  $( \underline{\mathcal{S}}_0,B(\cdot,\cdot))$ is indeed an inner product space, since $\nabla \cdot  f = 0$ implies $P\widehat{f} = \widehat{f}$.
Clearly, $\varphi(v_n)$ is a Cauchy sequence  in $L^2( \underline{\mathcal{S}}',d\nu)$, since   $\int  |\varphi(v_n)-\varphi(v_m)|^2 d\nu = \frac{1}{2}B(v_n-v_m,v_n-v_m)$  by  Theorem \ref{thm:mainschr1} (a),
and hence converges to a unique limit. With regard  to Definition \ref{def:gaussrandproc}    the statement of the last sentence  is  straight forward to show using    Theorem \ref{thm:mainschr1} and the fact that limits of Gaussians are Gaussian.
\end{proof}

\begin{proof}[Proof of Theorem \ref{thm:unitransformexp}.]
The map  $i_\omega : ( \underline{\mathcal{S}}_0, B(\cdot,\cdot))  \to \{ v \in \vv : {\rm Im} v = 0 \}$ is an isometry of real inner product  spaces, which follows
directly from the definitions.
Furthermore, $i_\omega$  has dense range.
To see this,  observe that for any real  $v \in \vv$ there exists by well known construction  a real $v_n \in \underline{\mathcal{S}}$ such that $v_n \to v$ in the $L^2(\R^3;\C^3)$ norm.
Now  define $w_n =   ( \omega^{1/2}   (1-\chi_n)  P \hat{v}_n )^\vee$ for $\chi \in C_c^\infty(\R^3;[0,1])$ with $\chi = 1$ on $B_{1/2}(0)$
and $\chi =0$ outside of $B_1(0)$,
and $\chi_n(x) = \chi(nx)$.  Then it is straight forward to see that  $w_n \in \underline{\mathcal{S}}_0$ and (by unitarity of the Fourier transform and dominated
convergence)
\begin{align*}
& \| i_\omega w_n - v \| =  \| ( (1-\chi_n)  P \hat{v}_n)^\vee - v \| =  \| ( (1-\chi_n)  P \hat{v}_n - \hat{v} \| =  \| ( (1-\chi_n)  P \hat{v}_n -  P \hat{v} \|  \\
& \leq  \| \chi_n  P \hat{v} \| +  \| (1-\chi_n)  P ( \hat{v}_n  -  \hat{v} ) \|  \leq   \| \chi_n  P \hat{v} \| + \|  \hat{v}_n  -  \hat{v}  \| \to 0 ,
 \end{align*}
as $n \to \infty$, by construction. This  shows that $i_\omega$ has dense range.
So the map  $i_\omega$ extends to $\hh_B$  the closure of
$( \underline{\mathcal{S}}_0 , B(\cdot, \cdot) )  $
  and  yields a bijective isometry  $\hh_B \to \{ v \in \vv : {\rm Im} v = 0 \}$.
It follows using   Lemma   \ref{lem:grp22} that $\varphi \circ i_\omega^{-1}$ is a  Gaussian random
process indexed by $\{ v \in \vv : {\rm Im} v = 0 \}$ with probability measure space
$(\underline{\mathcal{S}}',\nu)$.  Thus it follows from Theorem \ref{thm:mainindex}
that there exists a unique unitary transformation $V_\vv : \FF_s(\vv)   \to L^2( \underline{\mathcal{S}}',d\nu)$
with
$V_\vv \Omega = 1$ and $V_\vv (\overline{  a^*(h) + a(h)}) V_\vv^{-1} = \varphi(h)$
 for all $h \in  i_\omega \underline{\mathcal{S}}_0$ (since $\{ v \in \vv : {\rm Im} v = 0 \}_\C = \vv$ and
$i_\omega \underline{\mathcal{S}}_0$ is dense in $\{ v \in \vv : {\rm Im} v = 0 \}$).
This shows the first part of the theorem. 
The last statement of the theorem now  follows form
part (b) of Theorem \ref{thm:mainindex}.
\end{proof}


\providecommand{\bysame}{\leavevmode\hbox to3em{\hrulefill}\thinspace}
\providecommand{\MR}{\relax\ifhmode\unskip\space\fi MR }
\providecommand{\MRhref}[2]{%
  \href{http://www.ams.org/mathscinet-getitem?mr=#1}{#2}
}
\providecommand{\href}[2]{#2}

\end{document}